\def\f12{\frac 1 2}
\def\a{\alpha}
\def\b{\beta}
\def\de{\delta}
\def\la{\lambda}
\def\La{\Lambda}
\def\pa{\partial}
\def\nab{\nabla}
\def\pa{\partial}
\def\f12{\frac 1 2}
\newcommand{\nabb}{\mbox{$\nabla \mkern-13mu /$\,}}
\newcommand\lessflat{{{\mbox{$\flat  \mkern-12mu {}^{\_}$}}}}
\newtheorem{remark}{Remark}[section]
\newtheorem{lemma}{Lemma}[subsection]
\newtheorem{theorem}{Theorem}
\newtheorem*{theorem*}{Theorem}
\newtheorem{proposition}{Proposition}[subsection]
\newtheorem{corollary}{Corollary}
\title{The black hole stability problem for linear scalar perturbations} 
\author{Mihalis Dafermos}
\address{University of Cambridge,
Department of Pure Mathematics and Mathematical Statistics,
Wilberforce Road, Cambridge CB3 0WB United Kingdom}
\author{Igor Rodnianski}
\address{Princeton University,
Department of Mathematics, Fine Hall, Washington Road,
Princeton, NJ 08544 United States} 
\date\today
\begin{document}
\maketitle

\begin{abstract}
We review our recent work on linear stability for scalar perturbations of
Kerr spacetimes, that is to say, boundedness and decay properties for
solutions of the scalar wave equation $\Box_g\psi=0$
on Kerr 
exterior backgrounds $(\mathcal{M},g_{a,M}$). We begin with
the very slowly rotating case $|a|\ll M$,  where first
boundedness and then decay has been shown in rapid developments
over the last
two years, following earlier progress in the Schwarzschild case $a=0$.
We then turn to the  general
subextremal range $|a|<M$, where we give here for the first time
the essential elements of a proof of definitive decay bounds
for solutions $\psi$. These developments give hope that the problem
of the non-linear stability of the Kerr family of black holes might soon be addressed.
This paper accompanies a talk by one of the authors (I.R.) at
the  12th Marcel Grossmann Meeting, Paris, June 2009.
\end{abstract}

\section{Introduction}
The stability problem for 
the 
Kerr family $(\mathcal{M},g_{a,M})$ of black hole  solutions to
the Einstein vacuum equations
\begin{equation}\label{eq:einst}
R_{\a\b}(g) =0,
\end{equation}
the system of nonlinear hyperbolic equations governing general relativity,
is one 
of the most important unresolved issues in the theory.

The ultimate goal
would be to understand nothing less than the dynamical stability of Kerr 
as a family of solutions
to the Cauchy problem for $(\ref{eq:einst})$, in the spirit of the monumental monograph of 
Christodoulou--Klainerman~\cite{ck} (see also~\cite{lr}) proving the
non-linear stability of Minkowski space.  
For the time being, however, essentially all work in the black hole case has been
confined to the linearised setting.
The simplest such linear problem
is that of \emph{scalar} perturbations, i.e.~the study of
solutions $\psi$ of
\begin{equation}\label{eq:box}
\Box_g\psi=0,
\end{equation}
on a fixed Kerr background $(\mathcal{M},g=g_{a,M})$. 
Equation $(\ref{eq:box})$ can be thought
of as a poor man's substitute for the more complicated problem of 
\emph{gravitational} perturbations,
obtained by linearising \eqref{eq:einst} 
around the Kerr family.

The present paper will
 review the problem of scalar perturbations~$(\ref{eq:box})$, in particular,
the rapid recent progress concerning the Schwarzschild $a=0$ and
very slowly rotating $|a|\ll M$ cases. We shall then provide,
for the first time here, 
the essential elements of a proof of linear stability for 
scalar perturbations in the 
general subextremal case $|a|<M$, that is to say, boundedness and 
decay statements (see Theorem~\ref{thrm:state} and its corollaries) 
for general solutions $\psi$ of $(\ref{eq:box})$ with $g=g_{a,M}$
arising from arbitrary finite-energy initial data prescribed on a Cauchy surface.

In the present introductory section, we shall attempt a self-contained
overview, which will be fleshed out in Sections~\ref{KST}--\ref{revisit}.
Sections \ref{ayrilik}--\ref{sec:large}  will then give technical details of the proof
of Theorem~\ref{thrm:state}  in the
general $|a|<M$ case.

As there is much confusion concerning the state of affairs in this subject, we begin
with a brief discussion of the classical literature.

\subsection{The classical mode analysis}
\label{classical}
In the classical analysis, one does not study general solutions of $(\ref{eq:box})$
but rather individual \emph{modes}.
In the Schwarzschild case $a=0$, study of this problem was initiated in~\cite{regge}. 
A mode is a solution of $(\ref{eq:box})$ of the form
\begin{equation}
\label{amode}
\psi^{\omega}_{m\ell} (r) Y_{m\ell} (\theta,\phi) e^{i\omega t}
\end{equation}
where $Y_{m\ell}$ denote the standard spherical harmonics, and $(t,r,\theta,\phi)$
denote standard Schwarzschild coordinates.
In view of the stationarity and spherical symmetry of Schwarzschild, from $(\ref{eq:box})$
and the ansatz $(\ref{amode})$ one derives an ODE for 
$\psi^{\omega}_{m\ell} (r)$ in $r$,
which can be rewritten in the form
\[
u''+\omega^2u=Vu
\]
where $V$ is a nonnegative potential function, depending on $m$ and $\ell$.
(Remarkably, this is also the case for gravitational perturbations~\cite{vish}.)
It is immediate that there are no solutions of the form $(\ref{amode})$
with finite energy at $t=0$ and ${\rm Im} (\omega)>0$. This is the statement
of `mode stability' of Schwarzschild. 

The literature on mode solutions $(\ref{amode})$ for Schwarzschild is now vast and
addresses questions such as 
understanding transmission and reflection coefficients (relevant
for black hole scattering) and  locating the complex frequencies $\omega$
of so-called \emph{quasinormal modes}
(solutions of type $(\ref{amode})$ with ${\rm Im}(\omega)<0$ and finite
energy on `hyperboloidal' slices). These 
have been explored numerically and by methods of asymptotic analysis.
See~\cite{chandra}.

The existence of mode-type solutions for $(\ref{eq:box})$ on Schwarzschild--the 
essential starting
point for this classical analysis--follows
from the dimensionality of its Lie algebra of symmetries generated by
the stationary Killing field and the rotations.
Remarkably, as discovered by Carter, although the Kerr metric for $a\ne 0$ only has
a $2$-dimensional algebra of Killing fields, the wave equation $(\ref{eq:box})$
on Kerr still admits a non-trivial complete separation which
allows one again to define mode-type solutions  of the form
\begin{equation}
\label{modetype}
\psi^{\omega}_{m\ell} (r) S_{m\ell}(a\omega, \phi) e^{im\phi} e^{-i\omega t}
\end{equation}
so that $\psi^{\omega}_{m\ell} (r)$ again satisfies an ODE that can be rewritten as
\[
u''+\omega^2u-V(a\omega)u=0,
\]
where the potential $V$ now depends, in addition to the labels $(m, \ell)$, on the
frequency $\omega$. Here, $(t,r,\theta,\phi)$ denote 
Boyer-Lindquist coordinates.
The $S_{m\ell}$ are not explict but can be characterized 
as the eigenvectors of an associated operator depending on $a\omega$, which reduces to the spherical Laplacian if $a\omega=0$
(see Section~\ref{ayrilik}).
The above separation is related to the complete integrability of geodesic flow,
discovered previously by Carter, and in fact has its origin in 
the existence of an additional ``hidden'' symmetry of the Kerr metric~\cite{walker}. 
This separation
property was extended to gravitational perturbations by Teukolsky~\cite{teuk}.

For the mode analysis of scalar perturbations, 
the main new feature which emerges
in passing from Schwarzschild to Kerr is the problem of \emph{superradiance},
first discussed by Zeldovich~\cite{zeld}.
The geometric origin of this phenomenon is simply that  
the `stationary' vector field $\partial_t$ fails to be timelike (see Section~\ref{ergosec}), 
and thus the energy flux
associated to this vector field is no longer nonnegative (see Section~\ref{supersec}).
At the level of mode analysis for $\omega\in\mathbb R$,
this difficulty reflects itself only in the range
\begin{equation}
\label{suprange}
0\le m\omega/a    <  \frac{m^2}{2M(M+\sqrt{M^2-a^2})}.
\end{equation}
It is precisely in the frequency range $(\ref{suprange})$ that the
sign of the energy flux
 through the event
horizon of a solution of the form $(\ref{modetype})$ is negative. The transmission and reflection coefficients
of superradiant modes 
were probed by Starobinsky~\cite{star}, who derived bounds
in various asymptotic regimes. A priori, superradiance might allow
for the existence of initially finite-energy solutions of type $(\ref{modetype})$ with 
${\rm Im}(\omega)>0$, and thus, growing exponentially in time. 
These are excluded (for both the scalar and gravitational case) in a
remarkable paper of Whiting~\cite{whiting}, who thus shows
`mode stability' of the Kerr family. Whiting's result can be viewed as the culmination
of the  classical mode analysis of this problem.

\subsection{Linear stability?}
\label{LS?}
Despite the successes just outlined, the above classical mode analysis was not able
to resolve what is perhaps the most basic question of linear theory, 
namely that of \emph{linear stability}, 
even in the case of scalar perturbations $(\ref{eq:box})$.  
Indeed, the statement of `mode stability' is still completely consistent with the statement
that general linear perturbations $\psi$ with finite initial energy
grow unboundedly in time
\begin{equation}
\label{nothap}
\limsup_{t\to\infty} \psi(t,r,\theta,\phi) =\infty.
\end{equation}

The reason for the failure of mode analysis to exclude $(\ref{nothap})$
can be summarised by the following:
\begin{itemize}
\item
It is not a priori clear that solutions of $(\ref{eq:box})$ with finite energy 
can be represented
as a superposition of modes with $\omega\in \mathbb R$.
\item
Even if solutions can be so represented, 
statements at the level of individual modes typically do not
imply statements for the superposition of infinitely many modes.
\end{itemize}

Indeed, in trying to `sum' information at the level of individual modes, one 
has to face certain fundamental
physical obstructions
which are in fact common to all 
wave equations in more than one spatial dimension and 
necessarily (see~\cite{Littman}) 
lead one to consider \emph{energy-type quantities}. It is only such quantities that will identify
what it is that bounds $\psi$ and excludes $(\ref{nothap})$.
In the black hole case,
these obstructions force us to revisit 
to what extent we actually understand superradiance,
the red-shift effect and the role of solutions concentrated near
trapped null geodesics, two issues which we shall discuss at length later in this paper. 
As we shall see, it is  precisely the interaction of these difficulties which gives the present 
true linear stability problem its characteristic
flavour.

\subsection{Schwarzschild}
A prototype for a true linear stability result  is provided
by the celebrated theorems of Wald~\cite{drimos} and
Kay--Wald \cite{kw}, which show that solutions $\psi$
of \eqref{eq:box} on a Schwarzschild background remain bounded by a constant
for all time, in particular, $(\ref{nothap})$ does \emph{not} occur:
\begin{theorem*}[Kay--Wald~\cite{kw}]
Let $\psi$ be a solution of the wave equation on the Schwarzschild background
$(\mathcal{M},g=g_M)$ arising from sufficiently regular initial data on a Cauchy
hyperusrface $\Sigma$. Then
\begin{equation}
\label{thebns}
|\psi|\le C\sqrt{D}
\end{equation}
in the exterior region up to and including the horizon,
where $D$ is a suitable higher order energy quantity
evaluated on the initial hypersurface $\Sigma$.\footnote{Otherwise, simply excluding
$(\ref{nothap})$ without providing such a bound would be a `mathematical' result
with no physical interpretation.}
\end{theorem*}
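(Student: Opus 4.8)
\emph{Proof strategy.}
The plan is to run everything off one conserved quantity: the energy associated with the static Killing field $T=\partial_t$. Since $T$ is Killing and $\psi$ solves $\Box_g\psi=0$, the corresponding energy current is divergence-free; and since $T$ is timelike throughout the exterior $r>2M$, the flux of this current through a static slice $\Sigma_t=\{t=\mathrm{const}\}$ is nonnegative, hence equal to, and bounded by, its value $D$ on $\Sigma_0=\Sigma$. First I would recover Wald's bound away from the horizon. The conserved $T$-energy controls, on each $\Sigma_t$, a weighted $H^1$ norm of $\psi$ whose weights degenerate only at $r=2M$, while the decay encoded in $D$ handles large $r$. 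Commuting with the Killing fields $T$ and the rotations and feeding the result back into the wave equation, regarded as an elliptic equation in the spatial variables on $\Sigma_t$, upgrades this to $H^2_{\mathrm{loc}}$ control of any region $\{r_1\le r\le r_2\}$ with $r_1>2M$; Sobolev embedding on the $3$-dimensional slice then gives $\sup_{r\ge r_1}|\psi|\lesssim\sqrt D$.

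The hard part is to make the bound uniform up to and including the future horizon $\mathcal{H}^+$, where $T$ becomes null, the $T$-energy density degenerates, and the argument above collapses. Here I would follow Kay--Wald and lean on two facts peculiar to Schwarzschild. \emph{(i) Exact stationarity:} the flow $\varphi_s$ of $T$ maps $\mathcal{H}^+$ onto itself, so bounding $|\psi|$ at an arbitrarily late point of $\mathcal{H}^+$ reduces to bounding $|\psi\circ\varphi_s|$ on a \emph{fixed} small region abutting the bifurcation sphere $\mathcal{B}$, uniformly in $s\ge0$ --- and $\psi\circ\varphi_s$ inherits the same conserved ($T$- and commuted-$T$-) energies as $\psi$, so the required uniformity is free. \emph{(ii) A domain-of-dependence argument at $\mathcal{B}$:} to control $\psi$ in that fixed region, where the conserved energy is itself degenerate, one passes to the maximally extended (Kruskal) spacetime --- in which $\mathcal{B}$ is an interior point of a regular spacelike hypersurface and is the fixed-point set of $T$ --- and exploits the time-reflection symmetry of Schwarzschild, together with finite speed of propagation, to express the solution near $\mathcal{B}$ through data on a hypersurface carrying $D$-controlled energy fluxes; a local-in-spacetime energy estimate over the resulting \emph{compact} region (where using a non-stationary but timelike multiplier costs only a Gr\"onwall factor) plus Sobolev embedding then yield $|\psi|\lesssim\sqrt D$ near $\mathcal{B}$, whence by (i) on all of $\mathcal{H}^+$.

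The main obstacle is precisely this last step, and it is structural: the degeneracy of the $T$-energy at $\mathcal{H}^+$ means there is no nonnegative \emph{conserved} current controlling the derivative of $\psi$ transversal to the horizon, and the Kay--Wald device that works around it is inflexible --- it presupposes exact stationarity, the time-reflection symmetry, and the precise causal geometry of the Kruskal bifurcate horizon $\mathcal{H}^+\cup\mathcal{B}\cup\mathcal{H}^-$. It is this inflexibility that breaks down for $a\ne0$, where a perturbed Kerr exterior need not possess a bifurcate horizon or any reflection symmetry, and which therefore forces the more robust machinery developed later in this paper --- most importantly the red-shift vector field, which delivers nondegenerate control at the horizon directly. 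What is left over is bookkeeping: patching the near-horizon, intermediate, and large-$r$ bounds into a single estimate, and counting the derivatives of the data that enter $D$ (commutation together with Sobolev embedding on a $3$-manifold, so $D$ is of order at most three or so).
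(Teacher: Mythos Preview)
Your overall outline matches the paper's description of the original Kay--Wald argument: away from the horizon the conserved $T$-energy plus commutation with Killing fields and Sobolev suffices, while the horizon requires the special structure of maximally extended Schwarzschild. The paper itself does not give a detailed proof of this cited theorem; it characterises the original proof as relying on ``clever, but fragile geometric arguments, exploiting in a non-trivial way the spherical symmetry but also the discrete symmetry of maximally extended Schwarzschild,'' and then advertises its own, different proof via the red-shift vector field $N$ (see Proposition~\ref{specialises..} and Section~\ref{aside}). So in that sense your route and the paper's preferred route diverge: the paper obtains the nondegenerate energy bound $(\ref{actuallybond})$ directly from the $T$-identity coupled to the red-shift estimate $(\ref{eq:rshift})$, then commutes with $Y$ (Proposition~\ref{rsc}) to get pointwise control at $\mathcal{H}^+$---no Kruskal extension, no discrete symmetry, and hence a method that survives perturbation to $|a|\ll M$.

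There is, however, a genuine gap in your step~(ii). You propose to translate by $\varphi_s$ back to a fixed compact neighbourhood $K$ of the bifurcation sphere $\mathcal{B}$, and then run a local energy estimate with a strictly timelike multiplier $N$, paying only a Gr\"onwall factor over $K$. The problem is the \emph{input} to that Gr\"onwall argument: you need the $N$-energy of the translated solution $\psi_s=\psi\circ\varphi_s$ on a spacelike piece of $\partial J^-(K)$ to be bounded uniformly in $s$. But the $N$-energy of $\psi_s$ on $\Sigma_0$ is the $N$-energy of $\psi$ on $\varphi_s(\Sigma_0)$, which for large $s$ is pressed up against $\mathcal{H}^+$---precisely where $T$ degenerates and where you have no a~priori nondegenerate control. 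Invoking ``time-reflection symmetry together with finite speed of propagation'' does not, by itself, close this: the reflection symmetry relates $\mathcal{H}^+$ to $\mathcal{H}^-$ but does not manufacture a uniformly-in-$s$ bounded nondegenerate flux for $\psi_s$ near $\mathcal{B}$. The actual Kay--Wald mechanism is more specific: one reduces (by density) to data compactly supported away from $\mathcal{B}$, uses the discrete symmetry to extend through Kruskal, and then constructs an ``antiderivative'' $\tilde\psi$ with $T\tilde\psi=\psi$ solving the wave equation with finite $T$-energy; Wald's away-from-horizon argument applied to $\tilde\psi$ then controls $T\tilde\psi=\psi$ up to the horizon. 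It is this $T$-inversion step---not a bare Gr\"onwall estimate---that does the work, and it is exactly this step that is fragile (it needs exact staticity, the bifurcate horizon, and the discrete isometry), as you correctly anticipate in your closing remarks.
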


The proof of the above Theorem effectively initiated the study
of the linear stability problem from the modern PDE point of view, according to which:
\begin{itemize}
\item
The most fundamental boundedness statements are at the level of quantitative control
of energy quantities.
\item
Pointwise boundedness statements follow from energy-based statements
after commuting the wave equation
with various vector fields, repeating the energy bounds, and
applying Sobolev inequalities. 
\end{itemize}
In the above, no mode-analysis need be appealed to.
Such arguments are essential in proving 
existence theorems for \emph{non-linear} wave equations, like
the Einstein equations $(\ref{eq:einst})$. 
From the point of view of~\cite{drimos,kw}, 
linear stability \emph{away} from the horizon for $(\ref{eq:box})$ essentially
thus follows from `standard arguments' in PDE and 
the fact that the stationary Killing field $\partial_t$ is timelike in the exterior,
allowing the energy method to be applied as usual.

The more novel part of the proof of the above Theorem  concerns the horizon:
Since $\partial_t$ becomes null on the event horizon, 
its associated conserved energy degenerates there, and
this does \emph{not} allow the standard method to apply to obtain bounds on $\psi$
\emph{up to and including} the horizon (see~Sections~\ref{ergosec} and~\ref{Conservative}). 
This difficulty was circumvented in~\cite{drimos,kw}
by a number of clever, but fragile geometric arguments, exploiting in 
a non-trivial way the spherical symmetry
but also the discrete symmetry of maximally extended Schwarzschild.

A simpler, and more robust, 
proof of the Kay--Wald Theorem follows by specialising~\cite{dr6}, to be discussed
below in the context of Kerr, and this new approach sheds light
on what is the physical origin of the boundedness property at the horizon: 
it is precisely the celebrated {\bf\emph{red-shift effect}},
which can be quantified by exploiting in an appropriate way
the generalised energy identities corresponding to a suitably-defined
local-observer-based energy (see Sections~\ref{RSgeo},~\ref{R-S-E} and~\ref{aside}).
This proof again does  not use any decomposition of $\psi$ into
either spherical harmonics or modes.

An additional linear stability statement supplementing
 the above Theorem would be to prove
quantitative \emph{decay} bounds for   solutions, showing not only
$(\ref{thebns})$, but that
$\psi$ can be bounded by a fixed decaying function whose modulus depends
only on the strength of initial data.
(Proving such a statement would be
essential from the point of view of the question of {\it nonlinear stability},
as decay is the only known mechanism for 
nonlinear stability (see the proof of stability of Minkowski space \cite{ck,lr}).) 
 Even in the Schwarzschild case,
such decay results have only become available in the past few years:
\cite{bluester, dr3, dr5, mmtt, luk1, toh}.
Proving these decay results has led in particular to better understanding of
how various aspects of the underlying geometry interact in the analysis
of $(\ref{eq:box})$, in particular,
quantifying the `obstruction' to decay caused by the presence of
trapped null geodesics associated to the photon sphere (see Section~\ref{TNG}).\footnote{This obstruction
is not present when one restricts to finitely many spherical harmonics, as
is often the case in the literature.}
This latter obstruction is effectively captured by use of virial-type energy
currents whose divergence is non-negative definite and degenerates
precisely on such geodesics (see Section~\ref{virialss}). 
Like the new proof~\cite{dr6} of
boundedness, the final version of the proof of decay~\cite{dr5} 
in the Schwarzschild case takes place entirely in physical space
and \emph{does not require} frequency decompositions or
 spherical harmonics. (See Sections~\ref{sec:three} and~\ref{putting}.)

\subsection{The Kerr family}
Up until two years ago, even the analogue of the Kay--Wald Theorem was
a completely
open problem for the Kerr family, in view of superradiance and the lack
of a non-negative definite conserved energy from which to start. 
The first result to appear concerning general solutions $\psi$ of~$(\ref{eq:box})$, 
see~\cite{dr6}, was a proof of the analogue of the Kay--Wald Theorem on a
large class of  underlying metrics $g$, a class which included as a special case
the very slowly rotating Kerr metrics $g_{a,M}$ with $|a|\ll M$. 
Results giving additional decay-type statements,
now specialised to the exactly Kerr case with $|a|\ll M$, 
rapidly followed in independent  works (see Chapter 5.3
 of~\cite{jnotes} and~\cite{toh}), and
later also~\cite{andB}.  We also mention previous work concerning
non-quantitative statements for azimuthal modes~\cite{fksy, fksy2}.
The state of the field up to this point is described in detail in~\cite{jnotes}.

The purpose of the present paper is to review the small $|a|\ll M$ proof
of~\cite{jnotes} and
complete the picture, by giving, for the first time here, the essential elements
of the proof of decay results for the general subextremal case $|a|<M$.
(A complete presentation, together with a proof of the 
refinement \eqref{eq:pdecay'} will be forthcoming in \cite{drf1},\cite{drf2}.)
The main result is

\begin{theorem}\label{thrm:state}
Let $\psi$ be a solution of the wave equation $\Box_g\psi=0$ on the Kerr background 
$g=g_{M,a}$ for $|a|<M$  with sufficiently regular initial data on a Cauchy 
hypersurface $\Sigma$,
and let 
$\{\Sigma_\tau\}_{\tau\ge 0}$ be the
foliation defined in Section~\ref{sec:sigma}. Then, with 
$|\pa\psi|^2$ denoting the square sum of non-degenerate derivatives 
as in Section~\ref{sec:nder}, 
we have
\begin{itemize}
\item Boundedness of energy\footnote{Note that this energy quantity is thus 
non-degenerate at the horizon.}
\begin{equation}
\int_{\Sigma_\tau}|\partial\psi|^2\le C\int_{\Sigma_0}|\partial\psi|^2
\end{equation}
\item Integrated local energy decay: for arbitrary $r_+<R_1<\infty$
\begin{equation}\label{eq:enerdloc}
\int_0^\infty \int_{\Sigma_\tau\cap \{r_+\le r\le R_1\}} \left (\chi|\pa\psi|^2+|\psi|^2\right) \le C_{R_1} 
\int_{\Sigma_0} |\pa\psi|^2
\end{equation}
and 
\begin{equation}\label{eq:enerdloc}
\int_\tau^{2\tau}\int_{\Sigma_\tau\cap \{r_+\le r\le R_1\}} |\pa\psi|^2\le C_{R_1}D \tau^{-2}.
\end{equation}
\item Polynomial-time decay of the energy-flux: 
\begin{equation}\label{eq:enerd}
\int_{\Sigma_\tau} |\pa\psi|^2\le  CD \tau^{-2}.
\end{equation}
\end{itemize}
Here, $\chi$ is a smooth cutoff function which vanishes in a neighborhood of the physical space projection of the trapped set, 
and $D$ denotes the square of an appropriate (weighted Sobolev) norm of the data,
involving higher derivatives of $\psi$. 
\end{theorem}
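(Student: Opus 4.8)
The plan is to combine a frequency-localised microlocal analysis based on Carter's separation of variables (Section~\ref{ayrilik}) with the red-shift vector field method near the horizon and the $r^p$-weighted energy hierarchy near null infinity. The skeleton is: (a) reduce everything to a \emph{degenerate integrated local energy decay} (ILED) estimate in a bounded region $\{r_+\le r\le R_1\}$, with the loss at the photon sphere encoded by the cutoff $\chi$; (b) run the frequency analysis to prove this ILED; (c) add the red-shift to remove the horizon degeneracy and obtain boundedness and non-degenerate ILED; (d) feed the ILED into the $r^p$-hierarchy of Section~\ref{putting}, after commuting $(\ref{eq:box})$ with the Killing fields $\pa_t,\pa_\phi$ and the red-shift operator, to extract the $\tau^{-2}$ decay $(\ref{eq:enerd})$. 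The first technical point is that, since we do not know a priori that $\psi$ decays, we cannot Fourier-transform it in the Boyer--Lindquist time $t$; instead one works with $\psi_\xi=\xi\psi$ where $\xi$ is a smooth cutoff in $t$ supported in $\{0\le t\le\tau+1\}$ and equal to $1$ on $\{1\le t\le\tau\}$. Then $\psi_\xi$ is compactly supported in $t$ in every spatially bounded region and satisfies an inhomogeneous equation whose source $F$ is supported near $t=0$ and $t=\tau$; its Fourier transform in $t$ and decomposition into the oblate spheroidal harmonics $S_{m\ell}(a\omega)$ are then well-defined and yield, for each triple $(\omega,m,\ell)$, a radial ODE $u''+(\omega^2-V)u=H$ with $V=V(a\omega;m,\ell)$ as in the introduction. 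The errors generated by $\xi$ will eventually be absorbed by a bootstrap on the ILED norm over $[0,\tau]$.

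The heart of the matter is step (b): for each frequency triple one builds a current of the form $Q[u]=f\,\mathrm{Im}(\bar u\,u')+\tfrac12 f'|u|^2+\ldots$, i.e.\ a combination of a virial ($f\pa_{r^*}$-type) term, a Lagrangian term, and a lower-order boundary modification, chosen so that the spacetime integral of its derivative is coercive in $|u'|^2+(\omega^2+\lambda)|u|^2$ away from the photon sphere and in $|u|^2$ everywhere, with boundary contributions controlled by the conserved fluxes through $\mathcal H^+$ and $\mathcal I^+$. Frequency space must be partitioned: (i) away from both superradiance and trapping one takes the classical choice with $f'\ge 0$ and $f$ changing sign at $r^*=0$; (ii) at the \emph{trapped} frequencies — those for which $V$ has a near-maximum — $f$ must vanish at that maximum, which is precisely the origin of the $\chi$-degeneracy, and here the crucial structural input, valid for the \emph{entire} subextremal range $|a|<M$, is that the superradiance condition $(\ref{suprange})$ \emph{fails} at trapped frequencies, so the horizon flux carries a favourable sign; (iii) at the genuinely \emph{superradiant} frequencies $\omega$ is quantitatively separated from the trapped regime and a distinct current — exploiting that there $|\omega|$ and $|am|$ are comparable and small relative to $\sqrt\lambda$ — closes, with all boundary terms absorbed. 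Summation over $(\omega,m,\ell)$ by Plancherel, together with the continuity/bootstrap argument controlling the cutoff errors (and using mode stability of Whiting only qualitatively, to rule out real resonances), produces the degenerate ILED for $\psi$ itself.

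Step (c) is the red-shift of Sections~\ref{RSgeo}--\ref{R-S-E}: adding the energy identity for a vector field $N$ which is future-directed timelike up to and including $\mathcal H^+$ and whose deformation tensor has the right sign in a neighbourhood of the horizon upgrades the degenerate $\pa_t$-energy to the non-degenerate $|\pa\psi|^2$, giving $\int_{\Sigma_\tau}|\pa\psi|^2\le C\int_{\Sigma_0}|\pa\psi|^2$ and a non-degenerate ILED. For step (d), commuting with $\pa_t,\pa_\phi,N$ and running the hierarchy of $r^p$-weighted estimates ($p\in\{1,2\}$, as in Section~\ref{putting}) in $\{r\ge R\}$, fed by the non-degenerate ILED in the bounded region, yields by a dyadic pigeonhole in $\tau$ first the local $\tau^{-2}$ estimate and then $(\ref{eq:enerd})$; the constant $D$ is simply the weighted higher-order norm of the data that accumulates through these commutations. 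Pointwise decay (the corollaries) then follows from further commutation with the above operators and Sobolev embedding on $\Sigma_\tau$.

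The main obstacle is the frequency-localised construction, cases (ii) and (iii): producing a \emph{single} globally coercive current demands quantitative control of the geometry of $V(a\omega;m,\ell)$ over the whole three-dimensional frequency space — above all the quantitative separation between the superradiant and trapped regimes — \emph{uniformly as $|a|\to M$}. This is exactly where the very-slowly-rotating argument, which treated the $am$-dependent terms perturbatively, must be discarded in favour of a genuinely new analysis, and where the interplay of superradiance, trapping and the red-shift becomes unavoidable (cf.\ the virial discussion of Section~\ref{virialss}). A secondary, more routine, difficulty is making the bootstrap on the time-cutoff errors self-consistent, since the whole frequency analysis only makes sense \emph{after} cutting off in $t$.
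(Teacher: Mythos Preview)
Your overall architecture --- Carter separation to prove a degenerate ILED, the structural fact that superradiant frequencies are not trapped for all $|a|<M$, red-shift to upgrade near the horizon, then the $r^p$-hierarchy for decay --- matches the paper's, and your identification of the superradiance/trapping dichotomy as the key new input beyond $|a|\ll M$ is exactly right. The use of Whiting for the bounded-frequency superradiant range is also in line with the paper.

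There is, however, a genuine gap in how you organise the argument: the paper does \emph{not} run a single bootstrap with a two-sided time cutoff. Instead it proves the theorem by a \emph{continuity argument in the parameter $a$}, and this is not cosmetic --- it is precisely what allows the cutoff errors to be absorbed. For \emph{openness} near a value $a_0$ where the theorem is already known, one cuts off on $0\le\tau\le\tau_f$ as you do; the future-cutoff error produces a boundary term $\int_{\Sigma_{\tau_f-1}}|\partial\psi|^2$, and this is absorbed using the smallness of $|a-a_0|$ (the coupling constant $\beta$ in the superradiant error is then small). For \emph{closedness}, one considers a sequence $a_i\to a$ along which decay is already established; for each $a_i$ the solution is already $L^2$ in $\tau$, so the cutoff is applied \emph{only near $\tau=0$} and there is no future-cutoff error at all. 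In this closedness step the paper constructs, for every frequency triple, a combined current (virial $+$ microlocal red-shift $+$ multiple of $Q_T$) whose boundary contribution $Q(\infty)-Q(-\infty)$ has a good sign, so no boundary terms survive; but the inhomogeneous term $H$ is then supported purely near the data.

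Your proposal, by contrast, keeps both cutoffs and asserts that the resulting errors are ``absorbed by a bootstrap on the ILED norm''. The future-cutoff error is an \emph{energy} term on $\Sigma_{\tau}$, not an ILED term, and there is no smallness available to close the loop for general $|a|<M$: that is exactly why the paper splits into openness (smallness of perturbation) and closedness (no future cutoff needed). You should either adopt the continuity-in-$a$ framework explicitly, or give a concrete mechanism --- absent any small parameter --- by which the $\int_{\Sigma_{\tau}}|\partial\psi|^2$ term generated by the future cutoff can be controlled. Calling this step ``routine'' understates it; it is the logical hinge of the whole argument. (A minor point: the cutoff should be in the foliation parameter $\tau$, not in Boyer--Lindquist $t$; this matters for the boundary conditions at the horizon.)
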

As described above, combining energy boundedness
and decay with commutation arguments gives pointwise boundedness and
decay statements. We give here an example
of the decay statements that follow:
\begin{corollary}
We have the following statement of pointwise decay for $\psi$
\begin{equation}\label{eq:pdecay}
|r^{\frac 12} \psi|\le C \sqrt{D}{\tau^{-1}},\qquad |r\psi|\le C \sqrt{D}{\tau^{-\frac 12}}.
\end{equation} 
 Moreover, for any $\eta>0$
\begin{equation}\label{eq:pdecay'}
|\psi|\le C_\eta \sqrt{D} \tau^{-\frac 32+\eta}.
\end{equation} 
\end{corollary}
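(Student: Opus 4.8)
The plan is to deduce the Corollary from the energy estimates of Theorem~\ref{thrm:state} in the standard way: commute $\Box_g\psi=0$ with enough vector fields to afford two Sobolev degrees, apply Sobolev embedding on the leaves $\Sigma_\tau$, and, in the far region, feed in the $r^p$-weighted energy hierarchy. First I would record that each of the estimates of Theorem~\ref{thrm:state}, together with the $r^p$-weighted fluxes used in its proof, holds with $\psi$ replaced by a suitably commuted quantity. For the Killing field $\partial_t$ this is immediate, since $\partial_t\psi$ again solves $\Box_g\psi=0$. For the angular derivatives $\Omega$ and for the red-shift vector field $Y$ (see Sections~\ref{RSgeo}--\ref{aside}) one instead has $\Box_g(\Omega\psi)=[\Box_g,\Omega]\psi$ and $\Box_g(Y\psi)=[\Box_g,Y]\psi$, with inhomogeneities controlled respectively by the integrated local energy decay \eqref{eq:enerdloc} and by the red-shift current; re-running the proof of Theorem~\ref{thrm:state} with these sources yields the boundedness, integrated-decay and $\tau^{-2}$ flux-decay statements for $\Omega^{\le 2}\psi$, $Y\Omega^{\le 1}\psi$, and so on, as well as the weighted fluxes $E^{(p)}_\tau$, $p\in(1,2]$, on the far part of $\Sigma_\tau$, all with $D$ replaced by a data norm $D_k$ that involves finitely many more derivatives and $r$-weights.

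With these higher-order estimates in hand the pointwise bounds follow by a bookkeeping of $r$-weights that I would not reproduce in full. In any region $\{r_+\le r\le R\}$ the Sobolev embedding $H^2(\Sigma_\tau\cap\{r\le R\})\hookrightarrow L^\infty$ applied to $\psi$ bounds $|\psi|^2$ there by the non-degenerate energy fluxes of $\psi,\Omega\psi,\Omega^2\psi$ through $\Sigma_\tau$, plus their integrated local energy to absorb the zeroth-order term, hence by $D_k\tau^{-2}$ via the commuted form of \eqref{eq:enerd}; since $r$ is bounded this gives the first inequality of \eqref{eq:pdecay} in this region. For $r\ge R$ one works with the radiation field $r\psi$: the Sobolev inequality on the sphere $S_{\tau,r}=\Sigma_\tau\cap\{r=\mathrm{const}\}$ gives $|r\psi|^2(\tau,r,\cdot)\lesssim r^{-2}\sum_{|\beta|\le 2}\|r\,\Omega^\beta\psi\|^2_{L^2(S_{\tau,r})}$, and a weighted Hardy / fundamental-theorem-of-calculus estimate in $r$, whose boundary term at a well-chosen $r\in[R,2R]$ is supplied by the compact-region bound just obtained, controls the right-hand side by the weighted fluxes $E^{(p)}_\tau$ of $\Omega^{\le 2}\psi$. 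Inserting the decay rates of these fluxes (namely $E^{(2)}_\tau\lesssim D_k\tau^{-1}$, and, using data carrying one more $r$-weight, $E^{(1)}_\tau\lesssim D_k\tau^{-2}$) then yields $|r\psi|\lesssim\sqrt{D}\,\tau^{-\frac12}$ and $|r^{\frac12}\psi|\lesssim\sqrt{D}\,\tau^{-1}$ for $r\ge R$, completing \eqref{eq:pdecay}.

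For the refinement \eqref{eq:pdecay'} the gain over \eqref{eq:pdecay}, which in a fixed compact region gives only $\tau^{-1}$, must come from the improved integrated decay. The time-averaged estimate in \eqref{eq:enerdloc}, applied to $\psi$ and its commuted versions, shows that on a dyadic sequence of times the energy in $\{r\le R\}$ is $O(\tau^{-3})$; propagating this forward using boundedness together with the already-decaying flux incoming from $\{r\ge R\}$, pigeonholing over the dyadic sequence at the cost of a factor $\tau^{\eta}$, and then applying the Sobolev step, gives $|\psi|\lesssim C_\eta\sqrt{D}\,\tau^{-\frac32+\eta}$ for $r\le R$; for $r\ge R$ one combines \eqref{eq:pdecay} with intermediate-weight ($p\in(1,2)$) flux estimates across dyadic $r$-shells. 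A careful implementation of this last pigeonhole, and the eventual removal of the $\tau^{\eta}$ loss, is what is deferred to \cite{drf1},\cite{drf2}.

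The main obstacle lies not in the Sobolev or interpolation steps, which are insensitive to $a$, but in the first paragraph: re-establishing the commuted energy, integrated-decay and $r^p$-weighted estimates in the presence of the ergoregion and of trapping. On Kerr with $a\ne0$ the operators $\Omega$ do not commute with $\Box_g$ modulo lower-order terms in any clean way, and the trapped set lives in phase space rather than in physical space, so the commutator errors cannot be absorbed by a purely physical-space integrated decay estimate; one must work instead with its frequency-localised counterpart (Sections~\ref{ayrilik}--\ref{sec:large}) and check that it survives commutation with sufficiently many vector fields while retaining enough $r$-weights. The red-shift commutation at the horizon and the $r^p$-hierarchy near infinity are, by contrast, robust and essentially $a$-independent.
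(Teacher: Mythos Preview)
The paper itself gives no detailed proof of this Corollary; it simply asserts that pointwise statements follow from the energy decay of Theorem~\ref{thrm:state} via ``standard commutation arguments'' and Sobolev inequalities (see the remarks preceding the Corollary and Section~\ref{putting}), and explicitly defers the refinement~\eqref{eq:pdecay'} to the forthcoming~\cite{drf1},~\cite{drf2}. Your outline is therefore considerably more detailed than what actually appears here, and the overall architecture---commute to gain Sobolev regularity, apply the energy decay~\eqref{eq:enerd} and the $r^p$-hierarchy~\eqref{eq:newmeth}, then Sobolev on the leaves $\Sigma_\tau$---matches what the paper sketches.

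The one substantive divergence concerns the choice of commutators. You propose to commute with the full angular momentum operators $\Omega_i$ and then absorb the resulting $[\Box_g,\Omega_i]$ errors via the frequency-localised integrated decay of Sections~\ref{ayrilik}--\ref{sec:large}; you correctly flag this as the hard step, since on Kerr with $a\ne0$ the $\Omega_i$ are not Killing and the errors are not lower order in any useful sense near trapping. The paper, by contrast, indicates (see the remark following Proposition~\ref{rsc}) that in the general $|a|<M$ case one should commute only with the Killing fields $T$, $\Phi$ and the red-shift vector $Y$: since the span of $T,\Phi$ is timelike away from the horizon and the span of $T,\Phi,Y$ is timelike at the horizon, commuting with these alone yields coercive higher-order energies, and the missing spatial directions are recovered from the wave equation itself via an elliptic estimate. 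This sidesteps entirely the bad $\Omega$-commutator you identify as the main obstacle. Your route is not wrong, but it front-loads a genuinely delicate commutator analysis near the trapped set; the paper's route trades that for an elliptic step which is local and robust.
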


Before closing this introduction, we give a few comments on some of the ideas
special to the Kerr case.

\subsection{The interaction of the red-shift, superradiance, and trapped null orbits}
We have already remarked in the Schwarzschild case the importance of understanding
the red-shift effect
and the role of trapped null geodesics for decay, and how these can in fact
be captured by energy methods, \emph{bypassing completely} mode analysis, in fact,
bypassing any reference to Fourier
decomposition and spherical harmonics. 

Considering now the Kerr case, 
there are two main additional difficulties: superradiance, discussed previously, and the fact that the structure of trapped null geodesics
is more complicated, at least when viewed purely in physical space.
These two problems are in some sense coupled, and this coupling can be viewed
as an additional, third difficulty. (See Section~\ref{duskolies}.)
It turns out that to understand these difficulties, 
it is indeed useful to reconsider
Carter's formal separation of $\psi$ into modes, but viewed from a slightly different 
perspective.

Let us note that the first
issue mentioned in Section~\ref{LS?} (associated to whether general solutions $\psi$
can be written as superpositions of modes) can be sidestepped
with the help of a cutoff function,
by
restricting consideration to a finite time slab.
In such a slab, one can indeed
represent a given finite-energy solution $\psi$ of $(\ref{eq:box})$
as a superposition of modes
$(\ref{modetype})$
localised at fixed frequency-triple $\omega\in\mathbb R$, $m$, $\ell$. 
If one is able to recover  a suitable \emph{quantitative} estimate on the solution
in this timeslab,
then by a bootstrap argument the estimate can easily be extended to be valid for all time.

Recall from our brief
discussion of the Schwarzschild case that the key to showing quantitative
bounds 
is constructing suitable energy currents
which capture the usual energy
conservation, the red-shift, and the obstruction of trapped null orbits. 
Key to the original boundedness result~\cite{dr6} on Kerr were two observations:
\begin{itemize}
\item
In the case $|a|\ll M$, superradiant frequencies are not trapped.
\item
When superradiance is controlled by a small parameter, then it can be
absorbed with the help of the redshift.
\end{itemize}
The two observations taken together allowed (see Section~\ref{boam}) an understanding of
boundedness without understanding trapping. In fact,
the boundedness results were obtained not just for exactly
Kerr metrics but for a wide class
of spacetimes whose metric is $C^1$-close to Schwarzschild.
The decay result~\cite{jnotes} (which is restricted to Kerr spacetimes) in
the $|a|\ll M$ case replaces the first observation above with 
a complete treatment of trapping. Here the full potential of the
microlocalisation provided by $(\ref{modetype})$ is used: In the range of frequencies relevant for trapping,
virial currents are chosen separately
for each frequency-triple $(\omega, m, \ell)$, so as to degenerate
at a unique value of $r$ depending on this triple. This $r$-value can be
related to a null orbit with conserved quantities which are in turn associated
to the frequency triple. Behind this
construction is 
the close connection between the separation of the
wave equation and the separation of the Hamilton-Jacobi equations.
(See Section~\ref{smadec}.)

The main new idea necessary for the general $|a|<M$ case
is to revisit also the first observation above. In the small $|a|\ll M$ case, the fact
that superradiant frequencies are not trapped 
follows essentially from the fact that
all future-trapped null geodesics eventually leave the ergoregion.
For general $|a|<M$ the latter is no longer the case. Nonetheless,
remarkably,
\begin{itemize}
\item
Superradiant frequencies are not trapped
for the entire $|a|<M$ range!
\end{itemize}
In the special (marginal) case of  frequency $\omega=0$, the above is related to the
fact that 
there are no \emph{trapped} null geodesics orthogonal to $\partial_t$. This latter fact plays
a role in recent work on black hole uniqueness~\cite{alexakis}.

The above observation allows us to adapt the second observation from before
to the case
where superradiance is not a small parameter and still couple the conserved
energy estimate, the redshift and 
understanding of trapping (the latter again via the microlocalisation
achieved by the classical separation) so as to obtain the desired estimates (see 
Section~\ref{edwgeviko}).

The resolution of the linear stability problem in the $|a|<M$ case
thus brings us full circle. Though based on energy methods,
the proof reconnects with the classical mode analysis,
now considered from a more sophisticated point of view as a tool for the `microlocalisation' of energy currents.
Moreover, the method highlights certain properties of
individual modes $(\ref{modetype})$, indeed of the potential functions $V(\omega)$,
which do not appear to have been 
noticed previously, but turn out
to play a fundamental part in the stability mechanism. Thus, we hope that this argument
offers something new even for the reader impatient with some of the more technical
difficulties of the proof.

\subsection{Outlook}
With Theorem~\ref{thrm:state} and its corollaries, 
the problem of linear stability, at least for scalar perturbations, has been essentially 
completely understood, leaving only the extremal $|a|=M$ case. In view of
recent work of Aretakis~\cite{aretakis} on extremal Reissner-Nordstr\"om, 
for extremal Kerr one in fact expects \emph{instabilities} on the horizon. 
We should mention that electromagnetic perturbations on Schwarzschild have been 
treated in \cite{Blue}, while the problem of gravitational 
perturbations is the subject of recent work of Holzegel~\cite{holzegel}.

\section{The Kerr space-time}
\label{KST}
The detailed description of the maximally
extended Kerr space-time $(\mathcal{M},g_{M,a})$
can be found for instance in \cite{cartersep},\cite{jnotes}. 
Here we simply recall the local coordinate expression of the metric
in the regions which will be of interest to us.

\subsection{Boyer-Lindquist coordinates}
The Kerr metric in the domain of outer communication
${\mathcal R}\subset \mathcal{M}$
(corresponding to one of the ends) is given in so-called \emph{Boyer-Lindquist coordinates} 
$(t,r,\theta,\phi)\in {\Bbb R}\times(r_+,\infty)\times [0,\pi]\times[0,2\pi)$
by
\begin{align}
\label{eleme}
g_{M,a}=
-\frac{\Delta}{\rho^2}\left(dt-a\sin^2\theta d\phi\right)^2
+\frac{\rho^2}{\Delta}dr^2+\rho^2d\theta^2 +\frac{\sin^2\theta}{\rho^2}
\left(a\,dt-(r^2+a^2)d\phi\right)^2.
\end{align}
Here 
$$
\Delta=r^2-2M r+a^2,\qquad \rho^2=r^2+a^2\cos^2\theta,
$$
and $r_+$ is the largest root of the equation $\Delta=0$,
\[
r_\pm=M\pm \sqrt{M^2-a^2};
\index{fixed parameters! $r$-parameters! $r_+$ (larger root of $\Delta=0$)}
\index{fixed parameters! $r$-parameters! $r_-$ (smaller root of $\Delta=0$)}
\]
The coordinate system restricted to $\theta\ne 0,\pi$, $\phi\ne 0$ defines an almost 
global chart on $\mathcal{R}$.
The coordinate vectorfields $T=\partial_t$ and $\Phi=\partial_\phi$ extend to
globally defined Killing fields 
and characterize Kerr as a \emph{stationary}, \emph{axi-symmetric} space-time.

Setting $a=0$, we recognize the  Schwarzschild metric
\begin{equation}
\label{incords}
g_M=-\left(1-\frac{2M}r\right) dt^2+\left(1-\frac{2M}r\right)^{-1}
dr^2 +r^2 (d\theta^2+\sin^2\theta \,d\phi^2),
\end{equation}
in standard form.

It is useful to define a rescaled version of the $r$ coordinate on $r>r_+$ by
\begin{equation}
\label{r*def}
\frac{dr^*}{dr}=\frac{r^2+a^2}{\Delta},
\end{equation}
after suitably chosing a value for $r^*=0$.
\index{coordinates! $r^*$ (Regge-Wheeler-type rescaled $r$ coordinate $r^*=r^*(r)$)}
We will sometimes consider the
$(t,r^*,\theta, \phi)$ coordinate system.
Note that these now range in ${\Bbb R}\times {\Bbb R}\times [0,\pi]\times[0,2\pi)$.

\subsection{Kerr-star coordinates}
For $|a|<M$, 
recall that the boundary $\partial\overline{\mathcal{R}}$ 
in the maximally extended Kerr spacetime $\mathcal{M}$
 is a bifurcate null hypersurface 
$\mathcal{H}=\mathcal{H}^+\cup\mathcal{H}^-$, where $\mathcal{H}^\pm$ are
known as the future (rsp.~past) event horizons.

The geometry of $\mathcal{H}$ is not easily understood in Boyer-Lindquist
coordinates, which degenerate as one approaches this hypersurface.
Defining, however, so-called Kerr star coordinates $(r^*, r,\theta,\phi^*)$
on $\mathcal{R}$ by
\[
t^*=t +\bar t(r),\qquad 
 \phi^*=\phi+\bar \phi(r)
\]
where 
\[
\frac{d \bar t}{dr}(r)= (r^2+a^2)/\Delta^2, \qquad 
\frac{d \bar\phi}{dr}(r) = a/\Delta,
\]
then these coordinates extend regularly to be defined on a larger 
subset of $\mathcal{M}$, covering in particular
$\mathcal{H}^+\setminus \mathcal{H}^-$, modulo the standard degeneration of the spherical coordinates on ${\Bbb S}^2$.
In these Kerr star coordinates, the set  ${\mathcal H}^+\setminus\mathcal{H}^-$ 
now simply corresponds to the set $\{r=r_+\}$.
We note that the Killing fields again take the form $T=\partial_{t^*}$, $\Phi=\partial_{\phi^*}$.

\section{Geometric phenomena of Schwarzschild and Kerr}
\label{phenomena}
In this section we briefly review the geometric features of Schwarzschild and
the Kerr space-time which will be particularly important for understanding
the behaviour of solutions the wave equation.

\subsection{The stationary Killing field and the ergoregion}
\label{ergosec}
We have already remarked that  for the Kerr family, the Boyer-Lindquist
coordinate vector field $T=\partial_t$ is Killing. We consider briefly its geometry.

We begin with the Schwarzschild case. From $(\ref{incords})$,
we immediately read off that $T=\partial_t$ is timelike everywhere 
in the domain of outer communications $\mathcal{R}$. On $\mathcal{H}^+\cup\mathcal{H}^-$,
$T$ becomes null, vanishing in fact on $\mathcal{H}^+\cap\mathcal{H}^-$. 
(In the black hole region, $T$ becomes
spacelike; this is of no concern to us here.)

\[
\begin{picture}(0,0)%
\includegraphics{onhor.pstex}%
\end{picture}%
\setlength{\unitlength}{1579sp}%
\begingroup\makeatletter\ifx\SetFigFont\undefined%
\gdef\SetFigFont#1#2#3#4#5{%
  \reset@font\fontsize{#1}{#2pt}%
  \fontfamily{#3}\fontseries{#4}\fontshape{#5}%
  \selectfont}%
\fi\endgroup%
\begin{picture}(7511,4740)(2314,-6368)
\put(3751,-3061){\makebox(0,0)[lb]{\smash{{\SetFigFont{10}{12.0}{\rmdefault}{\mddefault}{\updefault}{\color[rgb]{0,0,0}$\mathcal{H}^+$}%
}}}}
\put(3676,-4561){\makebox(0,0)[lb]{\smash{{\SetFigFont{10}{12.0}{\rmdefault}{\mddefault}{\updefault}{\color[rgb]{0,0,0}$\mathcal{H}^-$}%
}}}}
\put(8326,-4036){\makebox(0,0)[lb]{\smash{{\SetFigFont{10}{12.0}{\rmdefault}{\mddefault}{\updefault}{\color[rgb]{0,0,0}$\mathcal{R}$}%
}}}}
\put(9226,-3436){\makebox(0,0)[lb]{\smash{{\SetFigFont{10}{12.0}{\rmdefault}{\mddefault}{\updefault}{\color[rgb]{0,0,0}$T$}%
}}}}
\end{picture}%

\]

As we shall see, the fact that $T$ becomes null is already a difficulty when
we discuss the notion of energy in the sections to come.
In the case of the Kerr metric $g_{M,a}$ for $a\ne 0$, the situation is far worse!
Now, the vectorfield $T$ fails to be everywhere timelike in the domain of outer 
communication $\mathcal{R}$.
The subset $\mathcal{E}\subset {\rm int}(\mathcal{R})$
where $T$ is spacelike is known
as the \emph{ergoregion}. The boundary $\partial\mathcal{E}$
of $\mathcal{E}$ (in the topology
of $\mathcal{R}$) is called the
\emph{ergosphere}. 

\[
\begin{picture}(0,0)%
\includegraphics{onhor2.pstex}%
\end{picture}%
\setlength{\unitlength}{1579sp}%
\begingroup\makeatletter\ifx\SetFigFont\undefined%
\gdef\SetFigFont#1#2#3#4#5{%
  \reset@font\fontsize{#1}{#2pt}%
  \fontfamily{#3}\fontseries{#4}\fontshape{#5}%
  \selectfont}%
\fi\endgroup%
\begin{picture}(7511,4740)(2314,-6368)
\put(3751,-3061){\makebox(0,0)[lb]{\smash{{\SetFigFont{10}{12.0}{\rmdefault}{\mddefault}{\updefault}{\color[rgb]{0,0,0}$\mathcal{H}^+$}%
}}}}
\put(3676,-4561){\makebox(0,0)[lb]{\smash{{\SetFigFont{10}{12.0}{\rmdefault}{\mddefault}{\updefault}{\color[rgb]{0,0,0}$\mathcal{H}^-$}%
}}}}
\put(9226,-3661){\makebox(0,0)[lb]{\smash{{\SetFigFont{10}{12.0}{\rmdefault}{\mddefault}{\updefault}{\color[rgb]{0,0,0}$T$}%
}}}}
\put(8476,-4336){\makebox(0,0)[lb]{\smash{{\SetFigFont{10}{12.0}{\rmdefault}{\mddefault}{\updefault}{\color[rgb]{0,0,0}$\mathcal{R}$}%
}}}}
\put(2776,-3361){\makebox(0,0)[lb]{\smash{{\SetFigFont{10}{12.0}{\rmdefault}{\mddefault}{\updefault}{\color[rgb]{0,0,0}$\partial\mathcal{E}$}%
}}}}
\put(3601,-3661){\makebox(0,0)[lb]{\smash{{\SetFigFont{10}{12.0}{\rmdefault}{\mddefault}{\updefault}{\color[rgb]{0,0,0}$\mathcal{E}$}%
}}}}
\end{picture}%

\]

Recall also the axisymmetric Killing field $\Phi$. We note that the combination
\[
L=T+  \frac{a}{r_+^2+a^2} \Phi
\]
is null and normal to $\mathcal{H}^+\setminus\mathcal{H}^-$.
Thus $\mathcal{H}^+\setminus\mathcal{H}^-$ 
is a Killing horizon.
The Killing field $L$ is sometimes known as the
\emph{Hawking vector field}.\index{vector fields! $L$ (Hawking vector field)}
We note the identity
\begin{equation}
\label{eqforK}
\nabla_L L=
\kappa L,
\end{equation}
where
\begin{equation}
\label{eutuxws}
\kappa= \frac{r_+-r_-}{2(r_+^2+a^2)}>0.
\end{equation}
The\index{fixed functions! spacetime
functions! $\kappa$ (surface gravity $\kappa=\kappa(a,M)$)} quantity $\kappa$ is known as the \emph{surface gravity}.\index{surface gravity}
We note that in the extremal case $|a|=M$ (not considered here!), $\kappa$ vanishes.

\subsection{Separability of geodesic flow and  trapped null geodesics}
\label{TNG}
The high frequency behavior of solutions to wave equations is intimately related to
the properties of null geodesic flow.
We thus consider these properties in the black hole case.

Again, we begin with the Schwarzschild case $a=0$.
In view of the dimensionality of the span of
the Killing fields $T, \Omega_1, \Omega_2, \Omega_3$,
it is immediate that  the Hamilton-Jacobi
equations separate, and thus, geodesic flow can be understood
completely (cf.~the discussion of separation of the wave equation in
Section~\ref{classical}). Geodesic flow in the Schwarzschild 
metric is described in detail in many textbooks.
One sees from the resulting equations that there are null geodesics $\gamma$ which
for all affine time remain on the hypersurface $r=3M$, the so-called 
\emph{photon sphere}.\index{photon sphere}
\[
\begin{picture}(0,0)%
\includegraphics{fortrap.pstex}%
\end{picture}%
\setlength{\unitlength}{1579sp}%
\begingroup\makeatletter\ifx\SetFigFont\undefined%
\gdef\SetFigFont#1#2#3#4#5{%
  \reset@font\fontsize{#1}{#2pt}%
  \fontfamily{#3}\fontseries{#4}\fontshape{#5}%
  \selectfont}%
\fi\endgroup%
\begin{picture}(13484,5308)(-724,-6368)
\put(3751,-3061){\makebox(0,0)[lb]{\smash{{\SetFigFont{10}{12.0}{\rmdefault}{\mddefault}{\updefault}{\color[rgb]{0,0,0}$\mathcal{H}^+$}%
}}}}
\put(3676,-4561){\makebox(0,0)[lb]{\smash{{\SetFigFont{10}{12.0}{\rmdefault}{\mddefault}{\updefault}{\color[rgb]{0,0,0}$\mathcal{H}^-$}%
}}}}
\put(-224,-2386){\makebox(0,0)[lb]{\smash{{\SetFigFont{10}{12.0}{\rmdefault}{\mddefault}{\updefault}{\color[rgb]{0,0,0}$\mathcal{I}^+$}%
}}}}
\put(11326,-2236){\makebox(0,0)[lb]{\smash{{\SetFigFont{10}{12.0}{\rmdefault}{\mddefault}{\updefault}{\color[rgb]{0,0,0}$\mathcal{I}^+$}%
}}}}
\put(1726,-5086){\rotatebox{90.0}{\makebox(0,0)[lb]{\smash{{\SetFigFont{10}{12.0}{\rmdefault}{\mddefault}{\updefault}{\color[rgb]{0,0,0}$r=3M$}%
}}}}}
\put(4126,-1336){\makebox(0,0)[lb]{\smash{{\SetFigFont{10}{12.0}{\rmdefault}{\mddefault}{\updefault}{\color[rgb]{0,0,0}$\gamma$}%
}}}}
\end{picture}%

\]
With reference to suitable asymptotic notions defining `future null infinity' $\mathcal{I}^+$, 
we can make the 
following more general statement:
If $\gamma(s)$ is a future-directed
inextendible null geodesic in Schwarzschild $(\mathcal{R},g_{M,0})$
with $\gamma(s)\in \mathcal{R}\setminus \mathcal{H}^+$
for all $s>s_0$,  such that moreover for all $p\in \mathcal{I}^+$, 
$\exists_s$ such that $\gamma(s)\not\in J^-(p)$, then $\lim_{s\to\infty} r(\gamma(s))=3M$.

Turning now to the general Kerr case,
remarkably, as discovered by Carter~\cite{cartersep},
geodesic flow admits, besides the conserved quantities associated
to the Killing fields $T$ and $\Phi$, a third non-trivial conserved quantity
(the `Carter constant'). 
Thus, 
geodesic flow remains 
separable and can be completely
understood.  Again, this separation is related to the separation
of the wave equation, discussed in Section~\ref{classical}.
The dynamics is described in some detail in~\cite{chandra}.

In contrast to the Schwarzschild case,
there are now null geodesics with constant $r$ for an open range of 
Boyer-Lindquist-$r$ values. However, restricting to geodesics sharing
a fixed triple of the nontrivial conserved quantities, 
there is at most one Boyer-Lindquist $r$-value, depending on the triple, to which all 
null geodesics neither crossing
$\mathcal{H}^+$, nor approaching $\mathcal{I}^+$, must necessarily asymptote
to towards the future.

\subsection{Red-shift effect}
\label{RSgeo}
An important stabilizing mechanism for the behaviour of waves near black hole event horizons
is what we shall here call the ``horizon-localized'' red-shift effect.  

Recall 
that this is the red-shift relating two observers $A$
and $B$, where $B=\varphi_\tau(A)$ for $\tau>0$, 
both crossing the event horizon.
In the geometric optics approximation, at horizon crossing time, the frequency of waves received by $B$ (measured with
respect to proper time) from
$A$  are damped (in comparison to the frequency measured by $A$) by a factor
exponentially decaying in $\tau$.
\[
\begin{picture}(0,0)%
\includegraphics{forred.pstex}%
\end{picture}%
\setlength{\unitlength}{1579sp}%
\begingroup\makeatletter\ifx\SetFigFont\undefined%
\gdef\SetFigFont#1#2#3#4#5{%
  \reset@font\fontsize{#1}{#2pt}%
  \fontfamily{#3}\fontseries{#4}\fontshape{#5}%
  \selectfont}%
\fi\endgroup%
\begin{picture}(7227,5497)(2314,-6368)
\put(3751,-3061){\makebox(0,0)[lb]{\smash{{\SetFigFont{10}{12.0}{\rmdefault}{\mddefault}{\updefault}{\color[rgb]{0,0,0}$\mathcal{H}^+$}%
}}}}
\put(3676,-4561){\makebox(0,0)[lb]{\smash{{\SetFigFont{10}{12.0}{\rmdefault}{\mddefault}{\updefault}{\color[rgb]{0,0,0}$\mathcal{H}^-$}%
}}}}
\put(7126,-1336){\makebox(0,0)[lb]{\smash{{\SetFigFont{10}{12.0}{\rmdefault}{\mddefault}{\updefault}{\color[rgb]{0,0,0}$A$}%
}}}}
\put(8626,-1411){\makebox(0,0)[lb]{\smash{{\SetFigFont{10}{12.0}{\rmdefault}{\mddefault}{\updefault}{\color[rgb]{0,0,0}$B$}%
}}}}
\end{picture}%

\]

The above effect depends in fact only on
the positivity of the \emph{surface gravity}, as was established for
non-extremal Kerr in Section~\ref{ergosec}. In the extremal case, we note
that the above horizon-crossing time red-shift degenerates.

\section{The energy method}
In this section we introduce the machinery of compatible currents. This is a robust and
invariant
approach to generating $L^2$-based identities, which can be exploited to establish links between 
the geometry of space-time and the behaviour of linear (and often even nonlinear)
waves.

\subsection{Compatible currents in physical space}
Given a Lorentzian metric $g$,
let $\Psi$ be sufficiently regular and satisfy
\[
\Box_g \Psi = F.
\]
We define
\[
{\bf T}_{\mu\nu}[\Psi]
\doteq \partial_\mu\Psi\partial_\nu\Psi -\frac12 g_{\mu\nu}g^{\alpha\beta}
\partial_\alpha\Psi \partial_\beta\Psi.
\]
\index{energy currents! $2$-currents! ${\bf T}_{\mu\nu}[\Psi]$ (energy-momentum tensor)}

Given a vector field $V^\mu$ and a function $w$,
we will define the currents
\[
{\bf J}^V_\mu[\Psi] = {\bf T}_{\mu\nu}[\Psi] V^\nu
\]
\index{energy currents! $1$-currents! ${\bf J}^V_\mu[\Psi]$ (energy $1$-current associated to vector field $V$)}
\[
{\bf J}^{V,w}_\mu[\Psi] = 
{\bf J}_\mu^V[\Psi]+\frac18w\partial_\mu (\Psi^2)-\frac18(\partial_\mu w)\Psi^2
\]
\index{energy currents! $1$-currents! ${\bf J}^{V,w}_\mu[\Psi]$ (modified energy $1$-current associated to vector field $V$,
function $w$)}
\[
{\bf K}^V[\Psi] ={\bf T}_{\mu\nu}[\Psi]\nabla^\mu V^\nu
\]
\index{energy currents! $0$-currents! ${\bf K}^V[\Psi]$ (energy $0$-current associated to vector field $V$)}
\[
{\bf K}^{V,w}[\Psi] = {\bf K}
^V[\Psi] -\frac18\Box_gw (\Psi^2) +\frac14w \nabla^\alpha\Psi\nabla_\alpha\Psi
\]
\index{energy currents! $0$-currents! ${\bf K}^{V,w}[\Psi]$ (modified energy $0$-current associated to vector field $V$,
function $w$)}
\[
\mathcal{E}^V[\Psi] = -FV^\nu \Psi_{,v}
\]
\[
\mathcal{E}^{V,w}[\Psi]= \mathcal{E}^V(\Psi)-\frac14w\Psi F
\]

Applying the divergence identity between two homologous spacelike
hypersurfaces $S_1$, $S_2$, bounding a region $\mathcal{B}$,
with $S_2$ in the future of $S_1$, we obtain 
\begin{equation}\label{eq:Not}
\int_{S_2}{\bf J}^V_\mu [\Psi]n^\mu_{S_2}+
\int_{\mathcal{B}} {\bf K}^V[\Psi] + \mathcal{E}^V[\Psi]
=\int_{S_1}{\bf J}^V_\mu [\Psi]n^\mu_{S_1},
\end{equation}
where $n_{S_i}$ denotes the future directed timelike unit normal.
A similar identity holds replacing $V$ by $V,w$.

\subsection{Noether's theorem}
The most celebrated use of $(\ref{eq:Not})$ is in the case where
$V$ is Killing, and $\Psi=\psi$ where $\psi$ satisfies $\Box_g\psi=0$.
One obtains then the conservation law
\begin{equation}
\label{Noethers}
\int_{S_2}{\bf J}^V_\mu [\psi]n^\mu_{S_2}
=\int_{S_1}{\bf J}^V_\mu [\psi]n^\mu_{S_1}.
\end{equation}
This is a version of \emph{Noether's theorem}.

\subsection{The dominant energy condition}
When $V$ is timelike, then one has moreover that ${\bf J}^V_\mu [\Psi]n_{S_i}^\mu$
is nonnegative definite, in fact 
\begin{equation}
\label{timelic}
{\bf J}^V_\mu [\Psi]n^\mu_{S}\ge c(S) \sum_{\a=0}^3 |\pa_\a\Psi|^2
\end{equation}
where $\pa_\a$ denote derivatives with respect to an arbitrary system of
local regular coordinates.
Thus, given a timelike Killing field, $(\ref{Noethers})$ and $(\ref{timelic})$
yield an a priori estimate for the $L^2$ norm on $S_2$ of all local derivatives of  solutions
$\Psi=\psi$ of the wave equation, in terms of a similar quantity evaluated on $S_1$.

One should not think, however, that this is the only use of identity $(\ref{eq:Not})$.
If $V$ is timelike, but not Killing, then one can estimate the bulk term 
\[
{\bf K}^V[\Psi] \le c(S) P{\bf J}^V_\mu[\Psi]n^\mu_S
\]
in terms of the boundary. For solutions of the wave equation, this 
allows one to prove that the left hand side of $(\ref{eq:Not})$, integrated over
$S_2(t)$
grows at most exponentially in an appropriate notion of time $t$. 
This is fundamental
in proving local existence type results.

\subsection{Virial identities}
Fundamental for the considerations of the present paper will be 
yet another way of viewing identity
$(\ref{eq:Not})$. While in Noether's theorem and in the proof
of local existence, one often considers the  bulk term ${\bf K}^V[\Psi]$
as an error term, $(\ref{eq:Not})$ is also very useful when one can show positivity
properties for the bulk term ${\bf K}^V[\Psi]$. Then, provided that
the boundary terms are also controlled, $(\ref{eq:Not})$ can be used 
to show the boundedness of a spacetime integral quantity. This can be thought
of as a weak statement of decay.
This is reminiscent of the use of the virial theorem.

\subsection{The hypersurfaces $\Sigma_\tau$}\label{sec:sigma}
Returning now to the Kerr geometry, we
 shall typically apply $(\ref{eq:Not})$ in regions bound between the event horizon,
 null infinity, and the leaves
 of a foliation associated to
hypersurfaces $\Sigma_\tau$ to be defined below.

Let $r_+<r_n<R_n<\infty$. The hypersurface $\Sigma_0$ is constructed as follows. Define the function
$f=f(r^*)$ according to 
$$
f(r^*)=\int_0^{r^*} \sqrt{1-\frac {M^2\Delta}{(r^2+a^2)^2}} 
$$
Then
\begin{align*}
&\Sigma_0=\{t+f(r^*)=0\},\qquad\qquad\qquad\qquad\qquad\hskip .5pc{\text{for}}\,\,\,r_+<r<r_n,\\
&\Sigma_0=\{t=-f(r^*_n)\},\qquad\qquad\qquad\qquad\hskip .5pc {\text{for}}\,\,\,r_n<r<R_n,\\
&\Sigma_0=\{t=f(r^*)-(f(r^*_n)+f(R^*_n))\},\qquad {\text{for}}\,\,\,R_n<r.
\end{align*} 

Now let $\varphi_\tau$ denote the $1$-parameter family of
diffeomorphisms generated by the Killing field  $T$  and define
the hypersurfaces $\Sigma_\tau:=\varphi_\tau(\Sigma_0)$.
(Equivalently, we replace everywhere $t$ by $t-\tau$ in the definition of $\Sigma_0$ above.)
\[
\begin{picture}(0,0)%
\includegraphics{forsig.pstex}%
\end{picture}%
\setlength{\unitlength}{1579sp}%
\begingroup\makeatletter\ifx\SetFigFont\undefined%
\gdef\SetFigFont#1#2#3#4#5{%
  \reset@font\fontsize{#1}{#2pt}%
  \fontfamily{#3}\fontseries{#4}\fontshape{#5}%
  \selectfont}%
\fi\endgroup%
\begin{picture}(12546,5204)(-424,-6368)
\put(3676,-4561){\makebox(0,0)[lb]{\smash{{\SetFigFont{10}{12.0}{\rmdefault}{\mddefault}{\updefault}{\color[rgb]{0,0,0}$\mathcal{H}^-$}%
}}}}
\put(-224,-2386){\makebox(0,0)[lb]{\smash{{\SetFigFont{10}{12.0}{\rmdefault}{\mddefault}{\updefault}{\color[rgb]{0,0,0}$\mathcal{I}^+$}%
}}}}
\put(11326,-2236){\makebox(0,0)[lb]{\smash{{\SetFigFont{10}{12.0}{\rmdefault}{\mddefault}{\updefault}{\color[rgb]{0,0,0}$\mathcal{I}^+$}%
}}}}
\put(1201,-3136){\makebox(0,0)[lb]{\smash{{\SetFigFont{10}{12.0}{\rmdefault}{\mddefault}{\updefault}{\color[rgb]{0,0,0}$\Sigma_0$}%
}}}}
\put(2851,-2311){\makebox(0,0)[lb]{\smash{{\SetFigFont{10}{12.0}{\rmdefault}{\mddefault}{\updefault}{\color[rgb]{0,0,0}$\mathcal{H}^+$}%
}}}}
\put(1426,-1711){\makebox(0,0)[lb]{\smash{{\SetFigFont{10}{12.0}{\rmdefault}{\mddefault}{\updefault}{\color[rgb]{0,0,0}$\Sigma_\tau$}%
}}}}
\end{picture}%

\]
We may now denote by $\tau$ the function on $J^+(\Sigma_0)$ which is
constant on the level sets $\Sigma_\tau$.
The hypersurfaces $\Sigma_\tau$ are space-like: 
$$
g(\nab\tau,\nab\tau)\le -\frac {M^2-a^2}{2\rho^2}.
$$
Near null infinity 
$$
\tau=t-r^*+ O(M^2 r^{-2}),\qquad r\to \infty
$$
while near the horizon
$$
\tau=t^*+ O(M^2),\qquad r\to r_+.
$$

By the coarea formula, we easily see that, denoting by
$\mathcal{B}$ the region bounded by $\Sigma_{\tau_1}$ and $\Sigma_{\tau_2}$
and the $\mathcal{H}^+_{(\tau_1,\tau_2)}$, $\mathcal{I}^+_{(\tau_1,\tau_2)}$,
we have that for an arbitrary function $F$,
\[
\int_{\mathcal{B}} F\sim \int_{\tau_1}^{\tau_2}\left(\int_{\Sigma_\tau} F \right) d\tau
\]
When the measure of integration is not denoted, the volume form is always meant.
We will use this relation  without comment in what follows.

\subsection{Non-degenerate derivatives}\label{sec:nder}
Finally, since energy estimates will control various combinations of derivatives,
it is useful to  have suggestive notations for these.

First, we shall denote $\nabb$ the induced gradient on the $r=$constant spheres.

We shall use the notation $|\pa\psi|^2$ to denote the type of
quantity controlled by the flux of ${\bf J}^{N}$ through $\Sigma_\tau$ for $\tau\ge 0$,
where $N$ is a $\varphi_t$-invariant vector field
which is strictly timelike and $N=T$ for large $r$ (for instance, we may take the
vector field of Proposition~\ref{specialises..}).
Explicitly, we may define
\[
|\pa \psi|^2=(\partial_r\psi)^2 +(\partial_{t^*}\psi)^2 +|\nabb\psi|^2
\]
for $r\le r_n$, where the coordinate derivatives are interpreted in Kerr-star coordinates
and
\[
|\pa\psi|^2=\alpha(r)(\partial_{t^*}\psi)^2+(\partial_r\psi)^2+|\nabb\psi|^2
\]
for $r\ge r_n$, where $\alpha(r)$ is a positive function which vanishes as $r\to \infty$.

Finally, let us use the notation $|\overline\pa\psi|^2$ to denote the derivatives
appearing in the flux of an $N$ as above on the null hypersurfaces $\mathcal{H}^+$ 
and $\mathcal{I}^+$, respectively. These are all tangential derivatives.
We may write explicitly
\[
|\overline\pa\psi|^2= |L\psi|^2 +|\nabb\psi|^2
\]
on $\mathcal{H}^+$
and 
\[
|\overline\pa\psi|^2 = |T\psi|^2+|\nabb\psi|^2.
\]
on $\mathcal{I}^+$.

\section{Schwarzschild: the three sources and components of decay}\label{sec:three}
In this section, we describe three ingredients of the recipe for the robust derivation 
of the quantitative decay rates for solutions of the wave equation on Schwarzschild black hole
space-times. These are directly related to the geometric phenomena studied in
Section~\ref{phenomena}.
We will see in the following section how these
ingredients are used to show `integrated local energy decay'
and, together with an extra ingredient of a much more general type, 
actual decay.

The Kerr problem to be addressed later will require revisiting the strategy and, among
other things, microlocalising the 
components of this section in a manner strongly attached to the Kerr geometry. 
See Section~\ref{revisit}.

\subsection{The conserved energy estimate}
\label{Conservative}
The Schwarzschild metric is stationary so we
may apply Noether's theorem $(\ref{Noethers})$
for $V=T$, where $T$ is the `stationary' Killing field.
Specifically, we apply $(\ref{Noethers})$ to
the space-time region bounded by two space-like hypersurfaces $\Sigma_{\tau_1}$, $\Sigma_{\tau_2}$,
part of the event horizon ${\mathcal H}^+_{(\tau_1,\tau_2)}$ and null infinity ${\mathcal I}^+_{(\tau_1,\tau_2)}$,
i.e.~we take $S_2={\mathcal H}^+_{(\tau_1,\tau_2)}\cup \Sigma_{\tau_2}\cup
{\mathcal I}^+_{(\tau_1,\tau_2)}$, $S_1=\Sigma_{\tau_1}$. We obtain:
\begin{equation}
\label{Nomos}
\int_{\Sigma_{\tau_2}}{\bf J}^T_\mu [\Psi]n^\mu_{\Sigma_\tau}+ 
\int_{{\mathcal H}^+_{(\tau_1,\tau_2)}} {\bf J}^T_\mu [\Psi]n^\mu_{{\mathcal H}^+}
+ \int_{{\mathcal I}^+_{(\tau_1,\tau_2)}} {\bf J}^T_\mu [\Psi]n^\mu_{{\mathcal I}^+}
=\int_{\Sigma_{\tau_1}}{\bf J}^T_\mu [\Psi]n^\mu_{\Sigma_\tau}.
\end{equation}

We shall see in Section~\ref{supersec} that for the general Kerr case, the above
identity is a priori of limited use in view of the failure of $T$ to be timelike
in the exterior. 
Already, however,  in the case of the Schwarzschild solution with $a=0$, 
the vectorfield $T$,
though timelike in the domain of 
outer communication $\mathcal{R}$, becomes null on the horizon 
$\mathcal{H}^+=\{r=2M\}$. Thus, while all integrands in $(\ref{Nomos})$ are nonnegative
and thus each integral taken individually is controlled by the right hand side,  
the first two integrands degenerate with respect to what one would have were $T$ timelike.
For the first integrand, we thus only have the degenerate lower bound
\begin{equation}
\label{3eftila1}
{\bf J}^T_\mu [\Psi]n^\mu_{\Sigma_\tau}(r)\ge c(r)|\pa\Psi|^2,\qquad
\end{equation}
where $c(r)=0$ for $r=2M$,
and similarly, for the second integrand, we only have
\begin{equation}
\label{3eftila2}
{\bf J}^T_\mu [\Psi]n^\mu_{{\mathcal H}^+}\ge c |T\Psi|^2,
\end{equation}
i.e.~the right hand side is missing\footnote{cf.~the `expected' $|\overline\pa\Psi|^2$}
 the term
$|\nabb\Psi|^2$.

From $(\ref{Nomos})$, we thus obtain the estimates
\begin{equation}
\label{onspacelike}
\int_{\Sigma_\tau} c(r)|\partial\psi|^2 \le \int_{\Sigma_0} {\bf J}^T_\mu[\Psi]n^\mu,
\end{equation}
\begin{equation}
\label{onhori}
\int_{\mathcal{H}^+}|T\psi|^2 \le \int_{\Sigma_0} {\bf J}^T_\mu[\Psi]n^\mu,
\end{equation}
\begin{equation}
\label{onnoufou}
\int_{\mathcal{I}^+}|\overline\partial \psi|^2 \le \int_{\Sigma_0} {\bf J}^T_\mu[\Psi]n^\mu.
\end{equation}

 The standard approach to proving pointwise boundedness of solutions $\psi$
 of $(\ref{eq:box})$ is by proving
 $L^2$-estimates for \emph{all} derivatives of $\psi$ up to sufficiently high order. 
 Thus, even in the Schwarzschild case, the above degeneration
 of $(\ref{onspacelike})$ or $(\ref{onhori})$ presents significant complications
 for proving stability \emph{at the horizon}.
See Section~\ref{aside}.

\subsection{Virial estimates and trapping}
\label{virialss}
For the  discussion to follow here, it is useful to compare with Minkowski space.
Defining a current ${\bf J}^{X,w}[\Psi]$ where 
\begin{equation}
\label{oftheformc}
X= f(r)\partial_r
\end{equation}
for specific choices of $f(r)$ and $w(r)$, 
one has that for suitable regions $\mathcal{B}$
\begin{equation}
\label{lowerbnd0}
\int_{\mathcal{B}}{\bf K}^{X,w}[\Psi] \ge C_R\int_{\mathcal{B}\cap\{r\le R\}} 
(|\partial \Psi|^2+ |\Psi|^2),
\end{equation}
and thus, for solutions $\psi$ of the wave equation, identity $(\ref{eq:Not})$ yields
\begin{equation}
\label{yields}
\int_{\mathcal{B}\cap\{r\le R\}} (|\partial \psi|^2+ |\psi|^2)  \le   C \left| \int_{S_1}{\bf J}^{X,w}_\mu [\psi] n^\mu\right| +
C\left|\int_{S_2}{\bf J}^{X,w}_\mu [\psi] n^\mu \right|.
\end{equation}

The above relation together with energy conservation (see Section~\ref{ILED}) will already
yield a weak statement of decay. This for instance already excludes
stationary solutions. 

It turns out that for more general metrics, trapped null geodesics are a ``high frequency''
obstruction
for the existence of currents 
satisfying $(\ref{lowerbnd0})$.  In particular, in the Schwarzschild case, in view
of the considerations of Section~\ref{TNG}, one can show that there {\bf \emph{cannot}} 
exist a current ${\bf J}^{X,w}$
associated to a vector field $X$ of the form $(\ref{oftheformc})$
satisfying $(\ref{lowerbnd0})$ near $r=3M$. Cf.~classical results due to Ralston~\cite{ralston}.

Remarkably, it turns out that this trapping obstruction is the only obstruction
for the validity of $(\ref{lowerbnd0})$ away from the horizon, 
and one can construct a current
${\bf J}^{X,w}$ where $X$ is of the form $(\ref{oftheformc})$ satisfying 
\begin{equation}
\label{lowerbnd}
\int_{\mathcal{B}} {\bf K}^{X,w}[\Psi] \ge C_{R_1,R_2} \int_{\mathcal{B}\cap\{R_1\le r\le R_2\}  }((r-3M)^2 |\partial \Psi|^2+(\partial_r\Psi)^2+|\Psi|^2),
\end{equation}
for all $2M<R_1< R_2<\infty$.
That is to say, apart from the horizon and infinity, the control degenerates precisely on
the \emph{photon sphere} $r=3M$.

The vector field $X$ itself vanishes at $r=3M$. The construction of the
energy current ${\bf J}^{X,w}$ is quite delicate, and the original constructions in fact
required spherical harmonic decompositions, a requirement that was later overcome.
See~\cite{bluester, dr3, dr5, bluesof} and subsequent papers.

Applying now $(\ref{eq:Not})$  in the region
described in the previous section to solution $\psi$ of the wave equation $(\ref{eq:box})$,
one obtains
\begin{align}
\nonumber
\label{viresti}
 \int_{\tau_1}^{\tau_2}\left( \int_{\Sigma_\tau\cap \{R_1\le r\le R_2\}  }((r-3M)^2 |\partial \psi|^2+(\partial_r\psi)^2+|\psi|^2)\right)d\tau\\
\le C_{R_1,R_2} \left(\int_{\Sigma_{\tau_1}}+\int_{\Sigma_{\tau_2}}+\int_{\mathcal{H}^+_{(\tau_1,\tau_2)}}
+\int_{\mathcal{I}^+_{(\tau_1,\tau_2)}}\right)
|{\bf J}^{X,w}_\mu[\psi]n^\mu|.
\end{align}

\subsection{The red-shift estimate}
\label{R-S-E}
To properly understand the behaviour of waves on the horizon, we have to take account
of the red-shift effect described in Section~\ref{RSgeo}.
In the present section, we define a timelike vector field $N$\index{vector fields! 
$N$ (`red-shift' vector field)} 
whose multiplier 
current ${\bf J}^N$ captures precisely this effect.
This type of current was first discovered in the Schwarzschild case in~\cite{dr3}, but it turns
out, as was shown in~\cite{jnotes}, that the existence of this current is in fact a general property of stationary
black hole spacetimes
with Killing horizons of positive surface gravity.

In anticipation of later sections, we might as well at the onset give a statement
applicable to the
entire Kerr family, as the Schwarzschild case is no simpler.
Theorem~7.1 of~\cite{jnotes}, specialised then to Kerr, yields
\begin{proposition}
\label{specialises..}
Let $|a|<M$, $g_{a, M}$ be the Kerr metric and $\mathcal{R}$, etc., be as before.
There exist positive constants $b=b(a,M)$ and $B=B(a,M)$, 
parameters $r_1(a,M)>r_0(a,M)>r_+$, and
\index{fixed parameters! $r$-parameters! $r_0$ (associated to $N$)}
\index{fixed parameters! $r$-parameters! $r_1$ (associated to $N$)}
a $\varphi_t$-invariant timelike vector field $N=N(a,M)$ 
on $\mathcal{R}$ such that
\begin{enumerate}
\item
\label{fir}
${\bf K}^N[\Psi] \ge b\, {\bf J}^N_\mu [\Psi]  N^\mu$ for $r\le r_0$ 
\item 
\label{item2}
$-{\bf K}^N[\Psi] \le B\, {\bf J}^N_\mu [\Psi] N^\mu$, for $r\ge r_0$
\item
\label{lastitem}
$N= T$ for $r\ge r_1$,
\end{enumerate}
where the currents are defined with respect to $g_{M,a}$.
\end{proposition}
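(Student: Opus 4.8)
The plan is to construct the red-shift vector field $N$ explicitly in a neighbourhood of the horizon and then extend it globally, checking the three claimed inequalities along the way. First I would work in a coordinate system regular at $\mathcal{H}^+$ — say Kerr-star coordinates $(t^*,r,\theta,\phi^*)$ — and recall the crucial structural fact established in Section~\ref{ergosec}: the null generator $L = T + \frac{a}{r_+^2+a^2}\Phi$ of the Killing horizon satisfies $\nabla_L L = \kappa L$ with $\kappa>0$. The key computation is that, \emph{at} $r=r_+$, the zeroth-order current ${\bf K}^N[\Psi] = {\bf T}_{\mu\nu}[\Psi]\nabla^\mu N^\nu$ picks up a good positive contribution from the deformation tensor of any $N$ which is a suitable modification of $L$. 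Concretely I would take $N = N(a,M)$ to equal, near the horizon, something of the form $y_1(r) L + y_2(r) \underline{L}$ where $\underline{L}$ is a transversal null (or timelike) vector field, with $y_1,y_2>0$ chosen so that $N$ is timelike and its components and those of $\nabla N$ are regular up to $r=r_+$. The point of the red-shift is that $\,^{(N)}\pi$ restricted to the horizon, when contracted with ${\bf T}_{\mu\nu}[\Psi]$, is controlled below by $\kappa$ times a full non-degenerate energy density ${\bf J}^N_\mu[\Psi]N^\mu$, including the $|\nabb\Psi|^2$ and transversal-derivative terms missing from ${\bf J}^T$; this is exactly the phenomenon that makes $\kappa>0$ indispensable.

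Second, having secured positivity of ${\bf K}^N[\Psi]$ \emph{on} $\{r=r_+\}$ with a definite constant, I would use continuity: since ${\bf K}^N[\Psi] \ge \kappa_0 \,{\bf J}^N_\mu[\Psi]N^\mu$ at $r=r_+$ uniformly over the compact sphere, and both sides depend continuously on $r$, the inequality ${\bf K}^N[\Psi]\ge b\,{\bf J}^N_\mu[\Psi]N^\mu$ persists for some $b=b(a,M)>0$ on a slab $r_+\le r\le r_0$ for $r_0$ close enough to $r_+$. This gives item~(\ref{fir}). It is here that one must be slightly careful: the estimate must hold for \emph{all} regular $\Psi$, i.e.\ it is an inequality between quadratic forms in $(\partial\Psi)$ pointwise, so one reduces it to a statement about the symbol — positivity of a certain matrix built from $\,^{(N)}\pi$, $g$, and the coframe — and then invokes compactness of the sphere cross the slab.

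Third, for $r\ge r_1$ I would simply \emph{set} $N=T$, which is Killing, so ${\bf K}^N[\Psi]=0$ identically there, trivially giving item~(\ref{lastitem}) and the bound in item~(\ref{item2}) in that range. On the intermediate region $r_0\le r\le r_1$ I would interpolate: choose $N$ to be a fixed $\varphi_t$-invariant timelike vector field agreeing with the near-horizon $N$ at $r_0$ and with $T$ at $r_1$, for instance $N = \zeta(r) N_{\rm hor} + (1-\zeta(r)) T$ with a smooth cutoff $\zeta$, or more simply $N = T + \zeta(r) X_0$ for a suitable compactly-supported correction $X_0$. On this \emph{compact} region (in $r$) cross the compact sphere, the ratio $-{\bf K}^N[\Psi]/{\bf J}^N_\mu[\Psi]N^\mu$ — again a ratio of continuous quadratic forms, the denominator strictly positive by timelikeness of $N$ — is bounded above by some $B=B(a,M)$; combining with the vanishing for $r\ge r_1$ gives item~(\ref{item2}) for all $r\ge r_0$. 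One must check that the bulk term does not blow up, which it cannot since everything is smooth and $r$ ranges over a compact set bounded away from both $r_+$ and $\infty$.

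The main obstacle I expect is the first step: showing that a correctly chosen $N$ makes the \emph{full} density ${\bf J}^N_\mu[\Psi]N^\mu$ — not merely the degenerate $|T\Psi|^2$-type quantity — appear on the right-hand side of ${\bf K}^N[\Psi]\ge b\,{\bf J}^N_\mu[\Psi]N^\mu$ at the horizon. This forces $N$ to have a genuinely transversal component at $r=r_+$ (otherwise the estimate degenerates exactly as ${\bf J}^T$ does), and one must verify that the positive terms generated by $\nabla N$ — which are proportional to $\kappa$ times squares of \emph{all} components of $d\Psi$ — genuinely dominate the indefinite cross terms; algebraically this is where the positivity of the surface gravity enters decisively, and it is the content of Theorem~7.1 of~\cite{jnotes} that this is always possible for a Killing horizon with $\kappa>0$. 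I would therefore organise the proof so that, once the explicit near-horizon ansatz for $N$ is written down and $\,^{(N)}\pi$ computed in regular coordinates, the positivity reduces to a finite-dimensional linear-algebra lemma, with the remaining two items following by the soft continuity-and-compactness arguments sketched above.
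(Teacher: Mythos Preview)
The paper does not itself prove this proposition; it simply invokes Theorem~7.1 of~\cite{jnotes} specialised to Kerr. Your sketch is precisely a reconstruction of that argument --- construct $N$ near $\mathcal{H}^+$ with a genuine transversal null component so that positivity of the surface gravity $\kappa$ yields the coercive bound ${\bf K}^N\ge b\,{\bf J}^N_\mu N^\mu$ at $r=r_+$ as a pointwise inequality of quadratic forms, propagate by continuity to $r\le r_0$, set $N=T$ for $r\ge r_1$, and handle the compact intermediate shell by a soft compactness bound --- and is correct.
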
 

Property~\ref{fir} above can be thought to embody the exponential damping property
discussed in Section~\ref{RSgeo}.

Applying $(\ref{eq:Not})$ with current ${\bf J}^N$ in the region of Section~\ref{Conservative},
we obtain using the above `red-shift' Proposition~\ref{specialises..} the following estimate for solutions
of the homogeneous wave equation $\Box_g \psi=0$ on Kerr:
\begin{align}\label{eq:rshift}
  \int_{\Sigma_{\tau_2}} |\pa\psi|^2&+
\int_{{\mathcal H}^+_{(\tau_1,\tau_2)}} |\overline\pa\psi|^2+\int_{{\mathcal I}^+_{(\tau_1,\tau_2)}} |\overline\pa\psi|^2\\ &+
\int_{\tau_1}^{\tau_2} \int_{\Sigma_\tau\cap\{r\le r++s_-\}} |\pa\psi|^2\le 
C\left ( \int_{\Sigma_{\tau_1}} |\pa\psi|^2+\int_{\tau_1}^{\tau_2} \int_{\Sigma_\tau\cap\{r_++s_-\le r\le r++s_+\}} |\pa\psi|^2\right).
\notag
\end{align}
Here $r_++s_-<r_0$ and $r_++s_+>r_1$. The red-shift effect is responsible for the horizon and the space-time term on the 
left hand-side of \eqref{eq:rshift}. It is generally effective only in a small neighborhood of the horizon. The term at null infinity
appears as a consequence of $N=T$ being time-like near ${\mathcal I}^+$.

We note that the above estimate degenerates in the extremal case $|a|=M$,
in accordance with the comments at the end of Section~\ref{RSgeo}. This
is in fact the fundamental new
 difficulty of the extremal problem. See~\cite{aretakis}.

The other manifestation of the red-shift effect is the following commutation statement
\begin{proposition}
\label{rsc}
Let $g=g_{M,a}$ and $N$ be as in Proposition~\ref{specialises..}.  Then,
on $\mathcal{H}^+$ with a distinguished generator $L$, extending $L$
to a translation invariant standard null frame $E_1,E_2, L, Y$, then
for all $k\ge 0$
\[
\Box_g(Y^k\Psi)=\kappa_k Y^{k+1}\Psi + \sum_{|{\bf m}|\le k+1, m_4\le k} c_{{\bf m}} 
E_1^{m_1}E_2^{m_2}L^{m_3}Y^{m_4}\Psi
\]
where $\kappa_k>0$.
\end{proposition}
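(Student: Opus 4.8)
The plan is to establish the commutation identity in Proposition~\ref{rsc} by an inductive computation of $\Box_g(Y^k\Psi)$ restricted to $\mathcal H^+$, exploiting the structure of the wave operator in the translation-invariant null frame $E_1,E_2,L,Y$ attached to the horizon generator. First I would write $\Box_g$ in this frame: since $L$ is null, normal to $\mathcal H^+$, and $\nabla_L L=\kappa L$ by \eqref{eqforK}, while $Y$ is a transversal null vector and $E_1,E_2$ span the good null directions, the principal part of $\Box_g$ takes the schematic form $\Box_g=2a^{LY}LY+a^{AB}E_AE_B+(\text{lower order in }\pa)$, with $a^{LY}$ a nonvanishing smooth function and no $Y^2$ term (the transversal derivative appears only once, contracted against $L$). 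The zeroth-order-in-derivatives coefficients of the first-order terms encode the connection of the frame, and in particular the coefficient of the pure $Y$ term is governed by $\tr\chib$ and the surface gravity; its positivity on $\mathcal H^+$ is exactly the analytic imprint of the red-shift, the same positivity that drives Property~(\ref{fir}) of Proposition~\ref{specialises..}.

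Next I would run the induction on $k$. The base case $k=0$ is just the above expansion of $\Box_g\Psi$ itself, which already has the advertised form with $\kappa_0>0$ equal to (a positive multiple of) $\kappa$. For the inductive step I would compute the commutator $[\Box_g,Y]$ acting on $Y^{k}\Psi$. Schematically $[\Box_g,Y]=\sum b^\mu E_\mu\pa_\nu+(\text{first order})$, and the crucial point is to isolate the term proportional to $Y^2$ produced when $Y$ hits the transversal factor in the principal part: that term comes with coefficient $(2k+1)$ times (a positive multiple of) the surface gravity plus terms that vanish on $\mathcal H^+$ because $Y$ differentiates $a^{LY}$ and the connection coefficients in directions that are themselves controlled. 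Hence $\kappa_{k+1}=\kappa_k+(\text{positive multiple of }\kappa)>0$, and all newly generated terms have $m_4\le k$ total transversal derivatives and $|{\bf m}|\le k+2$, so they fit into the claimed sum with $|{\bf m}|\le (k+1)+1$, $m_4\le k+1$. The translation invariance of the frame (built by Lie transport along $L$, or equivalently $\varphi_t$-invariance together with the axisymmetry-adapted frame) is what guarantees all structure functions are smooth bounded functions on $\mathcal H^+$ with no hidden degeneration, so the coefficients $c_{\bf m}$ are genuine smooth functions.

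The main obstacle I anticipate is purely bookkeeping: correctly extracting, at each inductive step, the precise coefficient of the $Y^{k+1}\Psi$ term and showing its \emph{monotone positivity}, i.e. that each application of $Y$ strictly increases $\kappa_k$ rather than merely preserving positivity. This requires pinning down the sign of $\nabla_Y L$'s $L$-component on $\mathcal H^+$ (equivalently, that the expansion and the relevant connection coefficient conspire with the correct sign), which is where one must invoke $\kappa>0$ from \eqref{eutuxws} and the fact that one is on a Killing horizon of positive surface gravity rather than, say, the extremal case $|a|=M$ where $\kappa=0$ and the whole mechanism collapses. A secondary nuisance is ensuring that the many lower-order terms generated really do have $m_4\le k$ — this follows because the only source of a new transversal derivative is the single $LY$ in the principal symbol, and $Y$ commuted past $E_A$ or $L$ produces at worst first-order terms with no increase in $m_4$ beyond what the chain rule already accounts for. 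Once the sign is secured, the rest is a routine, if lengthy, frame calculation which I would not carry out in detail here.
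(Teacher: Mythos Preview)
The paper does not actually prove Proposition~\ref{rsc}; it is stated without proof as a result imported from the earlier work~\cite{dr6} and~\cite{jnotes} (see in particular the discussion of red-shift commutation in Section~3.3.3 of the latter). Your outline---express $\Box_g$ on $\mathcal H^+$ in the null frame $E_1,E_2,L,Y$, observe that the principal part contains $LY$ but no $Y^2$, identify the positive coefficient of the pure $Y$ term as arising from the surface gravity via $\nabla_LL=\kappa L$, and then induct on $k$ by computing $[\Box_g,Y]$---is precisely the argument given in those references, so in substance there is nothing to contrast.

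One genuine slip: your base case $k=0$ does not work as you describe it. For $k=0$ the right-hand side of the identity has total order $|\mathbf m|\le 1$, but $\Box_g\Psi$ is of order two, so the formula cannot hold for arbitrary $\Psi$ at $k=0$. The identity is really a statement about $[\Box_g,Y^k]$ acting on solutions (so that the second-order principal part can be traded away using the equation), and the honest induction starts at $k=1$, where one computes $\Box_g(Y\Psi)-Y\Box_g\Psi$ on $\mathcal H^+$ and reads off $\kappa_1$ as a positive multiple of $\kappa$. Relatedly, the inductive step requires applying $Y$ to the formula for $\Box_g(Y^k\Psi)$, which means you need the identity (with function coefficients $\kappa_k$, $c_{\mathbf m}$) in a \emph{neighbourhood} of $\mathcal H^+$, not merely on it; your invocation of the translation-invariant extension of the frame is exactly what makes this legitimate, but you should be explicit that the induction hypothesis is the off-horizon version, with positivity of $\kappa_k$ asserted only on $\mathcal H^+$ itself.
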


In the Schwarzschild case, $Y$ and $T$ together will provide around every point
a timelike commutator with good properties. This is important for proving pointwise
bounds. These commutations will  also suffice in the $|a|\ll M$ case. In the general Kerr case, 
however, besides $Y$ and $T$, we shall also need to
commute with the axisymmetric Killing field $\Phi$ to obtain pointwise bounds, exploiting
the fact that the span of $T$ and $\Phi$ is timelike
except at the horizon, where the span of $T, \Phi, Y$ again contains a timelike direction.

\section{Schwarzschild: Putting it all together}\label{putting}
We now proceed to show how the above components can be used to show
decay type statements, beginning with integrated local energy decay.

\subsection{Integrated local energy decay}
\label{ILED}
The choice of the virial current ${\bf J}^{X,w}$ in Section~\ref{virialss}
is such that the boundary
terms 
\begin{equation}
\label{whattodo}
\left(\int_{\Sigma_{\tau_1}}+\int_{\Sigma_{\tau_2}}+\int_{\mathcal{H}^+_{(\tau_1,\tau_2)}}
+\int_{\mathcal{I}^+_{(\tau_1,\tau_2)}}\right)
|{\bf J}^{X,w}_\mu[\psi]n^\mu|
\end{equation}
can be controlled (with the help also of a Hardy inequality)
by
\[ 
\left( \int_{\Sigma_{\tau_1}}+\int_{\Sigma_{\tau_2}}+\int_{\mathcal{H}^+_{(\tau_1,\tau_2)}}
+\int_{\mathcal{I}^+_{(\tau_1,\tau_2)}}\right)
{\bf J}^{T}_\mu[\psi]n^\mu.
\]
Putting together $(\ref{viresti})$ and the conservation of energy
expressed by $(\ref{Nomos})$ we thus obtain
\begin{align}
\label{finaledw}
\nonumber
 \int_{\tau_1}^{\tau_2}&\left( \int_{\Sigma_\tau\cap \{R_1\le r\le R_2\}  }((r-3M)^2 |\partial \psi|^2+(\partial_r\psi)^2+|\psi|^2)\right)d\tau\\
&\le C_{R_1,R_2}\int_{\Sigma_{\tau_1}} {\bf J}^T_\mu [\psi]n^\mu.
\end{align}

From this statement, one can retrieve the usual integrated local decay (away from
the horizon) with
a ``loss of a derivative'', namely:
\begin{equation}
\label{apwleia}
\int_{\tau_1}^{\tau_2} \int_{\Sigma_t\cap \{R_1\le r\le R_2\}}
\left (|\pa\psi|^2+|\psi|^2\right)\le C_{R_1,R_2} \int_{\Sigma_{t_1}}( {\bf J}^T_\mu[\psi] 
+{\bf J}^T_\mu[T\psi]) n^\mu_{\Sigma_t}
\end{equation}
or alternatively
\begin{equation}
\label{oralt}
\int_{\tau_1}^{\tau_2} \int_{\Sigma_t\cap \{R_1\le r\le R_2\}}
\left (|\pa\psi|^2+|\psi|^2\right)\le C_{R_1,R_2} \int_{\Sigma_{t_1}}( {\bf J}^T_\mu[\psi] 
+\sum_{i=1}^3{\bf J}^T_\mu[\Omega_i\psi]) n^\mu_{\Sigma_t}.
\end{equation}
It is perhaps worth noting that if $P_\ell\psi$ denotes the projection of $\psi$ to the
space of spherical harmonics with number $\ell$, 
the second term on the right
hand side of $(\ref{oralt})$ can be dominated by the first. Thus, we obtain 
precisely $(\ref{finaledw})$
without the degeneration at $r=3M$. Fixed spherical harmonics thus do not see the trapping
obstruction.

Finally,  estimate $(\ref{finaledw})$ together with the red-shift estimate $(\ref{eq:rshift})$
yields the non-degenerate (cf.~the degenerate $(\ref{onspacelike})$) 
energy boundedness statement
\begin{equation}
\label{actuallybond}
\int_{\Sigma_{\tau_2}}|\pa\psi|^2\le C \int_{\Sigma_{\tau_1}}|\pa\psi|^2
\end{equation}
and
\begin{align}
\label{TRUE0}
\nonumber
 \int_{\tau_1}^{\tau_2}&\left( \int_{\Sigma_\tau\cap \{2M\le r\le R_2\}  }((r-3M)^2 |\partial \Psi|^2+(\partial_r\Psi)^2+|\Psi|^2)\right)d\tau\\
&\le C_{R_2}\int_{\Sigma_{\tau_1}}|\pa\psi|^2,
\end{align}
\begin{align}
\label{TRUE}
\nonumber
 \int_{\tau_1}^{\tau_2}&\left( \int_{\Sigma_\tau\cap \{2M\le r\le R_2\}  }(|\partial \psi|^2+
 (\partial_r\psi)^2+|\psi|^2)\right)d\tau\\
&\le C_{R_2}\int_{\Sigma_{\tau_1}}(|\pa(T\psi)|^2+|\pa\psi|^2).
\end{align}
The latter statement is a true integrated local energy decay,
without unnecessary degeneration at the event horizon.

\subsection{Aside: boundedness without the virial identity}
\label{aside}
It is a nice exercise for the reader to see that $(\ref{actuallybond})$ can be derived
using only the red-shift estimate $(\ref{eq:rshift})$ and conservation of energy $(\ref{Nomos})$, i.e.~without appeal to
$(\ref{finaledw})$.

Together with the red-shift commutation Proposition~\ref{rsc}, 
estimate $(\ref{actuallybond})$
allows one to prove uniform pointwise boundedness for $\psi$ and all its derivatives
to arbitrary order. See~\cite{dr6}. This in particular already
retrieves (and strengthens) the celebrated
Kay--Wald Theorem stated in the introduction.

\subsection{A final ingredient: estimates at null infinity}
The final ingredient for the statement of decay is a family of estimates 
derived in the neighborhood of future 
null infinity ${\mathcal I}^+$.  These estimates have been 
introduced in \cite{icmp} and in a way serve as a replacement of the 
older and less flexible approach involving the current ${\bf J}^Z$ 
with a vectorfield $Z$ constructed as an analog
of the conformal vectorfield $Z=(t^2+r^2) \frac {\pa}{\pa t} +2tr\frac {\pa}{\pa r}$ of Minkowski space.

The family of estimates is based on the application of the current ${\bf J}^V$ with a vectorfield 
$V=r^p \frac {\pa}{\pa v}$ composed from a power of the Schwarzschild $r$-coordinate and the 
null coordinate vector field $\frac {\pa}{\pa v}=\frac{\pa}{\pa t}+\frac {\pa}{\pa r^*}$.
Parameter $p$ is allowed to vary in the range $0\le p\le 2$. 
The estimates are cleaner when $\Sigma_\tau$ is in fact exactly null in $r\ge R_3$ for some
sufficiently large $R_3$.
The resulting estimate
for solutions $\psi$ of the wave equation then has the form
\begin{align}\label{eq:newmeth}
\int_{\Sigma_{\tau_2}\cap \{r\ge R_3\}} &r^{p-2} |\pa_v(r\psi)|^2 + \int_{\tau_1}^{\tau_2}
\int_{\Sigma_{\tau}\cap \{r\ge R_3\}} \left(p\,r^{p-3} |\pa_v(r\psi)|^2+(2-p)r^{p-1} |\nabb \psi|^2\right)\\ &\le
C \left (\int_{\Sigma_{\tau_1}\cap \{r\ge R_3\}} r^{p-2} |\pa_v(r\psi)|^2+ \int_{\tau_1}^{\tau_2}
\int_{\Sigma_{\tau}\cap \{R_2\le r\le R_3\}}
|\pa\psi|^2+|\psi|^2 \right),\notag
\end{align} 
and holds with $R_2<R_3$.

\subsection{Decay of the energy flux and pointwise decay}
We begin with the case of Minkowski space, where
already,  the $p$-hierarchy
of estimates $(\ref{eq:newmeth})$, coupled with the statement of 
integrated local energy decay $(\ref{yields})$ and the energy conservation associated
$\partial_t$, provides a new approach 
to the problem of robust decay.

Applying $(\ref{eq:newmeth})$ first for $p=2$, and using the integrated
decay estimate to control the second term on the right hand side, one may
extract (by the pigeonhole principle) a suitable sequence of hypersurfaces
$\Sigma_{\tau_i}$  for which
the first term on the right hand side corresponding to the $p=1$
estimate is bounded with $\tau_i$ in the role of $\tau_1$.
Repeating this argument and using also conservation of the usual energy,
one obtains easily:
\begin{equation}\label{eq:endecay}
\int_{\Sigma_\tau} |\pa\psi|^2\le \frac C{\tau^2} \int_{\Sigma_0} r^2 |\pa\psi|^2.
\end{equation} 
 This leads, after appropriate commutation arguments, to the pointwise decay 
\begin{equation}\label{eq:pdecayMini0}
|r^{\frac 12} \psi|\le \frac C{\tau},\qquad |r\psi|\le \frac {C}{\tau^{\frac 12}}
\end{equation} 
with a constant $C$ depending on the initial data. See~\cite{icmp}.

A further refinement would also give  
\begin{equation}\label{eq:pdecayMini}
|\psi|\le \frac C{\tau^{\frac 32}}.
\end{equation} 
We note that the estimate~\eqref{eq:pdecayMini} 
matches the strongest rate of decay 
obtainable by the full strength of the usual vector field method.
Here, however, we have avoided the use of either the conformal multiplier
or commuting vector fields generating 
Lorentz rotations and scaling, i.e.~multiplier currents or commutators with weights in $t$.
This is important
for the black hole case, in particular Kerr with $|a|<M$.\footnote{Cf.~the loss
in the decay rate proportional to $a$ in the original decay result for $|a|\ll M$
in~\cite{jnotes}.} Moreover, in view of this additional robustness,
this new method thus has an even a wider range of applications to nonlinear problems.
See~\cite{Yang}.

In the Schwarzschild case, the argument requires in addition commutation with $T$
in view of the loss of $T$ on the right hand side of $(\ref{TRUE})$. Other than that,
it is identical to the Minkowski case!
Combining~\eqref{actuallybond},~\eqref{TRUE} and \eqref{eq:newmeth} one gets, through an analogous iteration process to the above,
the decay of non-degenerate energy
\begin{equation}\label{eq:endecayScwh}
\int_{\Sigma_\tau} |\pa\psi|^2\le \frac C{\tau^2} \int_{\Sigma_0}( r^2 |\pa TT\psi|^2
+r^2|\pa T\psi|^2+r^2|\pa \psi|^2).
\end{equation}

One then obtains $(\ref{eq:pdecayMini0})$ by standard commutations. Again, one
can then obtain further refinements like $(\ref{eq:pdecayMini})$ without introducing
$t$-weights.

\section{Kerr: The strategy revisited}
\label{revisit}
In the Kerr case, we must revisit the Schwarzschild strategy in view
of the new additional difficulties arising from its more complicated geometry.

\subsection{The difficulties}\label{duskolies}

\subsubsection{Superradiance}
\label{supersec}
Recall the discussion of the ergoregion $\mathcal{E}$ in Section~\ref{ergosec}.
For $p\in\mathcal{E}$, the energy density
 ${\bf J}^T_\mu [\Psi] n^\mu(p)$ in general fails to be nonnegative. 
 Since if $a\ne0$, $\mathcal{E}\cap\Sigma_\tau\ne\emptyset$
 and $\mathcal{E}\cap\mathcal{H}^+\ne\emptyset$, it follows that 
 the first two integrals on the left hand side of $(\ref{Nomos})$ no longer
have positive definite integrands, and 
 $(\ref{Nomos})$ no longer provides a priori control on $\Psi$.
 
In particular, the 
 flux of energy to null infinity--the third term on the left hand side of $(\ref{Nomos})$,
 which \emph{is} positive definite--can thus be larger than the initial energy,
 i.e.~the term on the right hand side.
This is the phenomenon of \emph{superradiance}\index{superradiance},
first discussed by Zeldovich~\cite{zeld}.
Moreover, a priori, as $\tau_2\to\infty$, this flux can grow without bound.

\subsubsection{More complicated trapping}
We have explained that in the presence of trapped null geodesics,
considerations of geometric optics necessitate a loss 
of derivatives or a degeneration on the trapped set in the putative integrated local energy decay estimate.
In Schwarzschild, this corresponds to the photon sphere $r=3M$, and one could
construct a current ${\bf J}^{X,w}$ where the vector field $X$ vanishes precisely
at $r=3M$ such that 
$(\ref{lowerbnd})$ holds.

As we have seen in Section~\ref{TNG}
in  the Kerr case for $a\ne 0$,  the trapped set has a more complicated structure, which can be fully explained
only in the cotangent bundle picture. Moreover, an argument in \cite{alinhac} shows that
there does not exist a multiplier of type ${\bf J}^{X,w}$ satisftying an analogue
of $(\ref{lowerbnd})$. 

Thus, a classical vector field multiplier to capture trapping is ruled out.

\subsubsection{The coupling}
\label{THECOUPLE}
Recall that the estimate $(\ref{TRUE})$ is not used alone, but is used
in conjunction with the conserved energy estimate $(\ref{Nomos})$.
Thus, even if one can obtain a current of the form $(\ref{oftheformc})$ 
satisfying a nonnegativity property
generalising $(\ref{lowerbnd})$,
it is not at all clear how one obtains an integrated decay estimate,
in view again of superradiance, which now manifests in failure
to control the boundary terms on the right hand side of $(\ref{TRUE})$.
Thus the two difficulties are in fact coupled.

\subsection{The case $|a|\ll M$}
\subsubsection{Boundedness}
\label{boam}
The boundedness problem was resolved in~\cite{dr6}. At the heart of the argument
were the following 2 insights regarding the difficulties of superradiance and trapping:

1.~At least formally, using Carter's separation described in the Introduction, we may decompose
a solution $\psi$ into its superradiant part $\psi_{\flat}$ and its non-superradiant part
$\psi_{\sharp}$
\begin{equation}
\label{decompo}
\psi=\psi_{\flat}+\psi_{\sharp},
\end{equation}
i.e.~where $\psi_{\flat}$ is supported in the frequency range $(\ref{suprange})$ and
$\psi_{\sharp}$ in the  complement of this range.\footnote{The reader may have
already  noticed that it 
is not actually the complete separation which
is necessary to define this particular type of decomposition, 
only Fourier analysis in $t$ and $\phi$. 
See the comments at the end of this section. We suppress here the issue of
justifying this decomposition a priori, since we discuss it in Section~\ref{sec:gen}.}
In the case $|a|\ll M$, one easily sees that {\bf \emph{the superradiant
part $\psi_{\flat}$ is not trapped}}, 
essentially because all future-trapped null
geodesics leave the ergoregion $\mathcal{E}$.
Moreover, one has a more quantitative version of this statement in that one can 
construct a current ${\bf J}^{X,w}$
satisfying an analogue of the non-degenerate
 $(\ref{lowerbnd0})$ at least away from the horizon.

2. One can overcome the difficulty 
of controlling the boundary terms of a ${\bf J}^{X,w}$ current by
the conserved energy,
{\bf \emph{with the help of the red-shift}}. 
We first note that we may write
\begin{align}\label{eq:add}
\int_{\Sigma_{\tau_2}}| {\bf J}^T_\mu[\psi] n^\mu_{\Sigma_\tau}|+\int_{{\mathcal H}^+_{(\tau_1,{\tau_2})}} |{\bf J}^T_\mu[\psi] n^\mu_{\mathcal H}|&\le \int_{\Sigma_{\tau_2}}
 {\bf J}^T_\mu[\psi] n^\mu_{\Sigma_\tau}+\int_{{\mathcal H}^+_{(\tau_1,{\tau_2})}} 
 {\bf J}^T_\mu[\psi] n^\mu_{\mathcal H}\notag\\
  &+ \beta\left (\int_{\Sigma_{\tau_2}} |\pa\psi|^2+ \int_{{\mathcal H}^+_{(\tau_1,{\tau_2})}} |\overline\pa\psi|^2\right)
\\ &\le \int_{\Sigma_{\tau_1}} {\bf J}^T_\mu[\psi] n^\mu_{\Sigma_\tau}+ \beta \left (\int_{\Sigma_{\tau_2}} |\pa\psi|^2+ \int_{{\mathcal H}^+_{(\tau_1,{\tau_2})}} |\overline\pa\psi|^2\right)\notag
\end{align}
where we have used the energy identity for ${\bf J}^T$, and
the fact that $|\partial\psi|^2$, $|\overline\partial\psi|^2$ contain the square sum
of all derivatives appearing  in the fluxes on their corresponding hypersurfaces.
The last two terms in the above inequality 
appear precisely on the left hand side 
of the red-shift estimate \eqref{eq:rshift}.
The constant $\beta$ can in fact be chosen
proportional to the value of the specific angular momentum $a$ of the Kerr solution, which means 
that in the slowly rotating case $|a|\ll M$, given an analogue of
$(\ref{lowerbnd0})$ such that $(\ref{whattodo})$ is bounded by the right hand
side of $(\ref{eq:add})$,
we may add the analogue of
$(\ref{viresti})$ and $(\ref{eq:add})$ and a small amount of 
$(\ref{eq:rshift})$, and the estimates close.

The above is to be applied only to
the superradiant part $\psi_{\flat}$, as it is only for this that we have
an analogue of $(\ref{lowerbnd0})$.  For the non-superradiant part $\psi_{\sharp}$,
we simply apply the argument of Section~\ref{aside}.\footnote{In fact,
the elementary argument of Section~\ref{aside} was first discovered
in the present Kerr context~\cite{dr6}.}

We obtain thus the analogue of $(\ref{actuallybond})$ 
for the solution and thus we retrieve
the Kay--Wald linear stability theorem, but now for Kerr with $|a|\ll M$.

Let us note finally that
the above argument does not require the complete separation
to define $(\ref{decompo})$, but
only the Fourier analysis with respect to      $\partial_t$ and $\partial_\phi$.
Thus, it can be adapted to the wave equation on
more general axisymmetric stationary spacetimes whose metrics are $C^1$-close
to Schwarzschild, and this is in fact the setting of~\cite{dr6}.\footnote{The $C^1$-closeness is what guarantees stability
of positive-definitivity properties of non-degenerate energy currents.} It is interesting
to note that in this very general class, one does not have uniqueness of geodesics, and thus,
the behaviour of the trapped null orbits may be very different from in Schwarzschild. 
In particular, one does not expect to be able to prove quantitative decay
results in this class.

\subsubsection{Decay}
\label{smadec}
For the purpose of decay, the estimate $(\ref{viresti})$
is not a tool for obtaining $(\ref{actuallybond})$ but an end in itself. We must thus obtain the analogue of this estimate
for the whole solution $\psi$, not just the surperradiant projection $\psi_{\flat}$. 

As we have seen, for this, in view of~\cite{alinhac},
one must go beyond classical multiplier currents. 
The property of trapped null geodesics discussed at the end of Section~\ref{TNG}  
strongly suggests that suitable ${\bf J}^{X,w}$ currents could be
constructed if one were to microlocalise the construction, to distinguish between
geodesics with different sets of conserved quantities, and degenerate accordingly
at different values of $r$ associated to trapped null orbits, all (in the case
$|a|\ll M$!) close to $3M$.

Again, Carter's separation\footnote{Unlike in the  case of the boundedness property
of Section~\ref{boam} above,
here, as will be apparent we really need the complete separation.} 
into modes $(\ref{modetype})$ suggests a particularly 
 natural microlocalisation suitable for this problem. 
In fact, the existence of this separation is equivalent to the separability
of geodesic flow and thus, it should naturally capture properties of geodesic flow.
Using this  microlocalisation we shall indeed be able to construct microlocal
versions of the currents ${\bf J}^{X,w}$ ensuring the validity of the
analogue of $(\ref{lowerbnd})$.
We shall see how this is done in Sections~\ref{sec:gen}--\ref{sec:large} below.

The above microlocalisation was first carried out in Chapter 5.3 of~\cite{jnotes}. 
We note that, independently, \cite{tattoh} showed
a similar estimate in the $|a|\ll M$ case considered here,
 using a different microlocalisation relying on
the general machinery of pseudo-differential operators. The integrability
of geodesic flow is essential here too.
Later, a third `microlocalisation' was achieved in~\cite{andB}  by commutation 
with a higher order differential operator constructed from the Carter tensor.
This attractive approach allows for a
formulation of the whole argument in terms of differential (as opposed to
pseudodifferential) operators, at the expense of showing results only for
higher energy norms with strong weights at infinity.

Given the fact that boundedness has been proven (cf.~Section~\ref{boam}), 
one need not worry
about boundary terms and the difficulty of Section~\ref{THECOUPLE}. 
This yields the analogue of the integrated decay statements
$(\ref{TRUE0})$, $(\ref{TRUE})$, where the degeneration is now
on a set containing the physical space projection of the trapped set. Alternatively, 
one can apply directly part~2.~from Section~\ref{boam}   to obtain 
the boundedness of the boundary terms,
bypassing part~1. As we shall see, however, the insight of part 1.~of
Section~\ref{boam} will return
with a vengeance in the study of the general $|a|<M$ case!

\subsection{The general case $|a|< M$}
\label{edwgeviko}
The nature of geodesic flow as described in Section~\ref{TNG}
suggests that `trapping' is qualitatively similar for the whole
range of   $a$, and thus, microlocalised currents should 
again be able to capture this difficulty. 
This is indeed the case and is particularly easy to see with 
respect to the microlocalisation based on Carter's separation, as the latter is so tightly
connected with the underlying geometry.

The real difficulty now, however, is trying to apply $(\ref{eq:add})$.  For in the general case
$|a|<M$, the coupling constant $\beta$ is no longer small, and thus the extra terms
 cannot be
simply absorbed by adding a small amount of the red-shift estimate. 
One has to identify some non-trivial \emph{structure} in the coupling of the various components.

The miracle that saves the day is the following: The insight 1.~of Section~\ref{boam}
in fact holds for the entire $|a|<M$ range! That is to say, superradiant frequencies
are not trapped, in fact, we can quantify this by saying that
if the frequency is sufficiently large, one can construct
a ${\bf J}^{X,w}$-type current with ${\bf K}^{X.w}$ 
sufficiently positive to accommodate coupling with 
$(\ref{eq:add})$ for an arbitrary $\beta$.

This miracle is somewhat unexpected because it is certainly {\bf \emph{not}} the case that
all trapped null geodesics leave the ergoregion $\mathcal{E}$. It would be interesting
to understand better the underlying reasons (if any) behind
 the phase-space disjointness of
the trapping and superradiance phenomena. If better understood, in 
view of the programme of~\cite{alexakis},
this could have bearing on the uniqueness of Kerr problem.

The above remarks all concern the situation where (at least some) frequencies
are large. One must also deal with the case where all frequencies are bounded without
there being a large frequency parameter to exploit.
In the $|a|\ll M$ case, these can be dealt with by a perturbation argument from
Schwarzschild.\footnote{This was the approach adopted in~\cite{jnotes}. In 
our alternative approach~\cite{dr7}, we deal with the low frequencies 
as in the present paper (see Section~\ref{sec:small}), precisely because the method is more systematic,
can be used in the general $|a|<M$ case, and yields an independent self-contained
proof
of the Schwarzschild result.}
Again, however,
the separation gives a systematic way to deal with these difficulties which
is important in the large $|a|<M$ case where perturbation arguments from Schwarzschild
are not available.  See Sections~\ref{sec:small} and Section~\ref{whit}.

\section{Carter's separation as a geometric microlocalisation}   
\label{ayrilik}
We have given a complete overview of the proof, and now we begin with
the technical details. As we have explained in the previous Section, 
we shall appeal to Carter's separation of the wave
equation, viewing it  as a method to frequency-localize energy estimates
in a manner particularly
suited to the local and global geometry of Kerr. In the present section,
we introduce the relevant formalism.

\subsection{Carter's separation}
\label{karteri}
The separation of $\Box_g\psi=0$ requires taking the Fourier transform in $t$, and
then expanding into what are known as oblate spheroidal harmonics. 
As we shall describe in Section~\ref{sec:gen}, we will
apply this separation to solutions of the inhomogeneous equation 
\begin{equation}
\label{INHOMOG}
\Box_g\Psi = F,
\end{equation}
where $\Psi$ is related to $\psi$ by the application of a suitable cutoff, making
$\Psi$ then $L^2(t)$ for each fixed $(r,\theta,\phi)$, where these denote Boyer--Lindquist
coordinates on $\mathcal{R}$.

We may thus write $\Psi(t,r,\theta,\phi)$ in the form
$$
\Psi(t,r,\theta,\phi)=\frac{1}{\sqrt{2\pi}}\int_{-\infty}^\infty e^{-i\omega t} \widehat\Psi (\omega,r,\theta,\phi) d\omega
$$
and further decompose 
\[
\widehat\Psi(\omega, r,\theta,\phi) = \sum_{m\ell} \Psi^{(a\omega)}_{m\ell}(r)
S_{m\ell}(a\omega,\cos \theta)e^{im\phi}.
\]
The collection $\{S_{m\ell}(\xi,\cos \theta)e^{im\phi}\}_{m\ell}$ of the eigenfunctions of the 
self-adjoint operator 
\[
P(\xi)\, f= -\frac 1{\sin\theta} \frac{\partial}{\partial\theta} \left (\sin\theta \frac{\partial}{\partial\theta}f\right)
-\frac{\partial^2 f}{\partial\phi^2}\frac{1}{\sin^2\theta}
- \xi^2 \cos^2\theta f.
\]
on $L^2(\sin\theta\, d\theta\, d\phi)$ forms a complete orthonormal basis on 
$L^2(\sin\theta\, d\theta\, d\phi)$. It is parametrized by a set of real discrete eigenvalues $\lambda^{(\xi)}_{m\ell}$
\[
P(\xi)\, S_{m\ell}(\xi,\cos \theta)e^{im\phi}=\lambda_{m\ell}(\xi) S_{m\ell}(\xi,\cos \theta)e^{im\phi},
\]
which have the property that
\begin{equation}\label{eq:lam}
\lambda_{m\ell}{(\xi)}+\xi^2\ge |m|(|m|+1).
\end{equation}
For $\xi=0$ the {\it oblate spheroidal harmonics} $S_{m\ell}(\xi,\cos \theta)e^{im\phi}$ 
reduce to the 
standard spherical harmonics $Y_{m\ell}$ with the corresponding 
eigenvalues $\lambda{(0)}_{m\ell}=\ell(\ell+1)$.

The usefulness of the above decomposition 
in the context of the wave equation on Kerr is given by the 
following result of Carter~\cite{cartersep2}. If  $\Psi(t,r,\theta,\phi)$ 
satisfies
\[
\Box_g\Psi = F,
\]
for $g=g_{a,M}$ a Kerr metric with $|a|<M$, 
with $\Psi(t,\cdot)$ in $L^2(t)$,
then defining the coefficients 
$\Psi_{m\ell}^{(a\omega)}(r)$, $F_{m\ell}^{(a\omega)}(r)$ as above,
the following holds:
\begin{align}
\label{CartersODE}
\nonumber
\Delta \frac{d}{dr} \left (\Delta \frac{d\Psi_{m\ell}^{(a\omega)}}{dr}\right)& + \left (a^2m^2 + (r^2+a^2)^2\omega^2-\Delta (\lambda_{m\ell}+a^2\omega^2) \right) \Psi_{m\ell}^{(a\omega)}\\
&=(r^2+a^2)\Delta\,
F_{m\ell}^{(a\omega)}.
\end{align}
Using the definition $(\ref{r*def})$ of $r^*$ and setting
\[
u^{(a\omega)}_{m\ell}(r)=(r^2+a^2)^{1/2}
 \Psi^{(a\omega)}_{m\ell} (r),\qquad 
 H^{(a\omega)}_{m\ell}(r)=\frac{\Delta F^{(a\omega)}_{m\ell}(r)}{(r^2+a^2)^{1/2}}.
\]
we obtain
\index{$\Psi$-related! $u$ (suitably rescaled version of $\Psi^{(a\omega)}_{m\ell} (r)$)}
\index{$\Psi$-related! $H$ (suitably rescaled version of $ F^{(a\omega)}_{m\ell}(r)$)}
\begin{equation}
\label{e3iswsntouu}
\frac{d^2}{(dr^*)^2}u^{(a\omega)}_{m\ell}+(\omega^2 - V^{(a\omega)}_{m\ell }(r))u = 
H^{(a\omega)}_{m\ell}
\end{equation}
where
\[
V^{(a\omega)}_{m \ell}(r)= \frac{4Mram\omega-a^2m^2+\Delta (\lambda_{m\ell}+\omega^2a^2)}{(r^2+a^2)^2}
+\frac{\Delta(3r^2-4Mr+a^2)}{(r^2+a^2)^3}
-\frac{3\Delta^2 r^2}{(r^2+a^2)^4}.
\index{fixed functions! spacetime functions! $V$ (potential $V=V^{(a\omega)}_{m \ell}(r^*)$ arising from separation)}
\]
In the the Schwarzschild case:
\begin{equation}
\label{at0freq}
V^{(0\omega)}_{m\ell}(r) = (r-2M)\left(\frac{\lambda_{m\ell}}{r^3}+\frac{2M}{r^4}\right),
\end{equation}
\begin{equation}
\label{at0freq'}
\left(\frac{dV}{dr^*}\right)^{(0\omega)}_{m\ell}(r)= \frac{r-2M}{r}\left(\frac{2\lambda_{m\ell}(3M-r)}{r^4}+\frac{2M(8M-3r)}{r^5}\right).
\end{equation}
We note that in this picture the application of $\frac{\pa}{\pa t}$ to $\Psi$ corresponds to the multiplication
of $\Psi^{(a\omega)}_{m\ell}$ by $i\omega$, while $\frac{\pa}{\pa\phi}$ corresponds to the multiplication by $i m$.

\subsection{Notational conventions}
In what follows, let us  suppress the dependence of
$u$, $H$ and $V$ on $a\omega$, $m$, $\ell$ in our notation.\footnote{Before suppressing
the dependence, it might be useful to remark that in fact, the functions
$u$, $H$, $V$, etc., depend on both $a\omega$ and $a$, whereas the
parameters $\lambda_{m\ell}$ depend only on $a\omega$. For the former 3, will view
reference to the $a$-dependence as implicit in the reference to $a$ in $a\omega$. Hence,
our writing $0\omega$ instead of $0$ in $(\ref{at0freq})$.}

We will also use the notation $'=\frac{d}{dr^*}$ and introduce a new useful non-negative parameter
$$
\Lambda=\lambda_{m\ell}(a\omega)+a^2\omega^2,
$$
which according to \eqref{eq:lam} obeys
$$
\Lambda\ge |m|(|m|+1).
$$

A word of warning: For fixed $g_{M,a}$, we 
shall often refer to $r^*$-ranges by their corresponding
$r$ ranges, and functions appearing in most estimates will be written in terms of $r$. 
Moreover, given an $r$-parameter such as $R$, then $R^*$ will denote $r^*(R)$.
It is important to remember at all times that $'$ always means $\frac{d}{dr^*}$!

\subsection{Boundary conditions}
As we shall describe in Section~\ref{sec:gen}, 
the separation process will be applied to $\Psi=\xi(\tau) \psi$ obtained 
by applying a cut-off $\xi$ with the property that $\xi(\tau)=1$ for $\tau\ge 1$ and $\xi(\tau)=0$
for $\tau\le 0$ to the solution of the homogeneous wave equation $\psi$. As a consequence, 
the coefficients $\Psi^{(a\omega)}_{m\ell}(r)$ and hence $u^{(a\omega)}_{m\ell}(r)$ will
satisfy the {\it outgoing boundary conditions}:
\begin{align}
&u'+i(\omega-\frac {a m}{2Mr_+})u=0,\qquad r=r_+,\label{eq:b-}\\
&u'-i\omega u =0,\qquad\qquad\qquad\quad r=\infty \label{eq:b+}.
\end{align}
They correspond to the standard outgoing conditions on $\Psi$
\begin{align*}
&\left (\frac {\pa}{\pa r^*}-\left(\frac {\pa}{\pa t}+\frac {a}{2Mr_+}\frac {\pa}{\pa\phi}\right)\right )\Psi=0,\qquad r=r_+,\\
&\left (\frac {\pa}{\pa r^*}+\frac {\pa}{\pa t}\right)\Psi =0,\qquad\qquad\qquad\qquad r=\infty.
\end{align*}
We note that the vectorfield $L=\frac {\pa}{\pa t}+\frac {a}{2Mr_+}\frac {\pa}{\pa\phi}$ is the distinguished 
null generator of the future horizon ${\mathcal H}^+$ mentioned, for instance, in the earlier discussion 
of the red-shift effect and surface gravity.

\subsection{The separated current templates}
\label{sct}

To describe the analogue of energy currents localized to
frequency triplet
$(\omega,m,\ell)$, it will be convenient to define the following current templates.

\subsubsection{The microlocal ${\bf J}^T$}
We first define the microlocal analogue of the conserved energy current, i.e.~the 
${\bf J}^T$ current:
$$
Q_T[u]=\omega\, {\text Im}  (u'\overline {u}).
$$
We compute
$$
Q_T'=\omega\, {\text Im}  (H\overline {u}).
$$
We observe that the boundary conditions \eqref{eq:b-}, \eqref{eq:b+} imply that  
$$
Q_T(r_+)=-\omega (\omega-\frac {am}{2Mr_+})|u|^2,\qquad Q_T(\infty)=\omega^2 |u|^2
$$
and that $Q_T(\infty)$ and $-Q_T(r_+)$ are both positive in the {\it non super-radiant regime}
$$
\frac {m\omega}{a}\not\in [0,\frac {m^2}{2Mr_+}].
$$

\subsubsection{The microlocal virial currents ${\bf J}^{X,w}$}
Next, we  define the microlocal analogue of the virial currents ${\bf J}^{X,w}$
where $X$ is in the direction of $\partial_{r^*}$, and $w$ is a suitable function.

It is most natural to parametrize such choices of currents as follows.
For arbitrary functions $f(r^*), h(r^*), y(r^*)$, we  define the currents
\begin{align*}
&Q^f_0=f\left [|u'|^2+(\omega^2-V)|u|^2\right]+f'\, {\text Re} (u'\overline {u})-\frac 12 f'' |u|^2,\\
&Q^h_1=h\,{\text{Re}} (u'\overline{u}) -\frac 12 h' |u|^2,\\
&Q^y_2=y \left [|u'|^2+(\omega^2-V) |u|^2\right].
\end{align*}
We note the simple relation: $Q_0^f=Q_1^{f'}+Q_2^f$.
We compute 
\begin{align*}
&(Q^f_0)'=2f'|u'|^2-fV' |u|^2 -\frac 12 f{'''}|u|^2+2 f \,{\text{Re}} (u'\overline{H}) + f'\, {\text{Re}} (u\overline{H}),\\
&(Q^h_1)'=h \left [|u'|^2+(V-\omega^2) |u|^2\right]-\frac 12 h'' |u|^2 + h\, {\text{Re}} (u\overline {H}),\\
&(Q^y_2)'=y' \left [|u'|^2+(\omega^2-V) |u|^2\right]-yV' |u|^2+y \,{\text{Re}} (u'\overline{H}).
\end{align*}

\subsubsection{The microlocal red-shift current ${\bf J}^{N}$}
Finally, we define the microlocal analogue of the red-shift current  ${\bf J}^N$.
This takes the form
$$
Q^z_{\text {red}}=z\left [|u'+i(\omega-\frac{am}{2Mr_+}) u|^2+(\omega^2-V-|\omega-\frac{am}{2Mr_+}|^2) |u|^2\right],
$$
where $z$ will be chosen to be a suitable function blowing up as $r\to r_+$.

For convenience we denote
$$
\omega_+:=\frac{am}{2Mr_+},
$$ 
a constant which can be also characterized by the relation
$$
\omega^2-V(r_+)=|\omega-\omega_+|^2.
$$
Let 
$$
\tilde V:=V+|\omega-\omega_+|^2-\omega^2
$$
so that $\tilde V(r_+)=0$ and $\tilde V'(r)=V'(r)$. We then compute
\begin{equation}\label{eq:rs}
(Q^z_{\text{red}})'=z'|u'+i(\omega-\omega_+)u|^2 - \left (z\tilde V\right)'|u|^2 + 2z\, {\text{Re}}\,(u'\overline H).
\end{equation}

\section{Proof of Theorem~\ref{thrm:state}}\label{sec:gen}

In the rest of this paper, we give detailed account of the proof of
Theorem~\ref{thrm:state}. 
\subsection{The continuity argument in $a$}
The proof of Theorem~\ref{thrm:state} proceeds via 
a continuity argument in the
Kerr parameter $a$.  This is central to the
logic of the proof because we are relying (recall Section~\ref{karteri}) 
on taking the Fourier transform
in time, and this requires some a priori information. 

We must show thus that the set of $a$ for which the statement of
Theorem~\ref{thrm:state} holds (with a constant $C=C(a)$ whose
form is given by the proof) is nonempty, and both open and closed
in the subset topology of $(-M,M)$. We note that in principle,
the function  $C(a)$ will diverge as $|a|\to M$.

The value $a=0$ corresponds to Schwarzschild. The validity of the result
 in the Schwarzschild
case thus satisfies the requirement on non-emptiness. (In fact, the considerations
of Section~\ref{sec:small} below can be used to provide an independent, self-contained proof
for the Schwarzschild case. See~\cite{dr7}.)

\subsection{Openness}
The template for the openness argument is actually the small $|a|\ll M$ case, treated
in detail in~\cite{dr7}, which can be thought of as implementing the openness argument
in the vicinity of $a=0$.

The actual openness argument is an amalgation of techniques from the closedness
argument and the small $a$ case. For convenience, here we shall 
treat the openness case as if it were exactly the small $a$ case. This will allow us
also to review in more detail this special case.

\subsubsection{Cutoffs}
For openness, we do not know that $\psi$ decays a priori  in time, and thus, our
cutoff $\xi$ must be supported only in $0\le \tau\le \tau_f$ for some large $\tau_f$,
where $\xi=1$ in the range $1\le \tau\le \tau_f-1$ say.

By a bootstrap argument, $\tau_f$ will finally be taken to infinity. 
The proximity to something known (inherent in proving openness) will be essential
here. See Section~\ref{sumfopen}.

\subsubsection{A microlocalised virial current}
For convenience, let us actually restrict to the $|a|\ll M$ case in our discussion of
this construction, because it is a nice illustration of how proximity
to a solution where estimates have already been established can be used.
(The general openness will follow from an amalgamation of the ideas
here and in the closedness section.) 

We have discussed   in Section~\ref{smadec} 
that we need a microlocal version
of the current ${\bf J}^{X,w}$  and of the positivity
statement which schematically we write
\[
\int{\bf K}^{X,w}[\Psi]\ge b\int_{R_0}^{R_1} |\partial_{r^*}\Psi|^2 +  \chi |\partial\Psi|^2+|\Psi|^2
-\int \mathcal{E}^{X,w}[\Psi].
\]
where $r_+<R_0<3M<R_1<\infty$. 
The $\mathcal{E}^{X,w}[\Psi]$ occurs (recall $(\ref{eq:Not})$) 
from the fact that $\Psi=\xi\psi$
and thus satisfies an inhomogeneous wave equation.

In the formalism of our separation,
this means that for various frequency ranges $(\omega, m, \ell)$, 
by employing various compositions of the virial microlocalised currents
$Q_0^f$, $Q_1^h$, $Q_2^y$, we seek a current $Q$ such that
\begin{equation}
\label{yaxvw}
\int_{-\infty}^{\infty}  Q'[u] \ge \int_{R_0^*}^{R_1^*}  |u'|^2 +\chi (\lambda +\omega^2)|u^2| +|u|^2   - \int_{-\infty}^{\infty}
H \cdot  (f, h, y) \cdot (u, u'),
\end{equation}
where the last term is schematic for various combinations (such as $f {\rm Re} (u'\overline H)$)
which occur.  The terms $Q$, $u$, $\chi$, $\lambda$, $H$, $f$, $h$, $y$ should
all be understood with the full set of indices, i.e.~$u^{(a\omega)}_{m\ell}$, etc.,
here omitted for brevity.
For frequencies for which trapping
is relevant, $\chi$ will vanish at a unique value of $r$ associated with the
frequency triple. 
One restricts the domain of integration on
the first term on the right hand side because one expects this virial current
not to control things at the horizon and infinity.

We may rewrite $(\ref{yaxvw})$ above as an estimate
\[
b \int_{R_0^*}^{R_1^*}  |u'|^2 +\chi (\lambda +\omega^2)|u^2| +|u|^2 \le 
\int_{-\infty}^{\infty}  Q'[u] + \left| \int_{-\infty}^{\infty}
H \cdot  (f, h, y) \cdot (u, u')\right|,
\]
and  thus 
\begin{equation}
\label{modred}
b \int_{R_0^*}^{R_1^*}  |u'|^2 +\chi (\lambda +\omega^2)|u^2| +|u|^2 \le 
 Q[\infty]-Q[-\infty] + \left| \int_{-\infty}^{\infty}
H \cdot  (f, h, y) \cdot (u, u')\right|.
\end{equation}

\subsubsection{Summing}
\label{sumfopen}

Restoring now the indices, summing $(\ref{modred})$ over
$m$ and $\ell$ and integrating over $\omega$, we obtain by Plancherel for $r_c\le R_0<R_1\le R_c$
\begin{align}\label{eq:sumearlier}
b \int_0^{\tau_f}
 d\tau \int_{\Sigma_\tau\cap \{r\in [R_0,R_1]\}}  \left (\chi |\pa\psi|^2 +|\partial_{r^*}\psi|^2+ |\psi|^2\right)&\le
 \int_{{\mathcal H}^+} |{\bf J}^T_\mu n^\mu_{{\mathcal H}}| +
 \int_{{\mathcal I}^+} {\bf J}^T_\mu n^\mu_{{\mathcal I}} \notag\\
 &+\int_{\Sigma_0} |\pa\psi|^2+\int_{\Sigma_{\tau_{f}-1}} |\pa\psi|^2.
\end{align} 
Here, $\chi$ now denotes a smooth cut-off function which vanishes
on a small ($a$-size) neighborhood of $r=3M$.

Below we explain the main features of how we obtained \eqref{eq:sumearlier}:

1.~The range of $\tau$-integration on the left hand side of \eqref{eq:sum} is restricted to $[0,\tau_f]$.
This is due to the fact that our microlocal analysis was applied to the function $\Psi=\xi(\tau) \psi$
with a cut-off function $\xi$ vanishing for $\tau\le 0$, $\tau\ge \tau_f$. 
The former property of $\xi$ led to the microlocal outgoing 
boundary conditions~\eqref{eq:b-}, \eqref{eq:b+}.

2.~The boundary terms on ${\mathcal H}^+$ and ${\mathcal I}^+$ arise from the corresponding 
boundary terms $Q(\pm\infty)$ in the microlocal virial inequalities. The fact that these terms
are controlled by the ${\bf J}^T$ fluxes is a consequence of the boundary conditions \eqref{eq:b-}, \eqref{eq:b+}.
For instance, 
$$
Q_0^f(+\infty)= f (|u'|^2+(\omega^2-V)|u|^2)(+\infty)=2f\omega^2 |u|^2(+\infty)= if \omega \overline u (-i\omega u -u') (+\infty),  
$$
which by Plancherel can easily be seen to be bounded by the 
${\bf J}^T$-flux upon summation.

3.~The contribution of the inhomogeneous term $H$ in $(\ref{modred})$  gives rise to the terms integrated over $\Sigma_0$ and $\Sigma_{\tau_{f}-1}$ on the right hand side
of \eqref{eq:sumearlier}. Let us see briefly how this occurs.

Recall that $H$ arises from the term $F=\Box_g \xi\, \psi + 2\nab^\mu\xi\, \nab_\mu \psi$, introduced 
by the cut-off $\xi$ mutliplied to $\psi$, and that
$F$ is thus  supported in 
$\tau\in [0,1]\cup[\tau_f-1,\tau_f]$.

For concreteness, consider a term
\[
\int f^{(a\omega)}_{m\ell}(r)
{\text{Re}} (\frac {d u^{(a\omega)}_{m\ell}}{dr^*}\overline H^{(a\omega)}_{m\ell})
\]

Were this integral in fact restricted to $[R_0,R_1]$, it is easy to see that
it can be controlled by a little bit of the left hand side of $(\ref{eq:sumearlier})$
and the last two terms on the right hand side. 
For we apply Cauchy--Schwarz to bound this term by
\[
\int{ |f^{(a\omega)}_{m\ell}(r)|\left(
\left |\frac {d u^{(a\omega)}_{m\ell}}{dr^*}\right|^2+\epsilon^{-1} |H^{(a\omega)}_{m\ell}|^2\right)},
\]
and applying Plancherel, one can then absorb the first term into the left
hand side, while the right hand side, together with a finite time 
energy estimate and the above comment on the support of $F$, can easily
be seen to be bounded by the claimed boundary terms.

On the complement of $[R_0,R_1]$, one needs a slightly 
different strategy, which takes into account that the term $H^{(a\omega)}_{m\ell}(r)$ arises from the
compactly supported expression 
$F=\Box_g \xi(\tau) \Psi + 2 \nab^\mu \xi(\tau)\, \nab_\mu \Psi$ with a smooth cut-off function $\xi(\tau)=1$ for $\tau\ge 1$ and 
$\xi(\tau)=0$ for $\tau\le 0$. In \cite{dr7}, 
we have arranged our mutipliers, i.e., function $f^{(a\omega)}_{m\ell}(r)$ in this 
example to be independent of the parameters $\omega, m, \ell$ outside of $(R_0,R_1)$ so that after summation 
in $m, \ell$ and integration in $\omega$ we obtain the term 
$$
\int_0^{\infty} d\tau \int_{\Sigma_\tau\cap\{r\not\in (R_0,R_1)\}} \frac {d\Psi}{dr^*} F.
$$
Since $F$ is supported in the interval $\tau\in (0,1)$ we can easily relate the above expression to the initial data.
More generally, the basic idea for treating compactly supported in $\tau$ inhomogeneous terms in the equation
$\Box_g\psi=F$ is to couple them with the ${\bf J}^T$ based energy estimate for $\psi$. (The precise analog of
this for the wave equation in Minkowski space is that the integrated local energy decay estimate can have 
two incarnations (with arbitrary $\alpha>0$)
\begin{align*}
&b\left (\|\frac {\pa\psi}{\langle r\rangle^{\frac 12+\alpha}}\|_{L^2_tL^2_x} + \|\frac {\psi}{\langle r\rangle^{\frac 32+\alpha}}\|_{L^2_tL^2_x}\right) \le \|\pa\psi_0\|_{L^2_x} + \|\langle r\rangle^{\frac 12+\alpha} F\|_{L^2_tL^2_x},\\
&b\left (\|\frac {\pa\psi}{\langle r\rangle^{\frac 12+\alpha}}\|_{L^2_tL^2_x} + \|\frac {\psi}{\langle r\rangle^{\frac 32+\alpha}}\|_{L^2_tL^2_x}\right) \le \|\pa\psi_0\|_{L^2_x} + \|F\|_{L^1_tL^2_x}
\end{align*}
and the second estimate is more advantageous if $F$ has compact support in $t$.)

These technical issues are carefully treated in \cite{dr7} and we proceed without making any further comments.

\subsubsection{Completing the argument}
\label{compfopen}

We may now couple $(\ref{eq:sumearlier})$ to $(\ref{eq:add})$, and, applying the red-shift
estimate,
as described in Section~\ref{boam}, we may obtain
\begin{equation}
\label{komble}
\int_0^{\tau_f}d\tau \int_{\Sigma_\tau \cap\{r\le R_1\}}
(\chi |\partial\psi|^2 +|\psi|^2 )\le C \int_{\Sigma_0} |\partial \psi|^2,
\end{equation}
where we are using that we may treat $\beta$ as a small parameter to absorb terms.
The cutoff time was arbitrary and does not appear in the constant, and thus,
we may finally replace $\tau_f$ with $\infty$.

We have thus reduced openness to showing
$(\ref{modred})$ for all frequency ranges.
This will be done explicitly in    Section~\ref{sec:small}
for  $|a|\ll M$, giving us the opportunity
to review this simpler case--treated in full detail in~\cite{dr7}--first. As said earlier,
the actual openness
argument for the general case can be implemented by an amalgamation of
these multipliers and the types of multipliers we shall use in the closedness case.
Thus, the real new difficulties of the general $|a|<M$ case concern closedness.

\subsection{Closedness}
For closedness, it suffices to improve an estimate 
for a solution already known to decay,
i.e.~it suffices to consider a sequence of $a_i\to a$ for which the statement of
Theorem~\ref{thrm:state} holds, and to show that the constant $C(a_i)$ of the theorem can 
be uniformly bounded as $a_i\to a$.

\subsubsection{Cutoffs}
In view of the fact that we already know that the solution is integrable towards the future,
we need only cutoff  near the initial data surface $\Sigma_0$. That is to say, we
apply a cutoff $\xi$ to the solution $\psi$ of $(\ref{eq:box})$
such that $\xi=1$ for $\tau\ge 1$ and $\xi=0$ for $\tau\le 0$ 
and we then apply the separation of Section~\ref{karteri} to $\Psi=\xi\psi$.
In contrast to the openness case, in view of the new cutoff,
the inhomogeneous term $H^{(a\omega)}_{m\ell}(r)$ can be thought to arise 
solely from initial data for $\psi$ on $\Sigma_0$.

\subsubsection{A microlocalised combined current}
Whereas in the small $|a|\ll M$ case, we needed merely to microlocalise
the construction of ${\bf J}^{X,w}$, and the boundary terms could be handled
using $(\ref{eq:add})$ and the smallness of the parameter $\beta$, 
here, the coupling between the virial current, the problem of superradiance, 
and the red-shift is nontrivial. To exploit the structure described in Section~\ref{edwgeviko},
{\bf\emph{all currents must be microlocalised}}.
Essentially, we would like a microlocal version of the inequalities
\[
\int {\bf K}^{X,w}[\Psi]+{\bf K}^{\alpha N}[\Psi]+{\bf K}^{\beta T}[\Psi] \ge  \int |\partial_{r^*} \Psi |^2+ \chi|\pa \Psi |^2 +\Psi^2-\int \mathcal{E}^{X,w}[\Psi],
\]
\[
\int_{\mathcal{H}^+ \cup \mathcal{I}^-} ({\bf J}_\mu^{X,w}+\beta{\bf J}_\mu^{N}+\alpha{\bf J}_\mu^{T}) n^\mu \le 0,
\]
where, not only the choice of $X$ but also the coupling constants $\alpha$, $\beta$
are `microlocalised'. To put it another way, using the linearity of the currents,
we seek a microlocalised version of the combined currents:
\[
{\bf J}^{X+\beta N+\alpha T, w}_\mu.
\]

In our formalism (see Section~\ref{sct}), what we are seeking, for each
frequency triple $(\omega, m, \ell)$, a combination $Q$
 of $Q_T$, $Q^z_{\rm red}$ and
the virial currents $Q_0$, $Q_1$, $Q_2$   such that $(\ref{yaxvw})$ holds and moreover
\[
Q[\infty]-Q[-\infty]\ge 0
\]
so we may drop the boundary terms and write
\begin{equation}
\label{closedredu}
b \int_{R_0^*}^{R_1^*}  |u'|^2 +\chi (\lambda +\omega^2)|u|^2 +|u|^2 \le 
\left| \int_{-\infty}^{\infty}
H \cdot  (f, h, y) \cdot (u, u')\right|.
\end{equation}

\subsubsection{Summing and completing the argument}
\label{sum&comp}
If we succeed in obtaining $(\ref{closedredu})$ for all frequency triples, then
as in Section~\ref{sumfopen}, we may sum, where now we obtain simply: 
\begin{align}\label{eq:sumearliercl}
b \int_0^{\infty}
 d\tau \int_{\Sigma_\tau\cap \{r\in [R_0,R_1]\}}  \left (\chi |\pa\psi|^2 +|\partial_{r^*}\psi|^2+ |\psi|^2\right)&\le
\int_{\Sigma_0} |\pa\psi|^2.
\end{align} 
The difficulty of the $\Sigma_{\tau_f}$ integral does not occur now in view
of the new definition of the cutoffs. The rest of the proof follows as in Section~\ref{compfopen},
completing the argument for closedness.

We have thus reduced closedness to showing
$(\ref{closedredu})$ for all frequency ranges.
We shall show this in Section~\ref{sec:large}, completing thus the proof
of Theorem~\ref{thrm:state}.

\section{The small $a$ case}\label{sec:small}
This section and the next will contain the microlocalised estimates
appealed to in the proof of Theorem~\ref{thrm:state} as given in Section~\ref{sec:gen}.
We begin by partitioning the frequency space into relevant frequency ranges.

\subsection{The frequency ranges}
\label{freerange}
Let $\omega_1$\index{fixed parameters! small parameters! $\omega_1$ (frequency parameter)}, 
$\lambda_1$\index{fixed parameters! large parameters! $\lambda_1$ (frequency parameter)}
be (potentially large) parameters
to be determined, and $\lambda_2$\index{fixed parameters! small parameters! $\lambda_2$ (frequency parameter)}
be a (potentially small) parameter to be determined.
We define the frequency ranges
$\mathcal{F}_{\mbox{$\flat$}}$\index{frequency ranges! $\mathcal{F}_{\mbox{$\flat$}}$ (frequency triplet range,
`bounded' frequencies)}, 
$\mathcal{F}_{\lessflat}$\index{frequency ranges! $\mathcal{F}_{\lessflat}$ (frequency triplet range,
`angular-dominated' frequencies)},
$\mathcal{F}_{\mbox{$\natural$}}$\index{frequency ranges! $\mathcal{F}_{\mbox{$\natural$}}$ (frequency triplet range, `trapped' frequencies)},
 $\mathcal{F}_{\mbox{$\sharp$}}$\index{frequency ranges! $\mathcal{F}_{\mbox{$\natural$}}$ (frequency triplet range, `time-dominated' frequencies)} 
 by
\begin{itemize}
\item
$\mathcal{F}_{\mbox{$\flat$}}=\{(\omega, m, \ell)$ :
$|\omega|\le \omega_1$, $\lambda_{m\ell}(a\omega) \le \lambda_1\}$
\item
$\mathcal{F}_{\lessflat}=\{(\omega, m, \ell)$ :
$|\omega|\le \omega_1$, $\lambda_{m\ell}(a\omega) >\lambda_1\}$
\item
$\mathcal{F}_{\mbox{$\natural$}}=\{(\omega, m, \ell)$ :
$|\omega|\ge \omega_1$, $\lambda_{m\ell}(a\omega) \ge \lambda_2\omega^2\}$
\item
$\mathcal{F}_{\mbox{$\sharp$}}=\{(\omega, m, \ell)$ :
$|\omega|\ge \omega_1$, $\lambda_{m\ell}(a\omega) <\lambda_2\omega^2\}$.
\end{itemize}
Our constructions will have a separate philosophy for each of the above ranges.

\subsection{The $\mathcal{F}_{\mbox{$\flat$}}$ range (bounded frequencies)}
\label{elow}
This is a compact frequency range and, in view of the fact that
we have already constructed multipliers in the Schwarzschild case (see~\cite{dr3, dr5}),
by stability considerations, their positivity properties carry over for $|a|\ll M$. See our
proof in~\cite{jnotes}.

In \cite{dr7}, we chose to give a different argument, which, on one hand, allowed us to treat 
the general axi-symmetric case $(|a|<M, m=0)$, and furthermore, is more useful in applications
to the openness regime of the general case.

Let us define 
\[
V_{\rm new}=\frac{4Mram\omega-a^2m^2}{(r^2+a^2)^2},
\]
\begin{eqnarray}
\label{plumi}
\nonumber
V_+ &=& V-V_{\rm new}\\
\nonumber
	&=& \frac{\Delta}{(r^2+a^2)^{4}}
\left( (\lambda_{m\ell}+\omega^2a^2)(r^2+a^2)^2+ (2Mr^3+a^2r^2+a^4-4Mra^2) \right).
\\
\end{eqnarray}
We have
\[
V_+> 0
\]
for $r>r_+$. Moreover, according to our conventions, for all $|a|\le a_s\ll M$,
\[
B(\Delta/r^2) \Lambda r^{-2}  +B r^{-3}\ge
V_+ 
\ge (\Delta/r^2) b \Lambda r^{-2}  +b(\Delta/r^2) r^{-3}.
\]

We may now fix an
arbitrary $r_++s_-> r_{c}>2M$, where $s_-$ is as in \eqref{eq:rshift} and can be chosen only to depend on $a_s$,
and it follows that for sufficiently small $a_s$ depending in particular on this choice,
we have
for
 $|a|\le a_s$ and $(\omega,m,\ell)\in \mathcal{F}_{\mbox{$\flat$}}$ 
 the following inequality in the region $r\ge r_c$:
\begin{equation}
\label{Vpos1}
B(\Delta/r^2) \Lambda r^{-2}  +B r^{-3}
\ge 
V 
\ge
(\Delta/r^2) b \Lambda r^{-2}  +b(\Delta/r^2) r^{-3},
\end{equation}
whereas, for all $r>r_+$, we have
\begin{equation}
\label{Vpos2}
|V'(r^*)|\le 
B(\Delta/r^2) \Lambda r^{-3}  +B(\Delta/r^2)r^{-4}.
\end{equation}
Finally, again choosing $a_s$ sufficiently small, 
it follows that for all $r\le r_{c}$ 
\begin{equation}
\label{andforlarger0}
V'(r^*)\ge b(\Delta/r^2) \Lambda,
\end{equation}
whereas there exists a constant $R_c$ such that for $r\ge R_c$,
\begin{equation}
\label{andforlarger}
-V'(r^*) \ge (\Delta/r^2) b \Lambda r^{-3}  +b(\Delta/r^2) r^{-4}
\end{equation}
in this frequency range. 

We will now
split the frequency range $\mathcal{F}_{\mbox{$\flat$}}$ into two subcases,
considering each separately.

\subsubsection{The subrange $|\omega|\le \omega_3$ (the near-stationary subcase)}
\label{oneofthesub}
The motivation for the current to be constructed here
is that in the Schwarzschild or $m=0$ case,
applying $Q_1^h$ with $h=1$
immediately excludes nontrivial stationary solutions $\omega=0$.

We will fix an $\omega_3>0$\index{fixed parameters! small parameters! $\omega_3$ (frequency parameter)}
which will be constrained in this subsection
to be small. Because $\omega_3$ is not exactly $0$ the naive current $h=1$ must
be modified.

We begin with a $Q_1^h$ 
current  which will be defined with an $h=h(r^*)$.

Note first that  given arbitrary $a_0<M$ and $q>0$\index{fixed parameters! 
small parameters! $q$ (associated
with $\mathcal{F}_{\mbox{$\flat$}}$)}, $p>0$\index{fixed parameters! small
parameters! $p$ (associated
with $\mathcal{F}_{\mbox{$\flat$}}$)}, $R_{c}>0$\index{fixed parameters!
large parameters! $R_c$ (large parameter associated
with $\mathcal{F}_{\mbox{$\flat$}}$)}
such that $e^{-p^{-1}}R_c$ is sufficiently large
and $p$ sufficiently small, for each $|a|\le a_s$,
we can define a function $h(r^*)$, such that the following hold:
For
$r\le r_{c}$,
\[
0\le h\le R_c^{-2}, \qquad |h''(r^*)|\le q/|r^*|^2,
\]
with $h(r^*)$ moreover of compact support when restricted to $-\infty<r^*\le r_{c}^*$,
whereas
for $r_{c}\le r\le e^{-p^{-1}}R_c$, 
\[
h\ge  R_c^{-2} \Delta/r^2, \qquad h''(r^*)\le 0,
\]
whereas
for $e^{-p^{-1}} R_c\le r \le R_c$,
\[
|h'(r^*)| \le 4R_{c}^{-2} p /r,
\qquad
|h''(r^*)| \le 4R_{c}^{-2} p/r^2,
\]
whereas, finally, for $r\ge R_{c}$,
\[
h=0.
\]
This $h$ will be useful in view of the positivity of $V_+$. 
We note finally that $h$ can be chosen so that $h$ restricted to $r\ge r_{c}$
is independent of the parameter $q$.

Let us also consider a current $Q_2^y$ 
defined with a $y=y(r^*)$. Like $h$, the function $y$
will be independent of $\omega$, $m$, and $\ell$ in the allowed range. 
Given the parameters $R_{c}$, $p$, with $e^{-p^{-1}}R_{3}$ sufficiently
large and $p$ sufficiently small, the function will satisfy the following properties:
We set  $y(r_{c}^*)=0$,
and for $r\le r_{c}$,
\[
y'(r^*)=h,
\]
noting that $y$ is bounded below in $r\le r_c$ by a potentially large negative constant
depending on $q$, in view of the fact that $h$ is identically
$0$ for sufficiently low $r^*$. 
For $r_{c}\le r\le e^{-p^{-1}}R_{c}$,   we require
\[
y'(r^*)\ge 0,\qquad 
(yV)'(r^*)\le \frac12 R_{c}^{-2} \Delta/r^2 ,
\]
whereas for $e^{-p^{-1}} R_{c}\le  r\le R_{c}$
\[
y'(r^*)\ge bR^{-2}_{c},\qquad -(yV)' (r^*)\ge bR^{-2}_{c}/r^2,
\]
whereas finally,
for  $r\ge R_{c}$,
\[
y=1.
\]  	
In general, for sufficiently large $e^{-p^{-1}}R_{c}$ 
and small $p$, we can indeed construct such a $y$
in view essentially of $(\ref{Vpos1})$, $(\ref{Vpos2})$ and $(\ref{andforlarger})$.
Let us add that $y$ restricted to $r\ge r_{c}$ can be chosen independently
of the choice of $h$.

Consider now the current $Q=Q_1^h+Q_2^y$. 
In $r\le r_{c}$, recall the one-sided bound
\[
-yV' \ge 0
\]
which follows from $(\ref{andforlarger0})$.
It follows that in $r\le r_{c}$,
\[
Q'(r^*)
 \ge 
-q|r^{*}|^{-2}|u|^2
+h{\rm Re}(u\bar H)+2y{\rm Re}(u'\bar H).
\]
For $r_{c} \le r \le e^{-p^{-1}} R_{c}$, choosing $p$ appropriately, choosing $\omega_3$, $a_s$ sufficiently small
\[
Q'(r^*)
 \ge R_{c}^{-2} (\Delta/r^2) |u'|^2 + b\, V_+ |u|^2   +h{\rm Re}(u\bar H) +2y{\rm Re}(u'\bar H).
\]
For $e^{-p^{-1}}R_c\le r \le R_c$, we have for $p$, $\omega_3$ suitably small 
\[
Q'(r^*)
\ge bR_{c}^{-2} (\Delta/r^2)  |u'|^2  +h{\rm Re}(u\bar H) +2y{\rm Re}(u'\bar H).
\]
Finally, for $r\ge R_{c}$ we have,
\[
Q'(r^*)
=(Q_2^y)'(r^*) \ge 2y{\rm Re}(u'\bar H).
\]
We obtain finally,  
\begin{align}
\notag
b(\lambda_1) \int_{r^*_{c}}^{R_c^*}
&(\Delta/r^2) |u'|^2  + (\Delta/r^2)r^{-3}(1+\Lambda
+\omega^2)|u|^2    \, dr^*\label{eq:q}\\
\le&
\int_{-\infty}^{r_{c}^*}
q|r^{*}|^{-2}|u|^2\, dr^*
 +\int_{-\infty}^{+\infty}\left(2y{\rm Re}(u'\bar H)+h{\rm Re}(\bar Hu) \right)\,dr^* \\
&+Q_2^y({{\infty}})-
(Q_1^h+Q_2^y)
({{-\infty}}).\notag
\end{align}

The above is the prototype of the type of
inequality we derive for all frequencies.
Cf.~$(\ref{modred})$. Note, however, the extra term proportional to $q$. We shall
deal with this in Section~\ref{noteonsum}.

\subsubsection{The subrange $|\omega|\ge \omega_3$ (the non-stationary subcase)}
The construction of this section will yield a positive current
for $\omega_3$  small and
$\omega_1$, $\lambda_1$
large depending on the smallness of $a_s$. 

Let us note that in fact, the
 construction of this section is quite general and depends only on the
(almost) positivity properties of the potential $V$ together with its 
asymptotic properties, and not, in particular, on the sign of $V'$,
which will be crucial in Section~\ref{kloubi}.
We note that similar constructions have a long tradition in spectral theory
and are typically used to prove continuity of the spectrum away from the $\omega=0$.

We consider a $Q_2^y$-current.
Recall that we have
\begin{align}
(Q_2^y)' (r^*)&= y'|u'|^2+\omega^2 y'|u|^2 - (yV)' |u|^2+ y{\text{Re}} (u'\overline H)\\ \label{eq:bl}&=
y'|u'|^2+\omega^2 y'|u|^2 - (yV_+)' |u|^2 - y'V_{\text{new}} |u|^2- yV'_{\text{new}} |u|^2+ y{\text{Re}} (u'\overline H) . 
\end{align}
Note that for general bounded $y\ge 0$, $y'>0$, we have
\begin{eqnarray*}
\int_{{-\infty}}^{\infty} (yV_{+})' |u|^2 &=& -\int_{{-\infty}}^{\infty} 
yV_{+}( u \bar{u}' +u'\bar{u})
 + yV_{+}|u|^2({\infty})-yV_{+}|u|^2({-\infty}) \\
&\le&
 \frac12\int_{{-\infty}}^{\infty} y' |u'|^2  + 
 2\int_{{-\infty}}^{\infty} (y^2V_{+}^2/y')|u|^2\\
 &&\hbox{}+ yV_{+}|u|^2({\infty})-yV_{+}|u|^2({-\infty}) \\
&\le&
\frac12\int_{{-\infty}}^{\infty} y' |u'|^2  + 2\int_{{-\infty}}^{\infty} (y^2V_{+}^2/y')
|u|^2.
\end{eqnarray*}

We now define 
\begin{equation}
\label{hereydef}
y= e^{-2\omega_3^{-1} \int_{r^*}^{\infty} V_{+} dr^*}.
\end{equation}
Note that in this case, we have that
\[
y'=
2\omega_{3}^{-1}V_{+}e^{-2\omega_3^{-1}\int_{r^*}^{\infty} V_{+} dr^*}.
\]
We thus have for this choice of $y$ that
\begin{eqnarray*}
\int_{{-\infty}}^{\infty} (yV_{+})' |u|^2
&\le&
\frac12\int_{{-\infty}}^{\infty} y' |u'|^2  + \frac12
\int_{{-\infty}}^{\infty} \omega_3^2 y' |u|^2.
\end{eqnarray*}
The potential $V_{\text{new}}$ obeys the estimates
$$
|V_{\text{new}}|\le a_s B (\la_1+a_s^2\omega^2_1) r^{-4},\qquad 
|V'_{\text{new}}|\le a_s B \Delta/r^2 (\la_1+a_s^2\omega^2_1) r^{-5}.
$$
Therefore, the term  $\frac12 \omega_3^2 y' |u|^2$ dominates both $-y'V_{\text{new}} |u|^2$ and $- yV'_{\text{new}} |u|^2$
in \eqref{eq:bl}, provided that $a_s\ll \omega_3^2$. Integrating, we finally obtain
the inequality:
\begin{align*}
b(\omega_3, \omega_1, \lambda_1) \int_{r^*_{c}}^{R_c^*}
&(\Delta/r^2) |u'|^2  + (\Delta/r^2)(1+(\Lambda+\omega^2)|u|^2\\
\le &\int_{{-\infty}}^{{\infty}}2y{\rm Re}(u'\bar H)
+Q_2^y({{\infty}})
-Q_2^y({{-\infty}}).
\end{align*}

\subsection{The $\mathcal{F}_{\lessflat}$ range (angular dominated frequencies)}

The considerations of this section will constrain
$\lambda_1$ to be suitably large, depending on the choice of $\omega_1$. This range is 
in fact covered by the arguments in Section~\ref{sec:large} for the general case $|a|<M$,
and, of course, 
in~\cite{dr7}. Briefly,
a simple construction in this regime involves the current $Q=Q_0^f+Q_1^h$ with a 
positive and monotonically increasing $f$ and a function $h$ of compact support peaking 
in the region where $V$ dominates $\omega^2$. The final resulting estimate is:
\begin{align*}
 b \int_{r^*_{c}}^{R_c^*}
 & \left (|u'|^2  + (\Delta/r^2) r^{-3}(1+(\Lambda+\omega^2) |u|^2\right)\\
\le
 &\int_{{-\infty}}^{{\infty}}2f{\rm Re}(u'\bar H)+(f'+h){\rm Re}(\bar Hu)\\
&+{Q_0^f}({{\infty}})-{Q_0^f}({{-\infty}}).
\end{align*}

\subsection{The $\mathcal{F}_{\mbox{$\natural$}}$ range (trapped frequencies)}
\label{kloubi}
This is the frequency range of trapping. It is once again subsumed by the 
trapped case $\mathcal{G}_{\mbox{$\natural$}}$ which is treated
in detail in Section~\ref{sec:large}.  

Thus, we take the opportunity here to merely 
highlight the simpler construction in~\cite{dr7} which is special to the $|a|\ll M$ case.
This is a good warm-up for following the detailed but
more technical construction of $\mathcal{G}_{\mbox{$\natural$}}$ in Section~\ref{thisissparta}.

The main observation is that 
$V'$ has a {\bf \emph{unique non-degenerate zero}}
$r_{m\ell}^{(a\omega)}$\index{fixed parameters! $r$-parameters!
$r_{m\ell}^{(a\omega)}$ (unique zero of $V'$ for
$\mathcal{F}_{\mbox{$\natural$}}$ range)} for 
$(\omega, m, \ell)\in\mathcal{F}_{\mbox{$\natural$}}$, depending smoothly on
$\omega$.
\emph{This can be thought of as a quantised version of the property of
trapped null orbits described at the end of Section~\ref{TNG}.}

We consider the current $Q_0^f$:
$$
(Q^f_0)'=2f'|u'|^2-fV' |u|^2 -\frac 12 f{'''}|u|^2+2 f \,{\text{Re}} (u'\overline{H}) + f'\, {\text{Re}} (u\overline{H}).
$$
The function $f$ is chosen in such a way that 
\begin{enumerate}
\item $f'(r^*)\ge 0$,
\item $f< 0$ for $r< r_{m\ell}^{(a\omega)}$ and $f> 0$ for $r> r_{m\ell}^{(a\omega)}$,
\item
\label{proper3}
 $f'''<0$ in the neighborhood of $r_{m\ell}^{(a\omega)}$,
 \item
 $-fV'>0$ dominates outside of a small neighborhood $r_{m\ell}^{(a\omega)}$.
 \end{enumerate}
We obtain finally
\begin{align*}
b\int_{r^*_{c}}^{R_c^*}&\left (
(\Delta/r^2)r^{-2}
|u'|^2  +  (\Delta/r^2)r^{-2}|u|^2+ 
\Delta/r^2 r^{-5} (r-r_{m\ell}^{(a\omega)})^2 (\Lambda+\omega^2) |u|^2\right) \\
\le
 &\int_{{-\infty}}^{{\infty}}2f{\rm Re}(u'\bar H)+f'{\rm Re}(\bar Hu)
+Q_0^f({\infty})-Q_0^f({-\infty}).
\end{align*}

\subsection{The $\mathcal{F}_{\mbox{$\sharp$}}$ range (time-dominated frequencies)}
\label{sharpest!}
Again, this range is subsumed by the $\mathcal{G}_{\mbox{$\sharp$}}$ regime in the general 
$|a|<M$ treated in detail in Section~\ref{sec:large}, so here we can be brief.

For small enough $\lambda_2$ and large enough $\omega_1$, 
we have
\[
\omega^2-V \ge \frac{1-c}2\omega^2,
\]
for frequencies
in $\mathcal{F}_{\mbox{$\sharp$}}$. We apply the current $Q_2^y$ with a function
$y$ chosen so that
$y\ge 0$, $y'\ge B(\Delta/r^2) r^{-2}$
in $(r_{+},R_c]$ and $y=1$ for $y\ge R_c+1$. Since, for 
$(\omega, m, \ell)\in  \mathcal{F}_{\mbox{$\sharp$}}$, we have
\[
|V'| \le B(\Delta/r^2)r^{-3}((\lambda_{m\ell}+a_0^2\omega^2)+1).
\]
We obtain
\begin{align*}
b\int_{r^*_{c}}^{R_c^*}
 &\Delta/r^2 r^{-2}\left(|u'|^2  +(\omega^2 + \Lambda)|u|^2\right)\\
\le&\int_{{-\infty}}^{{\infty}}2y{\rm Re}(u'\bar H)
+Q_2^y ({{\infty}})-
Q_2^y({{-\infty}}).
\end{align*}
See Section~\ref{sec:large} (or alternatively~\cite{dr7}) for more details.

\subsection{A note on summation and the red-shift}
\label{noteonsum}
We have obtained inequalities $(\ref{modred})$ for all frequency triples, 
except for an extra term
\[
\int_{-\infty}^{r_{c}^*}
q|r^{*}|^{-2}|u|^2\, dr^*
\]
coming from $(\ref{eq:q})$ corresponding to the $|\omega|\le \omega_3$
sub-case of the $\mathcal{F}_{\mbox{$\flat$}}$ frequency range.

After summing, we thus in fact obtain:
\begin{align}\label{eq:sum}
b \int_0^\infty d\tau \int_{\Sigma_\tau\cap \{r\in [R_0,R_1]\}}  \left (\chi |\pa\psi|^2 + |\psi|^2\right)&\le
\int_{\Sigma_0} |\pa\psi|^2 + q \int_0^\infty d\tau \int_{\Sigma_\tau\cap \{r\le r_c\}} |\pa\psi|^2 \\ &+
 \int_{{\mathcal H}^+} |{\bf J}^T_\mu n^\mu_{{\mathcal H}}| +
 \int_{{\mathcal I}^+} {\bf J}^T_\mu n^\mu_{{\mathcal I}} \notag
\end{align}

We now add the red-shift estimate \eqref{eq:rshift} and the ${\bf J}^T$ identity \eqref{eq:add} to eliminate 
the small $q$-term and the ${\mathcal H}^+, {\mathcal I}^+$ contributions. We obtain
again (cf.~$(\ref{komble})$)
$$
b \int_0^\infty d\tau \int_{\Sigma_\tau\cap \{r\in [R_0,R_1]\}}  \left (\chi |\pa\psi|^2 + |\psi|^2\right)\le
\int_{\Sigma_0} |\pa\psi|^2. 
$$
This is the desired integrated local energy decay estimate, which may now be coupled with the red-shift 
estimate and the estimates at null infinity \eqref{eq:newmeth} of section \ref{sec:three} to obtain the statements
of decay for the non-degenerate energy and pointwise decay for the solution of Theorem \ref{thrm:state}.

\section{The general $|a|<M$ case}\label{sec:large}

In Section~\ref{sum&comp}, we have reduced the problem to obtaining
the inequality $(\ref{closedredu})$ for all frequency ranges. 
In this section, we shall obtain precisely these inequalities, completing thus the proof.

We begin with a closer examination of some properties of the potential $V=
V^{(a\omega)}_{m\ell}(r)$.
Without loss of generality we may assume that the specific angular momentum $a$ is positive 
$0<a<M$, as the problem admits a symmetry with respect to simultaneous reversal of
the signs of $a$ and the azimuthal frequency $m$.

\subsection{Properties of the potential $V$}
We decompose the potential $V=V_0+V_1$
with 
\begin{align*}
&V_0=\frac{4Mram\omega -a^2m^2+\Delta(\lambda_{m\ell}+\omega^2a^2)}{(r^2+a^2)^2},\\
&V_1=\frac{\Delta(3r^2-4Mr+a^2)}{(r^2+a^2)^3} -\frac{3\Delta^2 r^2}{(r^2+a^2)^4}.
\end{align*}
We note that $V_1$ does not contain any frequency parameters and that
$$
V_1=\frac {\Delta}{(r^2+a^2)^4}\left[a^2\Delta+2Mr(r^2-a^2)\right]\ge 0.
$$
\begin{lemma}\label{lem:1}
For any values of the parameters $\Lambda> 0$, $m$ and $\omega$
the potential function $V_0$ on the interval 
$(r_+,\infty)$ has at most one maximum and one minimum, achieved at the respective points
$r^0_{max}$ and $r^0_{min}$ with the property that $r^0_{min}<r^0_{max}$.
Moreover, for all sufficiently large $\Lambda$ the value $r^0_{max}$ is bounded uniformly from above
provided that either $m\omega\ge 0$ or $a^2\omega^2\le C \Lambda$ for some constant $C$, in which case
the bound for $r^0_{max}$ may depend on $C$.

\end{lemma}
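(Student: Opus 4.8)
The plan is to reduce everything to the location of the zeros of a single cubic. Since $r^*=r^*(r)$ is a smooth increasing diffeomorphism of $(r_+,\infty)$ onto $\mathbb R$, the critical points of $V_0$ and their nature are the same whether we differentiate in $r$ or in $r^*$; so I work with $d/dr$. A direct computation from the formula for $V_0$ gives
\[
\frac{dV_0}{dr}=\frac{P(r)}{(r^2+a^2)^3},\qquad P(r)=-2\Lambda(r^3-3Mr^2+a^2r+Ma^2)+4Mam\omega\,(a^2-3r^2)+4a^2m^2r,
\]
a cubic with leading coefficient $-2\Lambda<0$; thus $P(r)\to-\infty$ as $r\to\infty$ (consistent with $V_0\to0^+$), and the critical points of $V_0$ in $(r_+,\infty)$ are exactly the zeros of $P$ there. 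Once we know that $P$ has at most two zeros in $(r_+,\infty)$, the shape of the extrema follows for free: since $P$ is eventually negative, two interior zeros $\rho_1<\rho_2$ force $P<0$ on $(r_+,\rho_1)$, $P>0$ on $(\rho_1,\rho_2)$, $P<0$ on $(\rho_2,\infty)$, i.e.\ $V_0$ is decreasing, then increasing, then decreasing, so $\rho_1=r^0_{min}$, $\rho_2=r^0_{max}$ and $r^0_{min}<r^0_{max}$; one interior zero gives a single maximum, and none gives monotonicity.

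To bound the number of interior zeros of $P$ I would run a Rolle-type descent through $P'$ and $P''$. Here $P''(r)=-12\Lambda(r-M)-24Mam\omega$ is linear, with single zero $r_{\mathrm{infl}}=M(1-2am\omega/\Lambda)$. If $am\omega\ge-\tfrac{\Lambda\sqrt{M^2-a^2}}{2M}$ (in particular whenever $m\omega\ge0$) then $r_{\mathrm{infl}}\le r_+$, so $P''$ has no zero on $(r_+,\infty)$; by Rolle $P'$ has at most one zero there and $P$ at most two. The remaining case is $am\omega<-\tfrac{\Lambda\sqrt{M^2-a^2}}{2M}$, which forces $m\omega<0$. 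Suppose, for contradiction, $P$ had three zeros $\rho_1<\rho_2<\rho_3$ in $(r_+,\infty)$. By Rolle $P'$ has zeros $c_1\in(\rho_1,\rho_2)$, $c_2\in(\rho_2,\rho_3)$, so $r_+<\rho_1<c_1$; but $P'$ is a downward parabola (leading coefficient $-6\Lambda<0$), hence $P'\le0$ on $(-\infty,c_1]$ and so $P'(r_+)\le0$. On the other hand, using $r_+^2+a^2=2Mr_+$ one computes $P'(r_+)=4a^2(\Lambda+m^2)-24Mam\omega\,r_+$, which is strictly positive when $m\omega<0$ and $a>0$ --- a contradiction. (Degenerate configurations with a multiple zero of $P$ only lower the count and are covered by the same argument.) I expect this last case --- excluding a third interior critical point exactly when there is an interior inflection point --- to be the one real obstacle, and the resolution is simply the favourable sign of the boundary term $P'(r_+)$.

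For the final assertion I would make the leading ($\Lambda$‑)part dominant. Write
\[
\frac{P(r)}{-2\Lambda}=(r^3-3Mr^2+a^2r+Ma^2)+\frac{2Mam\omega}{\Lambda}(3r^2-a^2)-\frac{2a^2m^2}{\Lambda}\,r,
\]
and note $3r^2-a^2>0$ and $m^2/\Lambda\le1$ on $(r_+,\infty)$ by \eqref{eq:lam}. If $m\omega\ge0$ the middle term is nonnegative, so $P(r)/(-2\Lambda)\ge r^3-3Mr^2-a^2r+Ma^2$, which is positive for $r\ge R_0$ with $R_0$ an absolute multiple of $M$ (e.g.\ $R_0=4M$); hence $P<0$ on $[R_0,\infty)$ and every critical point of $V_0$, in particular $r^0_{max}$, lies in $(r_+,R_0)$. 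If instead $a^2\omega^2\le C\Lambda$, then $|am\omega|/\Lambda=|m|\,|a\omega|/\Lambda\le\tfrac12(m^2+a^2\omega^2)/\Lambda\le\tfrac12(1+C)$, and the same estimate yields $P(r)/(-2\Lambda)\ge r^3-c(C)Mr^2-c(C)M^2r-c(C)M^3$, positive for $r\ge R_0(M,C)$, so again $r^0_{max}<R_0(M,C)$. The hypothesis that $\Lambda$ be large enters only to ensure that a maximum is actually attained: for $\Lambda$ large, $V_0$ is a small perturbation of $\Lambda\,\Delta/(r^2+a^2)^2$, whose derivative vanishes at the unique root of $r^3-3Mr^2+a^2r+Ma^2$ in $(r_+,\infty)$, so $V_0$ does have an interior maximum there.
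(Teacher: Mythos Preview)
Your argument is correct and closely parallels the paper's, with a slightly different case split. Both proofs reduce to showing that the cubic $P(r)=(r^2+a^2)^3\,dV_0/dr$ has at most two zeros in $(r_+,\infty)$, and both do this by controlling $P'$. The paper computes the two roots $r_1,r_2$ of the quadratic $P'$ explicitly and checks, separately for $\sigma=am\omega/\Lambda>0$ and $\sigma<0$, that the smaller root $r_2$ always lies below $r_+$ (using $r_2<M$ in the first case and the estimate $r_2<\tfrac29 M$ in the second). You instead descend one step further to the linear $P''$: when its unique zero $r_{\mathrm{infl}}$ lies at or below $r_+$ you are done immediately by Rolle, and in the remaining case you exclude a third root of $P$ by computing the boundary value $P'(r_+)=4a^2(\Lambda+m^2)-24Mam\omega\,r_+$ and observing it is strictly positive when $m\omega<0$. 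This is a clean alternative to the paper's explicit root estimates; it avoids the somewhat delicate inequality $\sqrt{1-x}>1-\tfrac{2x}{3}$ used there, at the cost of the extra boundary computation. For the uniform bound on $r^0_{max}$ your treatment is more explicit than the paper's (which just indicates the leading-order behaviour of $P$ at infinity), but the content is the same.
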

\begin{proof}
We compute 
\begin{align*}
\frac {d}{dr} V_0&=4maM\omega \left [\frac 1{(r^2+a^2)^2} - \frac {4r^2}{(r^2+a^2)^3}\right ]
+\frac {4ra^2m^2}{(r^2+a^2)^3} + \frac \Lambda{(r^2+a^2)^2} \left [2(r-M) -\frac {4r\Delta}{r^2+a^2}\right]\\
&=\frac 1{(r^2+a^2)^3} \left [4maM\omega(-3r^2+a^2) + 4ra^2m^2\right]\\ &+\frac \Lambda{(r^2+a^2)^3}
(2r^3+2a^2r-2Mr^2-2Ma^2
-4r^3+8Mr^2-4ra^2)\\& =\frac 1{(r^2+a^2)^3} \left [4maM\omega(-3r^2+a^2) + 4ra^2m^2-2\Lambda(r^3+a^2r-3Mr^2+Ma^2)\right].
\end{align*}
Thus, we have 
\begin{align*}
\frac d{dr} \left [(r^2+a^2)^3 \frac d{dr} V_0\right]&=-24Mam\omega r + 4a^2m^2-2\Lambda(3r^2-6Mr+a^2)\\ &=-6\Lambda\left [r^2-2Mr +4Mr\sigma+\frac{a^2}3-\frac 23 a^2 \frac{m^2}\Lambda\right], 
\end{align*}
where we set 
$$
\sigma=\frac {am\omega}\Lambda.
$$
We now examine the points where $\frac d{dr} \left [(r^2+a^2)^3 \frac d{dr} V_0\right]$ vanishes.
Such points are solutions of the equation
$$
r^2-2Mr(1-2\sigma)+\frac {a^2}3(1-\frac {2m^2}{\Lambda})=0,
$$
let us denote them by:
$$
r_{1,2}=M(1-2\sigma)\pm\sqrt{M^2(1-2\sigma)^2-\frac {a^2}3(1-\frac {2m^2}\Lambda)}.
$$
We recall the horizon is determined by the value $r_+:$\,\,\, $r_+^2-2Mr_++a^2=0$. In particular,
$r_+>M$ for all values of $a<M$.

If $\sigma>0$ then $r_2<M$ and the only point where $\frac d{dr} \left [(r^2+a^2)^3 \frac d{dr} V_0\right]=0$ on the interval $(r_+,\infty)$ is 
$$
r_1=M(1-2\sigma)+\sqrt{M^2(1-2\sigma)^2-\frac {a^2}3(1-\frac {2m^2}\Lambda)}.
$$
In that case we note that the function $(r^2+a^2)^3 \frac d{dr} V_0\to -\infty$ as $r\to \infty$,
as long as $\Lambda>0$. Therefore, $\frac d{dr} V_0$ can vanish at most at two points 
$r^0_{min}$ and $r^0_{max}$, where 
$$\frac{ d^2}{dr^2} V_0(r^0_{min})>0,\qquad 
\frac{ d^2}{dr^2} V_0(r^0_{max})<0
$$ 
and $r^0_{min}<r^0_{max}$.

If $r_1$ is not real then  $\frac d{dr} \left [(r^2+a^2)^3 \frac d{dr} V_0\right]$ remains negative
for all $r\ge r_+$, which means that $\frac d{dr} V_0$ can only vanish at one point, where
$V_0$ has a maximum.

If $\sigma<0$ then we must examine 
$$
r_2=M(1-2\sigma)-\sqrt{M^2(1-2\sigma)^2-\frac {a^2}3(1-\frac {2m^2}\Lambda)}=
M(1-2\sigma)\left [1-\sqrt{1-\frac {a^2(1-\frac {2m^2}\Lambda)}{3M^2(1-2\sigma)^2}}\right].
$$
Since $a<M$ and $\sigma<0$, we have
$$
\frac {a^2(1-\frac {2m^2}\Lambda)}{3M^2(1-2\sigma)^2}<\frac 13.
$$
We note that for $0\le x<\frac 13$,
$$
\sqrt{1-x}> 1-\frac{2x}3,
$$
which implies that 
$$
r_2<\frac{2M(1-2\sigma)a^2(1-\frac{2m^2}\Lambda)}{9M^2(1-2\sigma)^2}=
\frac {2a^2(1-\frac{2m^2}\Lambda)}{9M(1-2\sigma)}<\frac 29 M.
$$
This now implies that $r_1$ is the only zero of  $\frac d{dr} \left [(r^2+a^2)^3 \frac d{dr} V_0\right]$
on the interval $[r_+,\infty)$ and the previous argument applies.

The last statement of the lemma easily follows from observing that for all sufficiently large $\Lambda$
the leading order behavior of $ \left [(r^2+a^2)^3 \frac d{dr} V_0\right]$ as $r\to \infty$ is determined by the 
expression
$$
(6\La M -12 Mam\omega)r^2-2\Lambda r^3.
$$
\end{proof}
The next statement effectively establishes that even if $r_{min}$ exists, it can only be `trapped' 
for the value $\omega=\omega_+$.
\begin{lemma}\label{lem:2}
For all values of the parameters $\Lambda$,$\omega$,$m$,
we have that on the horizon
$$
\omega^2\ge V(r_+)
$$
with equality achieved only for $\omega=\omega_+$.
In particular, in the notations of the previous lemma, this implies that 
$$
\omega^2>V_0(r_{min}).
$$
\end{lemma}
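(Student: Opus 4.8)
The plan is to evaluate $V$ on the horizon explicitly, recognize the resulting expression as $2\omega\,\omega_+-\omega_+^2$, and then read off both assertions, using Lemma~\ref{lem:1} for the statement about $V_0(r_{min})$.

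First I would compute $V(r_+)$. Since $\Delta(r_+)=0$, every term of $V^{(a\omega)}_{m\ell}(r)$ carrying a factor of $\Delta$ drops out, leaving
$$
V(r_+)=\frac{4Mr_+am\omega-a^2m^2}{(r_+^2+a^2)^2}.
$$
From $\Delta(r_+)=0$ one has the identity $r_+^2+a^2=2Mr_+$, hence $(r_+^2+a^2)^2=4M^2r_+^2$ and
$$
V(r_+)=\frac{am\omega}{Mr_+}-\frac{a^2m^2}{4M^2r_+^2}=2\omega\,\omega_+-\omega_+^2,\qquad \omega_+=\frac{am}{2Mr_+}.
$$
Therefore $\omega^2-V(r_+)=\omega^2-2\omega\,\omega_++\omega_+^2=(\omega-\omega_+)^2\ge 0$, with equality if and only if $\omega=\omega_+$. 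This is the first assertion, and it also re-derives the characterization of $\omega_+$ quoted in Section~\ref{sct}.

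For the last statement I would argue as follows. Since $V_1=\frac{\Delta}{(r^2+a^2)^4}\big[a^2\Delta+2Mr(r^2-a^2)\big]$ is divisible by $\Delta$, we have $V_1(r_+)=0$, so $V_0(r_+)=V(r_+)$. By Lemma~\ref{lem:1}, $V_0$ has on $(r_+,\infty)$ at most one local minimum and at most one local maximum, and $r^0_{min}<r^0_{max}$ whenever both exist; hence, if $r_{min}$ exists at all, it is the first critical point of $V_0$ to the right of $r_+$, and being a local minimum it forces $V_0'<0$ throughout $(r_+,r_{min})$. Thus $V_0$ is strictly decreasing on that interval, so
$$
V_0(r_{min})<V_0(r_+)=V(r_+)\le\omega^2,
$$
which gives the claimed strict inequality.

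The substantive point is the computation in the second paragraph: it is the (mildly surprising) algebraic fact that $V(r_+)$ is exactly $2\omega\,\omega_+-\omega_+^2$, so that $\omega^2-V(r_+)$ is a perfect square vanishing precisely at $\omega=\omega_+$. Everything after that is bookkeeping; the only place requiring a moment's care is ensuring that $r_{min}$, when it exists, cannot sit beyond a critical point of the other type, and this is exactly what the ordering $r^0_{min}<r^0_{max}$ in Lemma~\ref{lem:1} rules out.
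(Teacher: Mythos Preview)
Your proof is correct and follows essentially the same computation as the paper: both exploit $\Delta(r_+)=0$ and $r_+^2+a^2=2Mr_+$ to reduce $\omega^2-V(r_+)$ to a perfect square, the paper writing it as $\frac{(2Mr_+\omega-am)^2}{4M^2r_+^2}$ while you write the equivalent $(\omega-\omega_+)^2$. Your treatment is in fact slightly more complete, as you spell out the monotonicity argument for $V_0(r_{min})<V_0(r_+)$ that the paper leaves implicit.
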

\begin{proof}
We compute 
\begin{align*}
\omega^2-V(r_+)&=\omega^2-\frac {4Mr_+am\omega-a^2m^2}{(r_+^2+a^2)^2}\\ &=
\frac 1{4M^2r_+^2}\left [4M^2r_+^2\omega-4Mr_+am\omega+a^2m^2\right]=
\frac {(2Mr_+\omega-am)^2}{4M^2r_+^2}.
\end{align*}
\end{proof}
The equality
$$
\omega^2=V(r_+)
$$
occurs precisely at the threshold of the super-radiance condition
$$
\omega=\omega_+=\frac {am}{2Mr_+}.
$$
We now argue that in the super-radiant regime $V_0$ has only a maximum, that is
the point $r_{min}$ is absent.
\begin{lemma}\label{lem:3}
Let 
$$
m\omega\le \frac {am^2}{2Mr_+}.
$$
Then
$$
\frac {d}{dr} V(r_+)\ge \frac {d}{dr} V_0(r_+)\ge c\Lambda>0.
$$
Since we have already shown that $r^0_{min}$, if exists, has the property that $r^0_{min}<r^0_{max}$
this implies that $r^0_{min}$ does not exist and the potential $V_0$ has its unique critical point at $r^0_{max}$.
\end{lemma}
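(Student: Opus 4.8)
The plan is to reduce the whole statement to evaluating $\tfrac{d}{dr}V_0$ at the horizon and tracking its sign. First I would dispose of $V_1$: from the identity $V_1=\frac{\Delta}{(r^2+a^2)^4}\bigl(a^2\Delta+2Mr(r^2-a^2)\bigr)$ recorded above, one has $V_1\ge0$ on $[r_+,\infty)$ (since $\Delta\ge0$ and $r\ge r_+>a$ there) while $V_1(r_+)=0$ because $\Delta(r_+)=0$. Hence $r_+$ is a minimum of $V_1$ restricted to $[r_+,\infty)$, so $\tfrac{d}{dr}V_1(r_+)\ge0$, which already gives the first inequality $\tfrac{d}{dr}V(r_+)\ge\tfrac{d}{dr}V_0(r_+)$.

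For the second inequality I would substitute $r=r_+$ into the formula for $\tfrac{d}{dr}V_0$ derived in the proof of Lemma~\ref{lem:1}, simplifying with $r_+^2=2Mr_+-a^2$. Then $r_+^3+a^2r_+-3Mr_+^2+Ma^2=-M(r_+^2-a^2)=-2M(Mr_+-a^2)$ and $a^2-3r_+^2=-2(3Mr_+-2a^2)$, so $(r_+^2+a^2)^3\,\tfrac{d}{dr}V_0(r_+)=2\Lambda M(r_+^2-a^2)-8Mam\omega(3Mr_+-2a^2)+4a^2m^2r_+$. Since $r_+\ge M>a$, both $3Mr_+-2a^2$ and $Mr_+-a^2$ are positive, so the coefficient of $m\omega$ is negative. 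The hypothesis $m\omega\le\frac{am^2}{2Mr_+}$ then lets me replace $m\omega$ by its largest allowed value (the case $m\omega\le0$ being even easier, as then the last two terms are already $\ge0$), which, using $r_+^2-3Mr_+ +2a^2=-(Mr_+-a^2)$, yields $(r_+^2+a^2)^3\tfrac{d}{dr}V_0(r_+)\ge 2\Lambda M(r_+^2-a^2)-\frac{4a^2m^2(Mr_+-a^2)}{r_+}$. Finally, since $0\le m^2\le\Lambda$ by \eqref{eq:lam} and the $m^2$-coefficient is nonpositive, this is $\ge\frac{2\Lambda}{r_+}\bigl(Mr_+^3-3Ma^2r_+ +2a^4\bigr)$.

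The one genuinely nontrivial ingredient — the place where the ``miracle'' of Section~\ref{edwgeviko} is hiding — is the elementary algebraic identity $Mr_+^3-3Ma^2r_+ +2a^4=2(M^2-a^2)r_+^2$, which I would check by writing $r_+-M=\sqrt{M^2-a^2}$ (equivalently, setting $\mu=\sqrt{M^2-a^2}$, $r_+=M+\mu$, $a^2=M^2-\mu^2$ and expanding, when the right-hand side becomes $2\mu^2(M+\mu)^2$). With it, the quadratic-in-$m$ expression one a priori fears could be negative is in fact bounded below by a positive multiple of $(M^2-a^2)$, and $\tfrac{d}{dr}V_0(r_+)\ge \frac{4(M^2-a^2)r_+}{(r_+^2+a^2)^3}\,\Lambda=:c\Lambda$ with $c=c(a,M)>0$ for $|a|<M$ (degenerating as $|a|\to M$, in keeping with the extremal instability). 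This proves the displayed chain of inequalities.

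For the concluding assertion: by Lemma~\ref{lem:1}, $\tfrac{d}{dr}V_0$ has at most the two zeros $r^0_{min}<r^0_{max}$ in $(r_+,\infty)$; if $r^0_{min}$ existed, $\tfrac{d}{dr}V_0$ would be negative throughout $(r_+,r^0_{min})$, contradicting $\tfrac{d}{dr}V_0(r_+)>0$ (for $\Lambda>0$) together with continuity. Hence $r^0_{min}$ cannot occur, and since $(r^2+a^2)^3\tfrac{d}{dr}V_0\to-\infty$ as $r\to\infty$ (leading term $-2\Lambda r^3$), $\tfrac{d}{dr}V_0$ changes sign exactly once and $V_0$ has the single critical point $r^0_{max}$. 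I expect no real obstacle beyond the bookkeeping above; the only step that is not automatic is recognizing and verifying the identity $Mr_+^3-3Ma^2r_+ +2a^4=2(M^2-a^2)r_+^2$.
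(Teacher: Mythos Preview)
Your proof is correct and follows essentially the same strategy as the paper: evaluate $\frac{d}{dr}V_0$ at $r_+$, substitute the superradiance bound on $m\omega$, and use $m^2\le\Lambda$. The paper's algebra is a bit more direct --- after the substitution it factors out $(r_+-M)$ to reach $(r_+^2+a^2)^3\tfrac{d}{dr}V_0(r_+)\ge 4(r_+-M)\bigl[\Lambda Mr_+-a^2m^2\bigr]$, from which positivity is immediate and which (via $Mr_+-a^2=(r_+-M)r_+$) reproduces your constant without the separate cubic identity; for $V_1$ the paper simply computes $\tfrac{d}{dr}V_1(r_+)$ explicitly rather than using your minimum argument.
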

\begin{remark}\label{rem:freq}
The conclusion of the lemma can be easily extended to the range 
$$
am\omega\le \frac {a^2m^2}{2Mr_+}+\alpha\Lambda
$$
for a sufficiently small constant $\alpha$. The smallness of $\alpha$ depends on the smallness
of the quantity $(M-a)$. 
\end{remark}
\begin{proof}
We compute
\begin{align*}
\frac d{dr} V_0(r_+)&=\frac{4maM\omega}{(r_+^2+a^2)^3}  \left [-3r_+^2+a^2\right ]
+\frac {4r_+a^2m^2}{(r_+^2+a^2)^3} + \frac {2(r_+-M)\Lambda}{(r_+^2+a^2)^2} \\ &=
\frac 1{(r_+^2+a^2)^3}\left [4maM\omega [-3r_+^2+a^2] + 4r_+a^2m^2+2(r_+^2+a^2)(r_+-M)\Lambda\right]
\end{align*}
For the values of the parameters $m\omega<0$ the conclusion of the lemma is obvious. Otherwise,
using the condition
$$
m\omega\le \frac {am^2}{2Mr_+}
$$
we obtain
\begin{align*}
(r_+^2+a^2)^3\frac d{dr} V_0(r_+)&\ge \left [\frac {2a^2m^2}{r_+} (-3r_+^2+a^2) +4r_+a^2m^2+2(r_+^2+a^2)(r_+-M)\Lambda\right]\\ &= \left[\frac {2a^2m^2}{r_+} (-r_+^2+a^2) +2(r_+^2+a^2)(r_+-M)\Lambda\right]\\ &=2(r_+-M)\left [\Lambda (r_+^2+a^2)-2a^2m^2\right]\\&=
4(r_+-M)\left [\Lambda Mr_+-a^2m^2\right].
\end{align*}
The inequalities $\Lambda\ge m^2$ and $r_+>M>a$ imply that $\frac d{dr} V_0(r_+)>0$. 
We finish the proof by observing that 
$$
\frac {d}{dr} V_1(r_+)=\frac {4Mr_+(r_+-M)(r_+^2-a^2)}{(r_+^2+a^2)^4}>0.
$$
\end{proof}

Our final result of this section gives a mathematical embodiment to the miracle which is
key to the validity of Theorem~\ref{thrm:state}, namely, the fact that 
{\bf \emph{superradiant frequencies are not `trapped'.}}
\begin{lemma}\label{lem:4}
For the values $0\le a<M$ and 
$$
0\le m\omega\le \frac {am^2}{2Mr_+}.
$$
we have 
$$
\omega^2<V_0(r^0_{max}).
$$
\end{lemma}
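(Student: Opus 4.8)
The plan is to reduce the whole statement to one algebraic identity. Writing $V_0(r)-\omega^2$ as $\bigl(4Mram\omega-a^2m^2+\Delta\Lambda-\omega^2(r^2+a^2)^2\bigr)/(r^2+a^2)^2$, I would complete the square in $\omega$ in the numerator and use the factorisation $(r^2+a^2)^2-4M^2r^2=\Delta\,(r^2+2Mr+a^2)$ to obtain, for all $r>r_+$,
\[
V_0(r)-\omega^2=U(r)-\bigl(\omega-g(r)\bigr)^2,
\]
where $g(r):=2Mram/(r^2+a^2)^2$ and $U(r):=\frac{\Delta}{(r^2+a^2)^4}\bigl[\Lambda(r^2+a^2)^2-a^2m^2(r^2+2Mr+a^2)\bigr]$. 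At $r=r_+$ both terms vanish, $U(r_+)=0$ and $g(r_+)=\omega_+=am/(2Mr_+)$, which is consistent with Lemma~\ref{lem:2}.

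Next I would establish two elementary facts. First, $U(r)>0$ for every $r>r_+$: there $\Delta>0$, and since $\Lambda\ge|m|(|m|+1)\ge m^2$ by \eqref{eq:lam} it suffices that $\phi(r):=(r^2+a^2)^2-a^2(r^2+2Mr+a^2)>0$ on $[r_+,\infty)$; using $r_+^2+a^2=2Mr_+$ one gets $\phi(r_+)=4Mr_+(Mr_+-a^2)>0$, and $\phi'(r)=2(2r^3+a^2r-a^2M)>0$ for $r\ge r_+>M>a$, so $\phi>0$. Second, $g$ is continuous, strictly decreasing on $[r_+,\infty)$, with $g(r_+)=\omega_+$ and $g(r)\to0$ as $r\to\infty$ --- this follows from $g(r_+)=2Mr_+am/(4M^2r_+^2)=\omega_+$ and $g'(r)=2Mam(a^2-3r^2)/(r^2+a^2)^3<0$ on $[r_+,\infty)$ --- hence $g$ maps $(r_+,\infty)$ bijectively onto $(0,\omega_+)$.

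With these in hand the argument is short. Superradiance forces $m\neq0$, and following the paper's convention $0<a<M$ I would take $m>0$, so the hypothesis reads $0\le\omega\le\omega_+$. By Lemma~\ref{lem:3} one has $\frac{d}{dr}V_0(r_+)>0$ and no $r^0_{min}$ in this range, while $(r^2+a^2)^3\frac{d}{dr}V_0\to-\infty$ as $r\to\infty$ (Lemma~\ref{lem:1}); hence $r_0:=r^0_{max}$ exists, is the unique critical point of $V_0$, and is the site of the global maximum of $V_0$ on $(r_+,\infty)$. If $0<\omega<\omega_+$, I pick $r_\omega\in(r_+,\infty)$ with $g(r_\omega)=\omega$; then the identity and $U>0$ give $V_0(r_\omega)-\omega^2=U(r_\omega)>0$, so $V_0(r^0_{max})\ge V_0(r_\omega)>\omega^2$. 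If $\omega=\omega_+$, the identity gives $V_0(r_+)=\omega_+^2$, and $\frac{d}{dr}V_0(r_+)>0$ yields $V_0(r^0_{max})>V_0(r_+)=\omega_+^2$ (equivalently, near $r_+$ the term $U$ vanishes to first order while $(\omega_+-g)^2$ vanishes to second order). If $\omega=0$, I use $V_0(r_0)=\bigl(2Mam\omega+(r_0-M)\Lambda\bigr)/\bigl(2r_0(r_0^2+a^2)\bigr)$, which follows from $\frac{d}{dr}V_0(r_0)=0$ and at $\omega=0$ equals $(r_0-M)\Lambda/\bigl(2r_0(r_0^2+a^2)\bigr)>0$. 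This exhausts $0\le\omega\le\omega_+$.

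The hard part is purely algebraic: finding and checking the identity $V_0-\omega^2=U-(\omega-g)^2$, i.e.\ seeing the completion of the square in $\omega$ together with $(r^2+a^2)^2-4M^2r^2=\Delta(r^2+2Mr+a^2)$. Once it is in place, the positivity of $U$ is an elementary polynomial estimate on $[r_+,\infty)$, the bijectivity of $g$ onto $(0,\omega_+)$ is immediate, and the conclusion is the one-line remark that $g$ attains the value $\omega$ somewhere in $(r_+,\infty)$; the only remaining bookkeeping is the two degenerate endpoints $\omega\in\{0,\omega_+\}$, where $r_\omega$ escapes to $\infty$ or to $r_+$ and one reverts to Lemma~\ref{lem:3}.
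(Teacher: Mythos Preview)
Your proof is correct. The core strategy is the same as the paper's: exhibit a point $r_*\in(r_+,\infty)$ at which $V_0(r_*)>\omega^2$, and conclude since $r^0_{\max}$ is the global maximum. Both arguments ultimately rest on the same two algebraic ingredients, namely the factorisation $(r^2+a^2)^2-4M^2r^2=\Delta\,(r^2+2Mr+a^2)$ and the inequality $\Lambda\ge m^2$.

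The difference is in packaging and in the choice of test point. The paper takes $r_0$ defined by $\omega=\dfrac{am}{2Mr_0}$ (an explicit formula $r_0=am/(2M\omega)$) and computes $\omega^2-V_0(r_0)$ directly, obtaining
\[
\omega^2-V_0(r_0)=\frac{\Delta}{(r_0^2+a^2)^2}\left[\frac{a^2m^2}{4M^2}\Bigl(1+\tfrac{2M}{r_0}+\tfrac{a^2}{r_0^2}\Bigr)-\Lambda\right]<0.
\]
You instead complete the square in $\omega$ once and for all, obtaining the global identity $V_0-\omega^2=U-(\omega-g)^2$, and then pick $r_\omega$ as the (implicitly defined) root of $g(r_\omega)=\omega$, where the identity immediately gives $V_0(r_\omega)-\omega^2=U(r_\omega)>0$. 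Your test point is different from the paper's (the function $g(r)=2Mram/(r^2+a^2)^2$ is not $am/(2Mr)$ away from $r_+$), and your identity is valid for all $r$ rather than tailored to one point. The paper's choice is more elementary in that $r_0$ is explicit; yours is more systematic and makes the structure $V_0-\omega^2=U-(\omega-g)^2$ transparent. You are also more careful about the boundary cases $\omega=0$ and $\omega=\omega_+$, where the paper's $r_0$ escapes to $\infty$ or hits $r_+$; your separate treatment of these (including the nice critical-point formula $V_0(r^0_{\max})=\bigl(2Mam\omega+(r^0_{\max}-M)\Lambda\bigr)\big/\bigl(2r^0_{\max}(r^{0\,2}_{\max}+a^2)\bigr)$) is a genuine, if minor, improvement.
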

\begin{proof}
Let $r_0\ge r_+$ be such that
$$
m\omega= \frac {am^2}{2Mr_0}.
$$
We compute 
\begin{align*}
\omega^2-V_0(r_0)&=\omega^2-\frac {4Mr_0am\omega-a^2m^2+\Delta\Lambda}{(r_0^2+a^2)^2}\\ &=
\frac 1{(r_0^2+a^2)^2}\left [(r_0^2+a^2)^2\omega^2-4Mr_0am\omega+a^2m^2-\Delta\Lambda\right]\\ &=
\frac 1{(r_0^2+a^2)^2}\left [4M^2r_0^2\omega^2-4Mr_0am\omega+a^2m^2+\omega^2\left ((r_0^2+a^2)^2-4Mr_0^2\right)-\Delta\Lambda\right]\\&=\frac {\omega^2(r_0^2-2Mr_0+a^2)(r_0^2+2Mr_0+a^2)-\Delta\Lambda}{(r_0^2+a^2)^2}\\ &=\frac \Delta{(r_0^2+a^2)^2}\left[\frac {a^2 m^2}{4M^2} (1+\frac{2M}{r_0}+\frac{a^2}{r_0^2})-\Lambda\right].
\end{align*}
We now recall that $a<M<r_0$ and that $\Lambda\ge m(m+1)$ to conclude that 
$$
\omega^2-V_0(r_0)< 0.
$$
\end{proof}

\subsection{The frequency ranges}
\label{freerange}
Let $\omega_1$\index{fixed parameters! small parameters! $\omega_1$ (frequency parameter)}, 
$\lambda_1$\index{fixed parameters! large parameters! $\lambda_1$ (frequency parameter)}
be (potentially large) parameters
to be determined, and $\lambda_2$\index{fixed parameters! small parameters! $\lambda_2$ (frequency parameter)}
be a (potentially small) parameter to be determined.
We define the frequency ranges
$\mathcal{G}_{\mbox{$\flat$}}$\index{frequency ranges! $\mathcal{G}_{\mbox{$\flat$}}$ (frequency triplet range,
`bounded' frequencies)}, 
$\mathcal{G}_{\lessflat}$\index{frequency ranges! $\mathcal{G}_{\lessflat}$ (frequency triplet range,
`angular-dominated' frequencies)},
$\mathcal{G}_{\mbox{$\natural$}}$\index{frequency ranges! $\mathcal{G}_{\mbox{$\natural$}}$ (frequency triplet range, `trapped' frequencies)},
 $\mathcal{G}_{\mbox{$\sharp$}}$\index{frequency ranges! $\mathcal{G}_{\mbox{$\natural$}}$ (frequency triplet range, `time-dominated' frequencies)},  $\mathcal{G}^{\mbox{$\sharp$}}$ 
 by
\begin{itemize}
\item
$\mathcal{G}^{\mbox{$\sharp$}}=\{(\omega, m, \ell)$ :
$\omega^2+\Lambda \ge \lambda_1$, $m\omega\in [0,\frac {am^2}{2Mr_+}]\}$,
\item
$\mathcal{G}_{\mbox{$\sharp$}}=\{(\omega, m, \ell)$ :
$|\omega|\ge \omega_1$, $\Lambda<\lambda_2 \omega^2, m\omega\not\in [0,\frac {am^2}{2Mr_+}]\}$,
\item
$\mathcal{G}_{\mbox{$\lessflat$}}=\{(\omega, m, \ell)$ :
$|\omega|\ge \omega_1$, $\lambda_2 \Lambda > \omega^2, m\omega\not\in [0,\frac {am^2}{2Mr_+}]\}$,
\item
$\mathcal{G}_{\mbox{$\natural$}}=\{(\omega, m, \ell)$ :
$|\omega|\ge \omega_1$, $\lambda_2 \Lambda \le \omega^2\le \lambda^{-1}_2 \Lambda, m\omega\not\in [0,\frac {am^2}{2Mr_+}] \}$,
\item
$\mathcal{G}_{\mbox{$\flat$}}=\{(\omega, m, \ell)$ :
$|\omega|\le \omega_1$, $\Lambda \le \lambda_1\}$.
\end{itemize}

\subsection{The $\mathcal{G}^{\mbox{$\sharp$}}$ range} 
This a large frequency super-radiant regime. 
The lower bound $\Lambda\ge m(m+1)$ together with the superradiant condition implies that
in this regime 
$$
\omega^2\le \left (\frac {a}{2Mr_+}\right)^2 \La
$$
and therefore $\Lambda$ is necessarily large. 
In the superradiant range the 
conclusions of Lemma \ref{lem:3} and Lemma \ref{lem:4} apply. In particular, the potential 
$V_0$ has its only critical point--a maximum--at $r^0_{max}$ and $\omega^2<V_0(r^0_{max})$. 
The latter in fact means that 
$$
V_0(r)-\omega^2\ge c \Lambda,\qquad \forall\, r\in [r_{max}-\delta,r_{max}+\delta]
$$
for some positive constant $c$ and $\delta>0$. Note that $\delta$ can be chosen to be 
universal and independent of the size of (large) $\Lambda$. We further observe that
$r_{max}^0$ is uniformly bounded and is away from the value of $r_+$:
$$
c\le (r_{max}-r_+)\le C
$$
with the constants $c, C$ independent of $\Lambda$. We also have that the full potential 
$V=V_0+V_1$ has the property that
$$
\frac {d}{dr} V(r_+)\ge \frac {d}{dr} V_0(r_+)\ge c \Lambda.
$$

We now argue that the properties of $V$ mimic precisely those of $V_0$ with the value 
$r_{max}^0$ replaced by $r_{max}$: 
$$
|r_{max}-r_{max}^0|\le c \Lambda^{-1}.
$$
In the regime under consideration, the second derivative 
$$
\frac {d}{dr} \left ((r^2+a^2)^3\frac {d}{dr} V_0(r)\right)
$$
may only change its sign once at a point $r_1<r_{max}^0$. If it happens then 
$$
 \frac {d}{dr} V_0(r)\ge c \Lambda,
$$
perhaps with a slightly different value of $c$ for all $r\in [r_+,r_1]$ and, since 
$$
\left|\frac {d}{dr} \left ((r^2+a^2)^3\frac {d}{dr} V_0(r)\right)\right|\le C\Lambda,
$$
we have that 
$$
(r_{max}^0-r_1)\ge c.
$$
On the interval $[r_1,\infty)$, we have
$$
\frac {d}{dr} \left ((r^2+a^2)^3\frac {d}{dr} V_0(r)\right)\le 0,
$$
and we can choose a value $r'_1\in (r_1,r_{max}^0)$ such that 
$$
 \frac {d}{dr} V_0(r)\ge \frac c2 \Lambda,\qquad \forall r\in [r_+,r_1']
$$
and 
$$
\frac {d}{dr} \left ((r^2+a^2)^3\frac {d}{dr} V_0(r)\right)\le -c \Lambda r^2,\qquad \forall r\in [r_1',\infty).
$$
On the other hand, the potential $V_1$ obeys 
$$
|V_1|\le C r^{-3},\qquad 
|\frac {d}{dr} V_1(r)|\le C r^{-4},\qquad |\frac {d}{dr} \left ((r^2+a^2)^3\frac {d}{dr} V_1(r)\right)|\le C r.
$$
Therefore, $V$ cannot not have any critical points on $[r_+,r_1']$ and has a unique maximum 
$r_{\max}\in [r_1',\infty)$. The conclusion about the distance between $r_{max}$ and $r_{max}^0$ 
is now equally straightforward. 

We may now deduce the existence of an interval $(r_{max}-\de,r_{max}+\de)$ with the 
property that
$$
V(r)-\omega^2\ge c\Lambda,\qquad \forall r\in (r_{max}-\de,r_{max}+\de)
$$
and 
\begin{equation}\label{eq:Vdeg}
-(r-r_{max}) \frac d{dr} V(r)\ge c(\de) \Lambda \frac {(r-r_{max})^2}{r^{4}}, \qquad \forall r\in [r_+,\infty).
\end{equation}

\vskip1pc
We now proceed to the construction of a suitable current for the regime 
$\mathcal{G}^{\mbox{$\sharp$}}$.
The current will be of the form:
$$
Q=Q_0^f + Q_1^h + \Gamma Q_{\text{red}}^z-BQ_T.
$$
To elucidate the exposition we describe the procedure in stages.

\vskip1pc
1.~We first apply current $Q^f_0$ with $f$ such that $f=-1$ at $r=r_+$, $f=0$ at $r=r_{max}$, 
$f=1$ at $r=\infty$ and $f'(r)>0$ for all $r$, giving
\begin{align}
\label{metastra}
\nonumber
\int_{r_+}^\infty \left [2f'|u'|^2-fV' |u|^2 -\frac 12 f{'''}|u|^2\right ]&=
\left [|u'|^2+(\omega-\omega_+)^2 |u|^2\right]_{r=r_+} + \left [|u'|^2+\omega^2 |u|^2\right]_{r=\infty}
\\ &- \int_{r_+}^\infty \left [2 f \,{\text{Re}} (u'\overline{H}) + f'\, {\text{Re}} (u\overline{H})\right].
\end{align}
We impose additional conditions on the function $f$ to guarantee that the expression 
$$
-fV'-\frac 12 f'''\ge 0
$$
for all values of $r\in[r_+,\infty)$. Since $f$ vanishes at $r=r_{max}$ and $V'$ obeys the
property \eqref{eq:Vdeg}, we can easily arrange 
$$
fV'\ge c \La \frac {\Delta (r-r_{max})^2}{r^7}.
$$
It remains to choose $f$ with the property that $f'''(r)<0$ in a small neighborhood of $r_{max}$ and
$|f'''(r)|\le C \Delta r^{-5}$.

The left hand side of $(\ref{metastra})$ is non-negative yet degenerate at $r=r_{max}$. The condition 
$V(r_{max})-\omega^2\ge c\La$ indicates this regime is non-trapped and the degeneracy 
may be removed with the help of the current $Q^h_1$. The problem however is a lack of 
control of the boundary terms on the right
hand side due to the  super-radiant condition. The latter will be handled with the help of the red-shift current
$Q_{\text{red}}^z$.
\vskip1pc
\vskip1pc
2.~We now add a large multiple of the $Q^h_1$ current with a non-negative function $h$ supported in 
$[r_{max}-\delta,r_{max}+\delta]$ (that is in the region where $V(r)>\omega^2$), equal to one on
 $[r_{max}-\delta/2,r_{max}+\delta/2]$:
\begin{align}
\label{metatastra2}
\nonumber
\int_{r_+}^\infty &\left [(2f'+Ah) |u'|^2+\left (Ah(V-\omega^2)-fV'\right) |u|^2 -\frac 12 (f{'''}+Ah'')|u|^2\right ]\\&=
\nonumber
\left [|u'|^2+(\omega-\omega_+)^2 |u|^2\right]_{r=r_+} + \left [|u'|^2+\omega^2 |u|^2\right]_{r=\infty}
\\ &- \int_{r_+}^\infty \left [2 f \,{\text{Re}} (u'\overline{H}) + (f'+Ah)\, {\text{Re}} (u\overline{H})\right].
\end{align}
The left hand side of $(\ref{metatastra2})$
is now positive and non-degenerate. Moreover, its integrand has the property 
that it is 
$$
\ge A (|u'|^2 + \Lambda |u|^2),\quad \forall\, r\in [r_{max}-\frac{\delta}2,r_{max}+\frac\delta 2].
$$
The size of constant $A$ is limited only by the lower bound on the (large) frequency parameter 
$\Lambda$.
\vskip1pc
3.~We now choose another large constant $\Gamma$ with the property that $\Gamma\ll A$ and add a
$\Gamma$ multiple of the red-shift current $Q^z_{\text{red}}$ with $z=\La \frac {\zeta}{\tilde V}$ and $\zeta=1$
on $[r_+,r_{max}]$ and vanishing for $r>r_{max}+\delta/2$:
\begin{align*}
\Gamma\int_{r_+}^{r_{max}} &\Lambda \frac {V'}{\tilde V^2}
|u'+i(\omega-\omega_+)u|^2 - \Gamma\int_{r_{max}}^{r_{max}+\delta/2}\Lambda 
\left [\left (\frac \zeta{\tilde V}\right)'|u'+i(\omega-\omega_+)u|^2 - \zeta'|u|^2\right] \\ +
\int_{r_+}^\infty &\left [(2f'+Ah) |u'|^2+\left (Ah(V-\omega^2)-fV'\right) |u|^2 -\frac 12 (f{'''}+Ah'')|u|^2\right ]\\&=
\left [|u'|^2+\left ((\omega-\omega_+)^2-\Gamma\Lambda\right) |u|^2\right]_{r=r_+} + \left [|u'|^2+\omega^2 |u|^2\right]_{r=\infty}
\\ &- \int_{r_+}^\infty \left [2 (f-\Lambda\frac\zeta{\tilde V}) \,{\text{Re}} (u'\overline{H}) + (f'+Ah)\, {\text{Re}} (u\overline{H})\right].
\end{align*}
The first term on the left hand side above is positive. To estimate the second term we note that we 
may assume that on
$[r_{max},r_{max}+\delta/2]$, the potential $\tilde V$ has a lower bound $c\Lambda$. Therefore,
its integrand can be estimated by
$$
\le \Gamma C \left [|u'|^2+ (\omega^2+m^2) |u|^2\right]\ll A\left [|u'|^2+ \Lambda |u|^2\right],
$$
where the replacement of $\omega^2+m^2$ by $\Lambda$ is justified by the superradiant condition.
Therefore the left hand side is positive definite. On the other hand, using the boundary conditions
\eqref{eq:b-}, the boundary term at $r=r_+$ 
on the right hand side is 
$$
\left (2(\omega-\omega_+)^2-\Gamma \Lambda\right) |u|^2\le -\frac{\Gamma}2\Lambda |u|^2
$$
and thus now has a ``good" sign and a large multiple constant. The other boundary term at $r=\infty$
still has the ``wrong" sign but it is merely
$$
2\omega^2 |u|^2
$$
and therefore can be eliminated by a subtraction of a $B$ multiple of the $Q_T$ current
$$
B \omega^2 |u|^2|_{r=\infty}= - B \omega(\omega-\omega_+) |u|^2|_{r=r_+}+
B\omega \int_{r_+}^\infty {\text{Im}} (u\overline H), 
$$
which will create a {\it bounded}
negative term at $r=r_+$. The latter in turn is easily dominated by $\frac{\Gamma}2\Lambda |u|^2$,
provided that $B\ll\Gamma$.
We can summarize the results obtained in this regime in the following inequality:
\begin{align*}
b\left (\int_{r_+}^\infty \frac{\Delta}{r^4}\right. &\left.\left [ |u'|^2+\frac {(\omega^2+\La)}{r} |u|^2\right]+
\int_{r_+}^{r_{max}}\frac {|u'+i(\omega-\omega_+) u|^2}{r-r_+} + (\omega^2+\La) \left [|u|^2|_{r=r_{max}}
+ |u|^2|_{r=r_\infty}\right]\right )\\ &= \int_{r_+}^\infty \left [-2 (f-\Lambda\frac\zeta{\tilde V}) \,{\text{Re}} (u'\overline{H}) - (f'+Ah)\, {\text{Re}} (u\overline{H})+B\omega {\text{Im}} (u\overline H)\right ].
\end{align*}
To end this section we observe  that according to Remark \ref{rem:freq} the frequency range $\mathcal{G}^{\mbox{$\sharp$}}$
can be slightly extended to 
$$
0\le m\omega\le \frac {am^2}{2Mr_+}+\a \Lambda
$$
for some small constant $\Lambda$. This observation, in particular, will allow us to restrict the range 
$\mathcal{G}_{\mbox{$\natural$}}$ to 
$$
m\omega\not\in[0, \frac {am^2}{2Mr_+}+\a \Lambda].
$$
\label{sub:sharp}

\subsection{The $\mathcal{G}_{\mbox{$\sharp$}}$ range} This is a large frequency 
non-superradiant regime in which $\omega$
dominates all the other frequencies, that is $\Lambda$ and consequently $m$. Since this a 
non-superradiant range, the boundary terms at $r=r_+$ and $r=\infty$ of all currents
are dominated by the (positive definite) boundary terms of the $Q_T$ current.\\

In this regime, we shall apply a current 
$$
Q=Q_0^f-Q_1^{h} - B Q_T
$$
with a monotonically increasing positive $f$
satisfying $1\le f\le 2$, and the function $h=f'$. 
We obtain 
\begin{align*}
\int_{r_+}^\infty &\left [f' |u'|^2+\left (f'(\omega^2-V)-fV'\right) |u|^2 \right ]\\&
-\left [|u'|^2-\left((\omega-\omega_+)^2+B\omega(\omega-\omega_+)\right) |u|^2\right]_{r=r_+} - \left [|u'|^2+(1-B)\omega^2 |u|^2\right]_{r=\infty}
\\ &- \int_{r_+}^\infty \left [2 f \,{\text{Re}} (u'\overline{H})+B\omega {\text{Im}} (u\overline H) \right].
\end{align*}
Given that $\omega^2\ge C\Lambda$ with a sufficiently large constant $C$ and 
$$
|V|\le C_1\Lambda + c_1 \omega^2,\quad |V'|\le \frac{\Delta}{r^5} 
(C_1\Lambda + c_1 \omega^2)
$$
for some {\it universal} constants (large  $C_1$ and small $c_1$, which can be chosen to be arbitrarily small depending
on th size of $C_1$ and a lower bound on $|\omega|\ge\omega_1$)
), we may choose $C$ so that 
the integrand on the left hand side is positive and dominates 
$$
c_1 f' \left [|u'|^2+(\omega^2+\La) |u|^2\right].
$$
The boundary terms are positive due to the boundary conditions $u'=-i\omega(\omega-\omega_+)$ 
at $r=r_+$ and $u'=i\omega$ at $r=\infty$, the non-superradiance condition, and a choice of a sufficiently
large constant $B$.
Therefore, with an appropriate choice of $f$
\begin{align*}
b \left (\int_{r_+}^\infty \frac{\Delta}{r^4}\right. &\left.\left [ |u'|^2+{(\omega^2+\La)} |u|^2\right]+
 (\omega^2+\La) \left [|u|^2|_{r=r_{max}}
+ |u|^2|_{r=r_\infty}\right ]\right )\\ & \le
\int_{r_+}^\infty \left [-2 f \,{\text{Re}} (u'\overline{H}) +B\omega {\text{Im}} (u\overline H)\right ].
\end{align*}

\subsection{The $\mathcal{G}_{\mbox{$\natural$}}$ range}
\label{thisissparta}
 This is a non-superradiant regime in which $\omega^2$ 
is comparable to $\Lambda$. This range corresponds to the trapping regime.
For these frequencies potential $V_0$ may have at most two critical points
$r^0_{min}<r^0_{max}$. 

According to the Remark \ref{rem:freq} and the discussion at the end of section \ref{sub:sharp}  we may in fact assume that $m\omega\not\in[0,\frac {am^2}{2Mr_+}+\alpha \La]$
for some small constant $\alpha$. Under this condition and since $\omega\sim \La$,
$$
\omega^2-V(r_+)=\omega^2-V_0(r_+)\ge c \Lambda.
$$
We define $r_0$ to be the largest value with the property that 
for all $r\in [r_+,r_0)$
$$
V_0(r)\le V_0(r_+)+\frac {c}2\La.
$$
If $r_0$ is finite then, since $r_{min}^0<r_{max}^0$, it is clear that if $r_{min}^0$ exists then
$$
r_{min}^0<r_0\le r_{\max}^0.
$$ 
Moreover, in that case, since $V_0(r)\to 0$ as $r\to \infty$, the potential $V_0$ does have 
a maximum point $r_{max}^0$ and the value $r_{max}^0$ is uniformly (in $\Lambda$) bounded
from above with a bound dependent on $\lambda_2$. On the other hand, since 
$|\frac d{dr} V_0(r)|\le C \La r^{-3}$, the value $r_0$, and thus $r_{max}^0$,
is uniformly bounded away from $r_+$. 

We now consider the value $r_1$, where the function
$$
\frac {d}{dr} \left [(r^2+a^2)^3 \frac d{dr} V_0(r)\right]=-6\Lambda\left [r^2-2Mr +4Mr\sigma+\frac{a^2}3-\frac 23 a^2 \frac{m^2}\Lambda\right],
$$
changes its sign from positive to negative. If $r_1$ belongs to the interval $[r_+,\infty)$,
then $r_1< r_{max}^0$.
Moreover, since in this regime the parameter $\sigma=am\omega/\Lambda$ is bounded, the value of $r_1$ is uniformly 
bounded from above. We consider two cases based on the value of $V_0(r_1)$. 

If $V_0(r_1)\le V_0(r_+)+\frac {3c}4\La$, then, in view of the fact that $V_0$ has a unique maximum at $r^0_{max}$,
we have that 
$$
V_0(r)\le V_0(r_+)+\frac {3c}4\La,\qquad \forall r\in [r_+,\max(r_0,r_1)].
$$
Moreover, we can find a small constant $\de>0$ such that the above property holds on a slightly larger interval
$$
V_0(r)\le V_0(r_+)+\frac {3c}4\La,\qquad \forall r\in [r_+,\max(r_0,r_1)+\de],
$$
and 
$$
\frac {d}{dr} \left [(r^2+a^2)^3 \frac d{dr} V_0(r)\right]<-c(\de)\La,\qquad\forall  r\in [\max(r_0,r_1)+\de,\infty).
$$
These conclusions are consequences of the simple bound $|\frac d{dr} V_0(r)|\le C \La r^{-3}$ and the fact  
that $\Lambda^{-1} \frac {d}{dr} \left [(r^2+a^2)^3 \frac d{dr} V_0(r)\right]$ is a quadratic polynomial with bounded coefficients
vanishing at the unique point $r_1$ on the interval $[r_+,\infty)$.

If, on the other hand, $V_0(r_1)\ge V_0(r_+)+\frac {3c}4\La$, then we have that $(r_1-r_0)$ is uniformly bounded 
from below by a small positive constant $c_1$, and since $\frac d{dr} V_0(r_0)\ge 0$, 
we can find a value 
$r_0'\in [r_0,r_1]$ such that 
$$
V_0(r)\le  V_0(r_+)+\frac {3c}4\La,\qquad \forall r\in [r_+,r_0']
$$ 
and 
$$
\frac d{dr} V_0(r)\ge c_1 \La,\qquad \forall r\in [r_0',r_1].
$$
Moreover, the last property can be easily extended to a slightly larger interval 
$$
\frac d{dr} V_0(r)\ge c_1 \La,\qquad \forall r\in [r_0',r_1+\de],
$$
so that, in addition 
$$
\frac {d}{dr} \left [(r^2+a^2)^3 \frac d{dr} V_0(r)\right]<-c(\de)\La,\qquad\forall  r\in [r_1+\de,\infty).
$$
In both cases, we may therefore claim the existence of the value $r_3$, uniformly  bounded away from $r_+$, such that 
$$
V_0(r)\le  V_0(r_+)+\frac {3c}4\La\le \omega^2-\frac c4 \La,\qquad \forall r\in [r_+,r_3]
$$
and, such that, for any $r\in [r_3,\infty)$, either 
$$
\frac d{dr} V_0(r)\ge c_1 \La
$$
or 
$$
\frac {d}{dr} \left [(r^2+a^2)^3 \frac d{dr} V_0(r)\right]<-c(\de)\La.
$$
We note that $r_3$ is either uniformly bounded from above or $r_3=\infty$.

Now adding a bounded potential $V_1$ we obtain that for $ r\in [r_+,r_3]$
$$
V(r)\le \omega^2-\frac c4 \La,
$$
while on the interval $[r_3,\infty)$, if non-empty, potential $V$ has a unique non-degenerate maximum at $r_{\max}$,  
$\Lambda^{-1}$-close to $r_{max}^0$, and 
$$
\frac {d^2}{dr^2} V(r_{max})<-c_1 \La,\qquad \frac {d}{dr} V(r_3)>c_1 \La.
$$

In this regime we use a combination 
$$
Q=Q^f_0-Q_2^{y}-B Q_T
$$
of the currents $Q_0, Q_2$ and $Q_T$. 
The current $Q^f_0$ is applied with a function $f$ which vanishes on the horizon, monotonically increases on $[r_3,\infty)$, changes
its sign from negative to positive at 
$r=r_{max}$ and approaches $1$ as $r\to\infty$, and satisfies the additional condition that 
$$
f'''(r)<0,\qquad \forall r\in [r_3,\infty)
$$ 
The second current will be
$Q^y_2$, restricted to  the interval $[r_+,r_3]$, with a monotonically decreasing function
$y$. Finally, we subtract a large multiple of the current $Q_t$.
We obtain for $Q=Q_0^f-Q_2^y-BQ_T$:
\begin{align*}
-\int_{r_+}^{r_3} \left[y'\left (|u'|^2+(\omega^2-V)|u|^2\right)-yV'|u|^2\right]&+ 
\int_{r_+}^\infty \left [2f' |u'|^2-(fV'+\frac 12 f''') |u|^2 \right ]\\&\hskip -5pc+
\left[-f|u'|^2+(B-f)\omega^2 |u|^2\right]_{r=\infty}+\left[B\omega(\omega-\omega_+)-2y(\omega-\omega_+)^2\right] |u|^2|_{r=r_+}
\\ &\hskip -5pc=- \int_{r_+}^\infty \left [2 f \,{\text{Re}} (u'\overline{H})-f' {\text{Re}} (u\overline{H})-B\omega {\text{Im}} (u\overline{H})  \right]+ \int_{r_+}^{r_3} 
\left [y \,{\text{Re}} (u'\overline{H}) \right].
\end{align*}
By the described properties of the potential $V$, the expression $-(fV'+\frac 12 f''')$ is positive on the interval $[r_3,\infty)$. 
On the interval 
$[r_+,r_3]$, we choose a function $y$ so that 
$$
-y'  (\omega^2-V)+y  V'-(fV'+\frac 12 f''')\ge 0.
$$
Since for these values of $r$
$$
(\omega^2-V)\ge c_1 \La,\qquad |V'|\le C\Lambda \frac {\Delta}{r^2}, \qquad |f|+|f'''|\le C\frac {\Delta}{r^2}
$$
for some positive constants $c_1, C$, it suffices to to fulfill the inequality 
$$
-\frac {d}{dr} y\ge y C + C, 
$$
provided that the constructed function $y$ is positive and monotonically decreasing.
The function
$$
y= -C e^{-Cr}  \int_{r}^{r_3}  e^{Cr'} dr'= 1-e^{C(r-r_3)}
$$
satisfies all the above criteria. Finally, a choice of a large constant positive constant $B$ together with 
the non super-radiant condition $m\omega\not\in [0,m\omega_+]$ and the boundary condition 
$u'=i\omega u$ at $r=\infty$ ensure that both boundary terms at $r=r_+$ and $r=\infty$ are positive.
 
 We summarise the results of this regime with the following inequality, which holds under appropriate 
 assumptions on $f$:
 \begin{align*}
b\left (\int_{r_+}^\infty \frac{\Delta}{r^4}\right. &\left. \left [ |u'|^2+ {(\omega^2+\La)} \frac {(r-r_{max})^2}{r^3} |u|^2\right]+
 (\omega^2+\La) \left [|u|^2|_{r=r_{max}}
+ |u|^2|_{r=r_\infty}\right]\right )\\ &\le \int_{r_+}^\infty \left [-2 f \,{\text{Re}} (u'\overline{H}) +f'{\text{Re}} (u\overline H)+B\omega {\text{Im}} (u\overline H)\right] + \int_{r_+}^{r_3} \left [y {\text{Re}} (u'\overline H)\right].
\end{align*}
 
\subsection{The $\mathcal{G}_{\mbox{$\lessflat$}}$ range} This is a non-superradiant 
regime in which the frequency $\omega^2$
is much smaller than $\Lambda$. In particular, by Remark~\ref{rem:freq},
the potential $V_0$ has only 
one critical point--a maximum--at the value $r=r^0_{max}$. Repeating the arguments which we applied to the 
regime $\mathcal{G}^{\mbox{$\sharp$}}$ we also conclude that the potential $V$ has a unique non-degenerate 
critical maximal point at $r_{max}$, which 
is uniformly bounded away from $r_+$ and is bounded from above. Similarly, we also have 
 an interval $(r_{max}-\de,r_{max}+\de)$ with the 
property that
$$
V(r)-\omega^2\ge c\Lambda,\qquad \forall r\in (r_{max}-\de,r_{max}+\de)
$$
and 
$$
(r-r_{max}) \frac d{dr} V(r)\ge c \Lambda \frac {(r-r_{max})^2}{r^4}, \qquad \forall r\in (r_+,\infty).
$$
We repeat the construction of $\mathcal{G}^{\mbox{$\sharp$}}$ with the current $Q=Q_0^f+AQ_1^h$ and the same choice of the
functions $f$ and $h$, which gives 
\begin{align*}
\int_{r_+}^\infty &\left [(2f'+Ah) |u'|^2+\left (Ah(V-\omega^2)-fV'\right) |u|^2 -\frac 12 (f{'''}+Ah'')|u|^2\right ]\\&=
\left [|u'|^2+(\omega-\omega_+)^2 |u|^2\right]_{r=r_+} + \left [|u'|^2+\omega^2 |u|^2\right]_{r=\infty}
\\ &- \int_{r_+}^\infty \left [2 f \,{\text{Re}} (u'\overline{H}) + (f'+Ah)\, {\text{Re}} (u\overline{H})\right],
\end{align*}
with a non-degenerate positive definite integrand on the left hand side. In this case however, due to 
the non super-radiance condition, we may simply subtract a large multiple of the current $Q_T$ to handle the
boundary terms 
\begin{align*}
\int_{r_+}^\infty &\left [(2f'+Ah) |u'|^2+\left (Ah(V-\omega^2)-fV'\right) |u|^2 -\frac 12 (f{'''}+Ah'')|u|^2\right ]\\&+
B\omega(\omega-\omega_+) |u|^2_{r=r_+}+B\omega^2|u|^2|_{r=\infty}\\&=
\left [|u'|^2+(\omega-\omega_+)^2 |u|^2\right]_{r=r_+} + \left [|u'|^2+\omega^2 |u|^2\right]_{r=\infty}
\\ &- \int_{r_+}^\infty \left [2 f \,{\text{Re}} (u'\overline{H}) + (f'+Ah)\, {\text{Re}} (u\overline{H})-B{\text{Im}} (u\overline H)\right].
\end{align*}
We summarise: 
\begin{align*}
b\left (\int_{r_+}^\infty \frac{\Delta}{r^4}\right. &\left.\left [ |u'|^2+\frac {(\omega^2+\La)}{r} |u|^2\right]+ (\omega^2+\La) \left [|u|^2|_{r=r_{max}}
+ |u|^2|_{r=r_\infty}\right]\right )\\ &\le  \int_{r_+}^\infty \left [-2 f \,{\text{Re}} (u'\overline{H}) - (f'+Ah)\, {\text{Re}} (u\overline{H})+B\omega {\text{Im}} (u\overline H)\right].
\end{align*}

\subsection{The $\mathcal{G}_{\mbox{$\flat$}}$ range} \label{whit}
This is a regime in which all frequencies $\omega$, $m$ 
and $\lambda_{m\ell}$ are bounded. Recall that in the small $a$ case,
a general construction for such frequencies was
presented in Section~\ref{elow} of this paper. As shown in~\cite{dr7}, this
construction also holds for general $a<M$ in the case $m=0$. Moreover,
it can be easily adapted to cover more generally non-superradiant modes
in this regime. To handle the remaining bounded superradiant frequencies in this
regime, the argument requires a modification of
Whiting's results on the absence of exponentially growing modes in
\cite{whiting}. 
The details will be given in \cite{drf2}.

We note that since the azimuthal frequency $m$ can be defined without the Fourier transform in 
time, we can define a priori the projection $P_m$  to the $m$'th azimuthal
mode. Clearly,  solutions of the wave equation such that
$P_m\psi=0$ for $1\le |m|\le m_0$ and large $m_0$ are  not
supported at all in the frequency range $\mathcal{G}_{\mbox{$\flat$}}$. 
Thus, for such solutions, this appeal to~\cite{whiting} is unnecessary and
we have given here a completely self-contained discussion of all constructions
necessary for the proof.

\subsection{Summation and the local integrated energy decay}
We have thus obtained the inequalities $(\ref{closedredu})$ 
for all frequencies, as required,
and we may thus apply Section~\ref{sum&comp} to obtain 

\begin{align}\label{eq:sum'}
b \int_0^\infty d\tau \int_{\Sigma_\tau\cap \{r\le r_+\}}  \left (\chi |\pa\psi|^2 +|\partial_r\psi|^2+ |\psi|^2\right)&\le
\int_{\Sigma_0} |\pa\psi|^2.
\end{align} 
The precise form of the estimates of Section~\ref{thisissparta} 
now reveals the nature of the physical-space degeneration 
function $\chi$: The function $\chi$ must vanish
in a neighborhood of the projection of the `trapped set'
$$
\{r: r\in \cap_{L=1}^\infty (\cup_{\ell\ge L} r^{a\omega}_{m\ell})\},
$$ 
where $r^{a\omega}_{m\ell}$ are the maximum values $r_{max}$ of the potential $V$ in the $\mathcal{G}_{\mbox{$\natural$}}$ range. 

Of course, we have really obtained a  stronger estimate than $(\ref{eq:sum'})$, namely,
we obtain the boundedness of the sum of the left hand sides of the 
versions of $(\ref{closedredu})$ 
obtained here, where each microlocalised $\chi$ degenerates
at most at a single point. The degeneration of this
stronger estimate however, as opposed to $(\ref{eq:sum'})$, can thus
only be understood microlocally.

The reader should return to Section~\ref{sum&comp} 
to obtain the remaining statements of Theorem \ref{thrm:state}.

\end{document}